\algnewcommand{\LeftComment}[1]{\Statex {\color{teal}\textbf{\(\triangleright\) #1}}} 
\newcommand{\nv}{\textsc{NewView}\xspace}
\newcommand{\fs}{\textsc{Forensic Storage}\xspace}
\newcommand{\detector}{\textsc{Detector}\xspace}
\newcommand{\vc}{\textsc{ViewChange}\xspace}
\newcommand{\vdashc}{View-Change\xspace}
\newcommand{\pp}{\text{Proposal-Promotion}\xspace}
\newcommand{\vone}{\textsc{Prepare}\xspace}
\newcommand{\vtwo}{\textsc{Vote2}\xspace}
\newcommand{\reply}{\textsc{Reply}\xspace}
\newcommand{\vthree}{\textsc{Vote3}\xspace}
\newcommand{\vfour}{\textsc{Vote4}\xspace}
\newcommand{\blame}{\textsc{Blame}\xspace}
\newcommand{\commit}{\textsc{Commit}\xspace}
\newcommand{\precommit}{\textsc{Precommit}\xspace}
\newcommand{\eqc}{e_{qc}\xspace}
\newcommand{\ppr}{\textsc{Pre-prepare}\xspace}
\newcommand{\prepare}{\textsc{Prepare}\xspace}
\newcommand{\qctwo}{\text{\it prepareQC}\xspace}
\newcommand{\qcthree}{\text{\it commitQC}\xspace}
\newcommand{\qcprecom}{\text{\it precommitQC}\xspace}
\newcommand{\qcfour}{\text{\it electQC}\xspace}
\newcommand{\hqc}{\text{\it highQC}\xspace}
\newcommand{\pflock}{\text{\it ledger}\xspace}
\newcommand{\info}{\text{\it Info}\xspace}
\newcommand{\hsa}{HotStuff-view\xspace}
\newcommand{\hsb}{HotStuff-hash\xspace}
\newcommand{\hsc}{HotStuff-null\xspace}
\newcommand{\conflictwithin}{\textsc{Request-Proof-within-View}\xspace}
\newcommand{\conflictacross}{\textsc{Request-Proof}\xspace}
\newcommand{\requestproofleader}{\textsc{Request-Proof-of-Leader}\xspace}
\newcommand{\proofacross}{\textsc{Proof-across-View}\xspace}
\newcommand{\thressign}{\ensuremath{aggregate\text{-}sign}}
\newcommand{\skipmsg}{\textsc{Skip}\xspace}
\newcommand{\skipshare}{\textsc{SkipShare}\xspace}
\newcommand{\coinshare}{\textsc{CoinShare}\xspace}
\newcommand{\done}{\textsc{Done}\xspace}
\newcommand{\threscoin}{\textit{threshold-coin}\xspace}
\newcommand{\thressignature}{\textit{threshold-signature}\xspace}
\newcommand{\coin}{coin}
\newcommand{\coinmsg}{\textsc{Proof-of-Leader}\xspace}
\newcommand{\lr}[1]{\langle #1 \rangle}
\newcommand{\signed}[1]{\langle #1 \rangle}
\newcommand{\gradedconsensus}{{\it Graded Consensus}\xspace}
\newcommand{\bba}{\ensuremath{BBA^*}\xspace}
\newcommand{\coinfixtozero}{Coin-Fixed-To-0\xspace}
\newcommand{\coinfixtoone}{Coin-Fixed-To-1\xspace}
\newcommand{\coingenuinelyflipped}{Coin-Genuinely-Flipped\xspace}
\newcommand{\kartik}[1]{{\color{magenta}
\footnotesize[Kartik: #1] }}
\newcommand{\gerui}[1]{{\color{red}
\footnotesize[Gerui: #1] }}
\newcommand{\myparagraph}[1]{\vspace{5pt}\noindent\textbf{#1}}
\newcommand{\ignore}[1]{}
\newcommand{\kartikold}[1]{}
  \providecommand\BibTeX{{%
    \normalfont B\kern-0.5em{\scshape i\kern-0.25em b}\kern-0.8em\TeX}}}
\begin{document}
\fancyhead{}
%%
%% The "title" command has an optional parameter,
%% allowing the author to define a "short title" to be used in page headers.
\title{BFT Protocol Forensics}

%%
%% The "author" command and its associated commands are used to define
%% the authors and their affiliations.
%% Of note is the shared affiliation of the first two authors, and the
%% "authornote" and "authornotemark" commands
%% used to denote shared contribution to the research.

\author{Peiyao Sheng}
\authornote{The first two authors contributed equally to this work.}
\email{psheng2@illinois.edu}
%\orcid{1234-5678-9012}
\affiliation{%
  \institution{University of Illinois at Urbana Champaign, USA}
  \country{}
}

\author{Gerui Wang}
\authornotemark[1]
\email{geruiw2@illinois.edu}
\affiliation{%
  \institution{University of Illinois at Urbana Champaign, USA}
  %\streetaddress{P.O. Box 1212}
  %\city{Dublin}
  %\state{Ohio}
  \country{}
  %\postcode{43017-6221}
}

\author{Kartik Nayak}
\email{kartik@cs.duke.edu}
\affiliation{%
  \institution{Duke University, USA}
  %\streetaddress{1 Th{\o}rv{\"a}ld Circle}
  %\city{Hekla}
  \country{}
  }

\author{Sreeram Kannan}
\email{ksreeram@ece.uw.edu}
\affiliation{%
  \institution{University of Washington, USA}
  %\city{Rocquencourt}
  \country{}
}

\author{Pramod Viswanath}
\email{pramodv@illinois.edu}
\affiliation{%
 \institution{University of Illinois at Urbana Champaign, USA}
 %\streetaddress{Rono-Hills}
 %\city{Doimukh}
 %\state{Arunachal Pradesh}
 \country{}
 }

%%
%% The abstract is a short summary of the work to be presented in the
%% article.
\begin{abstract}
Byzantine fault-tolerant (BFT) protocols allow a group of replicas to come to consensus even when some of the replicas are Byzantine faulty. There exist multiple BFT protocols to securely tolerate an optimal number of faults $t$ under different network settings. However, if the number of faults $f$ exceeds $t$ then security could be violated. In this paper we  mathematically formalize the study of  {\em forensic support} of BFT protocols: we aim to identify (with cryptographic integrity) as many of the malicious replicas as possible and in as distributed manner as possible. Our main result is that forensic support of BFT protocols depends heavily on minor implementation details that do not affect the protocol's security or complexity. Focusing on popular BFT protocols (PBFT, HotStuff, Algorand) we exactly characterize their forensic support, showing  that there exist minor variants of each protocol for which the forensic supports vary widely. %This result brings to focus a heretofore unexplored dimension, forensic support,  to  secure BFT protocol designs; this is in conjunction with the traditional performance metrics of latency and communication complexity. 
We show strong forensic support capability of LibraBFT, the consensus protocol of Diem cryptocurrency; our lightweight forensic module implemented on a Diem client is open-sourced  \cite{diemopensource} and is under active consideration for deployment in Diem.  Finally, we show that all secure BFT protocols designed for $2t+1$ replicas communicating over a synchronous network forensic support is inherently  nonexistent; this impossibility result holds for all BFT protocols and even if one has access to  the states of all replicas (including Byzantine ones). 
\end{abstract}

\begin{CCSXML}
<ccs2012>
<concept>
<concept_id>10002978.10003006.10003013</concept_id>
<concept_desc>Security and privacy~Distributed systems security</concept_desc>
<concept_significance>500</concept_significance>
</concept>
<concept>
<concept_id>10010520.10010575</concept_id>
<concept_desc>Computer systems organization~Dependable and fault-tolerant systems and networks</concept_desc>
<concept_significance>500</concept_significance>
</concept>
</ccs2012>
\end{CCSXML}

\ccsdesc[500]{Security and privacy~Distributed systems security}
\ccsdesc[500]{Computer systems organization~Dependable and fault-tolerant systems and networks}

%%
%% Keywords. The author(s) should pick words that accurately describe
%% the work being presented. Separate the keywords with commas.
\keywords{forensics; BFT protocols; blockchains}

\ignore{The one without comment/cite/etc, ready for copy and paste:
Byzantine fault-tolerant (BFT) protocols allow a group of replicas to come to consensus even when some of the replicas are Byzantine faulty. There exist multiple BFT protocols to securely tolerate an optimal number of faults $t$ under different network settings. However, if the number of faults $f$ exceeds $t$ then security could be violated. In this paper we  mathematically formalize the study of  {\em forensic support} of BFT protocols: we aim to identify (with cryptographic integrity) as many of the malicious replicas as possible and in as distributed manner as possible. Our main result is that forensic support of BFT protocols depends heavily on minor implementation details that do not affect the protocol's security or complexity. Focusing on popular BFT protocols (PBFT, HotStuff, Algorand) we exactly characterize their forensic support, showing that there exist minor variants of each protocol for which the forensic supports vary widely. We show strong forensic support capability of LibraBFT, the consensus protocol of Diem cryptocurrency; our lightweight forensic module implemented on a Diem client is open-sourced and is under active consideration for deployment in Diem. Finally, we show that all secure BFT protocols designed for $2t+1$ replicas communicating over a synchronous network forensic support is inherently nonexistent; this impossibility result holds for all BFT protocols and even if one has access to  the states of all replicas (including Byzantine ones). 
}

%%
%% The code below is generated by the tool at http://dl.acm.org/ccs.cfm.
%% Please copy and paste the code instead of the example below.
%%

\maketitle
\section{Introduction}

Byzantine Fault Tolerant (BFT) protocols, guaranteeing distributed consensus among parties that follow the protocol, are a corner stone of distributed system theory. In the relatively recent context of blockchains, BFT protocols have received renewed attention; new and efficient (state machine replication (SMR)) BFT protocols specifically designed for  blockchains  have been constructed (e.g., Algorand \cite{gilad2017algorand}, HotStuff \cite{yin2019hotstuff}, Streamlet \cite{chan2020streamlet}). The core theoretical security guarantee is that as long as a certain fraction of nodes are ``honest'', i.e., they follow the protocol, then these nodes achieve consensus with respect to (a time evolving) state machine regardless of the Byzantine actions of the remaining malicious nodes. When the malicious nodes are sufficiently numerous, e.g., strictly more than a $1/3$ fraction of nodes in a partially synchronous network, they can ``break security'', i.e., band together to create different views at the honest participants. 

In this paper, we are focused on ``the day after'' \cite{thedayafter}: events {\em after} malicious replicas have successfully mounted a security breach. Specifically, we focus on identifying which of the participating replicas acted maliciously; we refer to this action as ``forensics''. Successful BFT protocol forensics meets two goals:
\begin{itemize}
    \item identify as many of the nodes that acted maliciously as possible with an irrefutable cryptographic proof of culpability; 
    \item identification is conducted as distributedly as possible, e.g., by the individual nodes themselves, with no/limited communication between each other after the security breach.
\end{itemize}

\noindent {\bf Main contributions.} Our central finding is that the forensic possibilities crucially depend on minor implementation details of BFT protocols; the details themselves do not affect protocol  security or performance (latency and communication complexity); we demonstrate our findings  in the context of several popular BFT protocols for  Byzantine Agreement (BA). We present our  findings in the context of a mathematical  and systematic formulation of the  ``forensic support'' of BFT protocols. The forensic support of any BFT protocol is parameterized as a triplet  $(m,k,d)$, that represents the aforementioned  goals of forensic  analysis. The triplet $(m,k,d)$ along with traditional BFT protocol parameters of $(n,t,f)$ is summarized in Table~\ref{tab:notations}. We emphasize that each of the protocol variants is safe and live when $f \le t$ (here $n= 3t+1$  for all the protocols considered) but their forensic supports are quite different. % \sk{Highlight that all protocols are safe and live under the appropriate settings but their forensic support is quite different. Maybe by adding a security threshold column before forensic support? }

\begin{table}[t]
    \centering
    \begin{tabularx}{\columnwidth}{cX} \toprule
        Symbol & \makecell{Interpretation}\\
        \toprule
        $n$ & total number of replicas\\
        \midrule
        $t$ & maximum number of faults for obtaining agreement and termination \\
        \midrule
        $f$ & actual number of faults\\
        \midrule
        $m$ & maximum number of Byzantine replicas under which forensic support can be provided\\
        \midrule
        $k$ & the number of different honest replicas'  transcripts needed  to guarantee a proof of culpability \\
        \midrule
       $d$ & the number of Byzantine replicas that can be held culpable in case of an agreement violation\\\bottomrule
    \end{tabularx}
     \caption{Summary of notations.}
    \label{tab:notations}
\end{table}

\begin{table}[]
\centering
\begin{tabular}{c|c|ccc}
\toprule
\multirow{2}{*}{\textbf{Protocols}} & \multirow{2}{*}{\textbf{\begin{tabular}[c]{@{}c@{}}Forensic\\ Support\end{tabular}}} & \multicolumn{3}{c}{\textbf{Parameters}} \\ \cline{3-5} 
                                    &                                                                                      & $m$       & $k$              & $d$       \\ \toprule
PBFT-PK                             &                                                                                     &       &               &      \\ 
\hsa                                 & Strong                                                                                    & $2t$      & $1$              & $t+1$     \\ 
VABA                                &                                                                                     &       &               &      \\ \midrule
\hsb                                 & Medium                                                                                    & $2t$      & $t+1$     & $t+1$     \\ \midrule
PBFT-MAC                           &                                                                                     &         &              &    $0$     \\

\hsc                                 & None                                                                                    & $t+1$         &          $2t$      &  $ 1$        \\

Algorand                            &                                                                                     &         &               &  $ 0$       \\ \bottomrule
\end{tabular}
\caption{Summary of results; the forensic support values of $d$ are the largest possible and $n=3t+1$ here.}
\label{tab:results}
\end{table}

Security attacks on BFT protocols can be far more subtle than simply double voting or overt equivocation (when the culpable replicas are readily identified), and the forensic analysis is correspondingly subtle. This paper brings to fore a new dimension for secure BFT protocol design  beyond performance (low latency and low communication complexity): forensic support capabilities. 

 We analyze the forensic support of classical and state-of-the-art BFT protocols for Byzantine Agreement. Our main findings, all in the settings of safety violation (agreement violation), are the following, summarized in Table~\ref{tab:results}. 
 \begin{itemize}[leftmargin=*]
     \item {\bf Parameter $\mathbf{d}$.} We show that %\kartik{when there is a safety violation} 
     the number of culpable replicas $d$ that can be identified is either 0 %\peiyao{in a special case, we can identify the malicious leader}%\kartik{Peiyao: verify, can we not even identify a leader?} 
     %\pramod{Peiyao and I discussed the point regarding identifying at least leader: the point of culpability of at least 1 is meant to cover all possible instances. It could be possible to identify 1 bad actor in some attaacks, but we are stating worst case guarantees}
     or as large as $t+1$. In other words,  
     %\kartik{for the variants we analyzed,} \pramod{we prefaced this list by saying that analysis is for specific classical and state of the art BFT protocols}
     if at least one replica can be identified then we can also identify the largest possible number,  $t+1$ replicas; the only exception is for \hsc,  where in a successful safety attack, we can identity the culpability of one malicious replica.

     \item {\bf Parameter $\mathbf{m}$.} We show that the maximum number of Byzantine replicas $m$ allowed for nontrivial forensic support (i.e., $d >0$) cannot be more than $2t$.  %\peiyao{do we need to exiplicitly say what is nontrivial} %\kartik{with chained hotstuff, it can be higher.} \pramod{Chained HotStuff is not part of BA, so we are taking it out of this list and out of Table 2.} 
     Furthermore, any forensic support  feasible with $m$ is also feasible with $m$ being its largest value $2t$, i.e., if $(m,k,d)$ forensic support is feasible, then $(2t, k,  d)$ is also feasible.% \peiyao{still have this question: $(t+1,t+1,1)$ is feasible for \hsc, but $(2t, t+1, 1)$ is not feasible. For $d=t+1$, the above claim is correct}
 %    \gerui{can we say any nontrivial forensic support with $d=t+1$}
     %\kartik{did not understand this statement} \pramod{added an English sentence to go along}
     \item {\bf Parameter $\mathbf{k}$.} Clearly at least one replica's transcript is needed for forensic analysis, so $k=1$ is the least possible value. This suffices for several of the BFT protocol variants. However for \hsb, $k$ needs to be at least  %\kartik{typo?} \pramod{yes, thanks for catching this}
     $t+1$ for any nontrivial forensic analysis. %\peiyao{should we say at most as much as $t+1$?}\gerui{I agree at most.}%\gerui{This is our new definition of $k=$N/A in impossible result}\gerui{For PBFT-MAC, \hsc, and Algorand,  it is impossible to detect $d$ culpable replicas, thus $k$ is not applicable.} \pramod{N/A doesnt make sense for any reviewer. I understand why you are doing this, but I am removing it.}

     \item \textbf{Strong forensic support.} The first three items above imply that the strongest possible forensic support is $(2t, 1, t+1)$. Further, the BFT protocols in Table~\ref{tab:results} that achieve any nontrivial forensic support automatically achieve the strongest possible forensics (the only exception is \hsb, for which the forensic support we identified is the best possible).
     
     \item {\bf Impossibility.} 
     For certain variants of BFT protocols (PBFT-MAC, \hsc, and Algorand), even with transcripts from all honest replicas, non-trivial forensics is simply not possible, i.e., $d=0$  even if $m$ is set to its smallest and $k$  set to its largest possible values ($t+1$ and $2t$ respectively);  again, \hsc allows the culpability of a single  malicious replica. %\kartik{typo?} %variants.  e.g.,  conflicting proofs in \hsc and silent (covert) attack in Algorand. 
     
     \item {\bf Practical impact.} Forensic support is of immediate interest to practical blockchain systems; we conduct a forensic support analysis of LibraBFT, the consensus protocol in the new cryptocurrency {\sf Diem}, and show  in-built strong forensic support. We have implemented the corresponding forensic analysis algorithm inside of a {\sf Diem} client and built an associated forensics dashboard; our reference implementation is available open-source \cite{diemopensource} and is under active consideration for deployment in {\sf Diem}.
     
      \item \textbf{BFT with $n=2t+1$.} For a secure BFT protocol operating on a synchronous network, the ideal setting is $n = 2t+1$. For {\em every} such protocol we show that at most one culpable replica can be identified (i.e., $d$ is at most 1) even if we have access to the state of all honest nodes, i.e, $k= t$. % \kartik{typo?}
     
     \end{itemize}

   \myparagraph{Outline.} We describe our results in the context of related work in \S\ref{sec:related}; to the best of our knowledge, this is the first paper to systematically study BFT protocol forensics. The formal forensic support problem statement and security model is in \S\ref{sec:model}. The forensic support of PBFT, HotStuff and Algorand (and variants) are explored in \S\ref{sec:pbft}, \S\ref{sec:basic-hotstuff}, \S\ref{sec:algorand}, respectively. 
   Our forensic study of LibraBFT and implementation of the corresponding forensic protocol is the focus of \S\ref{sec:impl}. In appendix \S\ref{sec:vaba}, we present the forensic support analysis for VABA, a state-of-the-art  efficient BFT for   asynchronous network conditions.  The impossibility of forensic support for all BFT protocols operating in the classical $n=2t+1$ synchronous network setting is shown in \S\ref{sec:lb}. Our choice of the 5  protocols studied here (PBFT, HotStuff, Algorand, LibraBFT, VABA) is made with the goal of  covering a variety of settings: (a) partially synchronous vs   asynchronous; (b) authenticated vs non-authenticated; (c) player replaceable vs irreplaceable;  (d) chained version vs single-shot version; (e) variants that communicate differing amounts of auxiliary information. Stitching the results across the 5 different protocols into a coherent theory of forensic support of abstract BFT protocols is an exciting direction of future work; this is discussed in \S\ref{sec:conclusion}.

\section{Related Work}
\label{sec:related}
\myparagraph{BFT protocols.} %The notion of Byzantine fault tolerance was first proposed in \cite{lamport1982byzantine} as Byzantine General Problem.
PBFT~\cite{castro1999practical,castro2002practical} is the first practical BFT SMR protocol in the partially synchronous setting, with quadratic communication complexity of view change. %Recently, new BFT protocols are pushing the state-of-the-art further. 
HotStuff~\cite{yin2019hotstuff} is the first partially synchronous SMR protocol that enjoys both a linear communication of view change and optimistic responsiveness. Streamlet~\cite{chan2020streamlet} is another SMR protocol known for its simplicity and textbook construction. Asynchronous Byzantine agreement  is solved by \cite{abraham2019asymptotically} with asymptotically optimal communication complexity and round number. Synchronous protocols such as \cite{abraham2019sync,abraham2020optimal} aim at optimal latency.  %Flexible BFT~\cite{malkhi2019flexible} discovers the case where clients have different beliefs in the number of Byzantine replicas and can act to confirm accordingly. \gerui{move flexible bft to next para?} 
Algorand~\cite{gilad2017algorand,chen2019algorand} designs a committee self-selection mechanism, and the Byzantine agreement protocol run by the committee decides the output for all replicas.

%\gerui{I added this paragraph.} \pramod{reads well. good work!}
\myparagraph{Beyond one-third faults.} The seminal work of   \cite{dwork1988consensus} shows that it is impossible to solve BA when the adversary corrupts one-third replicas for partially  synchronous communication (the same bound holds for SMR). In BFT2F~\cite{li2007beyond}, a weaker notion of safety is defined, and a protocol is proposed such that when the adversary corrupts more than one-third replicas, the weaker notion of safety remains secure whereas the original safety might be violated. However, the weaker notion of safety does not protect the system against common attacks, e.g., double-spending attack in distributed payment systems. Flexible BFT~\cite{malkhi2019flexible} considers the case where clients have different beliefs in the number of faulty replicas and can act to confirm accordingly. Its protocol works when the sum of Byzantine faults and alive-but-corrupt faults, a newly defined type of faults, are beyond one-third. Two recent works~\cite{xiang2021strengthened,kane2021highway} propose BFT SMR protocols that can tolerate more than one-third Byzantine faults after some specific optimistic period. The goal of these works is to mask the effects of faults, even when they are beyond one-third, quite  different from the goals of forensic analysis. 

%\gerui{In the follow section also talk breifly about liveness violation and crash fault detectors? As a review pointed out.} \sk{I think we should talk about liveness violation and say that since we impose irrefutable proof, we cannot provide it for liveness violation. Obtaining other forms of proofs should be pointed out in the discussion section. }

\myparagraph{Distributed system forensics.} Accountability has been discussed in seminal works~\cite{haeberlen2007peerreview,haeberlen2009fault} for distributed systems in general. In the lens of BFT consensus protocols,  accountability is defined as the ability for a replica to prove the culpability of a certain number of Byzantine replicas in \cite{civit2019polygraph}. Polygraph, a new BFT protocol with high communication complexity $O(n^4)$, is shown to attain this property in \cite{civit2019polygraph}. Reference  \cite{ranchal2020blockchain} extends Polygraph to an SMR protocol, and devises a finality layer to ``merge'' the disagreement. %and finalize to an agreement.
Finality and accountability are also discussed in other recent works; examples include Casper~\cite{casper}, GRANDPA~\cite{stewart2020grandpa}, and Ebb-and-flow~\cite{neu2020ebb}. Casper and GRANDPA identify accountability as a central problem and design their consensus protocols around this goal. Ebb-and-flow \cite{neu2020ebb} observes that accountability is immediate in BFT protocols for  safety violations via equivocating votes; however, as pointed out in  \cite{civit2019polygraph}, safety violations can happen during the view change process and this is the step where accountability is far more subtle.     %In these protocols, a BFT finality layer is put on top of an existing SMR (a BFT SMR or a nakamoto blockchain SMR) to ensure its safety. In case any safety violation happens, the protocol outputs the replicas who are accountable.

%\ignore{A subtle difference in the definition of forensic support in our paper  and accountability in \cite{civit2019polygraph} is our external perspective. Forensic support is defined in an external client's perspective, who would execute forensic protocol and return the Byzantine identities. Since a replica can also execute this protocol as a client, forensic support is a more general definition than accountability.}

Reference \cite{civit2019polygraph} argues that PBFT is not accountable and cannot be modified to be accountable without significant change/cost. We point out that the definition of accountability in \cite{civit2019polygraph} is rather narrow:   two replicas with differing views have  to themselves be able to identify culpability of malicious replicas.  On the other hand, in forensic support, we study the number of honest replicas (not necessarily restricted to the specific two replicas which have identified a security breach) that can identify the culpable malicious replicas. Thus the definition of forensic support   is more flexible than the one on accountability. Moreover, our work shows that we can achieve forensic support for protocols such as PBFT and HotStuff without incurring additional communication complexity other than (i) sending a proof of culpability to the client in case of a safety violation, and (ii) the need to use aggregate signatures instead of threshold signatures. %In comparison, Polygraph achieves this with a high communication complexity of $O(n^4)$. 

\section{Problem Statement and Model}
\label{sec:model}
The goal of state machine replication (SMR) is to build a replicated service that takes requests from clients and provides the clients with the interface of a single non-faulty server, i.e., each client receives the same totally ordered sequence of values. To achieve this, the replicated service uses multiple servers, also called replicas, some of which may be Byzantine, where a faulty replica can behave arbitrarily. A secure  state machine replication protocol  satisfies two guarantees.
    %
%\peiyao{We may need a more elaborate definition of BFT SMR -- a reviewer's comment}
%\begin{itemize}[leftmargin=*, topsep=0pt]
%    \item[-] 
    \textbf{Safety}: Any two honest replicas cannot output different sequences of values.
 %   \item[-] 
 \textbf{Liveness}: A value sent by a client will eventually be output by honest replicas. % Honest replicas keep outputting non-duplicated valid values.
%\end{itemize}

SMR setting also has external validity, i.e., replicas only output non-duplicated values sent by clients. These values are eventually learned by the clients. Depending on the context, a replica may be interested in learning about outputs too. Hence, whenever we refer to a client for learning purposes, it can be an external entity or a server replica. %We use $n$ to denote the total number of replicas and consider protocols that achieve safety and liveness when the actual number of Byzantine faults $f$ is less than or equal to an upper bound $t$. 
A table of notations is in Table~\ref{tab:notations}.  

In this paper, for simplicity, we focus on the setting of  outputting a {\em single value} instead of a sequence of values. The safety and liveness properties of SMR can then be expressed using the following definition:
\begin{definition}[Validated Byzantine Agreement]
\label{defn:vba}
A protocol solves validated Byzantine agreement among $n$ replicas tolerating a maximum of $t$ faults, if it satisfies the following properties:
\begin{enumerate}[leftmargin=*, topsep=0pt]
    \item[] (Agreement) Any two honest replicas output values $v$ and $v'$, then $v = v'$.
    \item[] (Validity)  If an honest replica outputs $v$, $v$ is an externally valid value, i.e., $v$ is signed by a client. %\kartik{something does not seem right here. How does this get around a simple protocol where we always output $v_\bot$?} \peiyao{I think for single-shot VBA we do not need to consider a sequence of outputs. If all honest nodes receive $v_\bot$, they should output $v_\bot$.(this case is for Algorand)}
    \item[] (Termination) All honest replicas start with externally valid values, and all messages sent among them have been delivered, then honest replicas will output a value.
\end{enumerate}
\end{definition}
% \gerui{Hereafter we use agreement violation and safety violation interchangeably. (We never talk about validity violation since it's trivial to avoid.)}

\myparagraph{Forensic support.} 
Traditionally, consensus protocols provide guarantees only when $f \leq t$. When $f > t$, there can be a safety or liveness violation; this is the setting of study throughout this paper. Our goal is to provide \emph{forensic support} whenever there is a safety violation (or agreement violation) and the number of Byzantine replicas in the system are not too high. In particular, if the actual number of Byzantine faults are bounded by $m$ (for some $m > t$) and there is a safety violation, we can detect $d$ Byzantine replicas using a \emph{forensic protocol}. The protocol takes as input, the transcripts of honest parties, and outputs and \emph{irrefutable} proof of $d$ culprits. With the irrefutable proof, any party (not necessarily in the BFT system) can be convinced of the culprits' identities \emph{without} any assumption on the number of honest replicas. However, if even with transcripts from all honest replicas, no forensic protocol can output such a proof, the consensus protocol has no forensic support (denoted as ``None'' in Table~\ref{tab:results}). Note that when we say a protocol has no forensic support, we are referring to an impossibility w.r.t.\ providing irrefutable proof for $d$ culprits (more precisely, in the context of Definition~\ref{defn:forensic-support}). In a general sense, there are other ways to provide forensics related to hardware, software, and network except for non-repudiation in the protocol.

% \gerui{we can detect $d$ Byzantine replicas by a \emph{forensic protocol} that takes a set of transcripts as input and output an \emph{irrefutable} proof of $d$ culprits with soundness 1. With the irrefutable proof, any party (not necessarily in the BFT system) can be convinced of the culprits' identities \emph{without} any assumption on the number of honest replicas. } 
% TODO \gerui{remove below, and redefine k}
% \gerui{However, if even with transcripts from all honest replicas no forensic protocol can output such a proof, the consensus protocol has no forensic support and we denote it as None in Table~\ref{tab:results}.}\peiyao{Since we can not expect to get transcripts from the Byzantine replicas.} \Sk{irrefutable proof should hold *without* any assumption on the number of honest nodes.}

To provide forensic support, we consider a setting where a client observes the existence of outputs for two conflicting (unequal)  
values.\footnote{We assume all the (honest) replica outputs are eventually learned by the client. In practice, the client may monitor the outputs by periodically communicating with all replicas.} By running a forensic protocol, the client sends (possibly a subset of) these conflicting outputs to all replicas and waits for their replies. Some of these replicas may be ``witnesses'' and may have (partial) information required to construct the irrefutable proof. After receiving responses from the replicas, the client constructs the proof. We denote by $k$ the total number of transcripts from different honest replicas that are stored by the client to construct the proof.  %\peiyao{ For any reply the client receives, it can choose whether to store the transcript or ignore it. If the transcript sent by some replica has been stored by the client (even if it is finally discarded), the replica will be counted in $k$. Moreover, some (even all) of the replicas in $k$ can be Byzantine and they can send arbitrary transcripts (but can not forge others signed messages). Byzantine replicas can also keep silent in the forensic phase. For forensic support, with the transcripts received by $k$ replicas, the client can eventually obtain an irrefutable proof for the culpability of $d$ Byzantine replicas.
\begin{definition}
\label{defn:forensic-support}
$(\mathbf{m},\mathbf{k},\mathbf{d})$-{\bf Forensic Support}.  If $t < f \leq m$ and two honest replicas output conflicting values, then using the transcripts of all messages received from $k$ honest replicas during the protocol, a client can provide an irrefutable proof of culpability of at least $d$ Byzantine replicas.
\end{definition}
% \kartik{deleted a long comment here}%\gerui{k means kmax, so do we say at most $k$?} \Sk{the quantifier on $k$ is not clear.  There seem to be three things: (1) for any execution, it is $ \leq k$ (that is why Gerui is saying $k_{max}$), (2) but it has to be a specific $k$ nodes (3) who knows who are those k nodes. all three attributes need to be specified.Is this accurate: "then, in any execution, there exist $k$ nodes (who can self-identify based on the conflicting values) such that the transcripts..." } \gerui{exist k doesn't fit, since \hsb exist a set of 2 nodes rather than t+1}

%\begin{itemize}
%    \item Does the def need to specify ``with the transcripts'' means ``the transcripts have been stored''?
%    \item discuss what m means
 %   \item discuss d is either 0 or t+1
 %   \item discuss k means the distributed level of collaboration
%\end{itemize}

% With the irrefutable proof, any party (not necessarily in the SMR system) can be convinced of the culprits' identities without extra communication with the SMR system. }Note that forensic support is an additional property provided to SMR/validated Byzantine agreement protocols, and hence, the notations $f$ and $t$ refer to the notations in those definitions. In addition to an external client, a replica can also request for forensic support.

\myparagraph{Other assumptions.}
We consider forensic support for multiple protocols each with their own network assumptions. For PBFT and HotStuff, we assume a partially synchronous network~\cite{dwork1988consensus}. For VABA~\cite{abraham2019asymptotically} and Algorand~\cite{chen2019algorand}, we suppose  asynchronous and synchronous networks respectively.
%\peiyao{For different protocols, we assume different network settings. For PBFT and HotStuff, we assume a partially synchronous network where there is an unknown time bound GST until which messages sent between replicas are only eventually delivered. After GST, all messages will arrive within a bounded delay $\Delta$ after they are sent. For VABA, we assume an asynchronous network with unbounded delivery time. For Algorand, we assume a synchronous settings where the bound of message delivery time is known.}

We assume all messages are digitally signed except for one variant of PBFT (\S\ref{sec:pbft-mac}) that sometimes relies on the use of Message Authenticated Codes (MACs). Some protocols, e.g., HotStuff, VABA, use threshold signatures. For forensic purposes, we assume  multi-signatures~\cite{bls} instead (possibly worsening the communication complexity in the process). Whenever the number of signatures exceeds a threshold, the resulting aggregate signature is denoted by a pair $\sigma = (\sigma_{agg}, \epsilon)$, where $\epsilon$ %\gerui{a review say $b_0,..b_{n-1}$ and replica identity is not clear. I suggest we remove $[b_0,\cdots,b_{n-1}]$, and just say $\epsilon$ is a bitmap.} 
is a bitmap indicating whose signatures are included in $\sigma_{agg}$. 
We define the intersection of two aggregated messages to be the set of replicas who sign both messages, i.e., $\sigma \cap \sigma' := \{i|\sigma.\epsilon[i]\land \sigma'.\epsilon[i] = 1\}$. An aggregate signature serves as a quorum certificate (QC) in our protocols, and we will use the two terms interchangeably. We assume a collision resistant cryptographic hash function. % \gerui{We assume we have a preimage resistant, collision resistant cryptographic hash function~\cite{menezes2018handbook}.}
\section{Forensic Support for PBFT}\label{sec:pbft}

PBFT is a classical partially synchronous consensus protocol that provides an optimal resilience of $t$ Byzantine faults out of $n = 3t+1$ replicas. However, when the actual number of faults $f$ exceeds $t$, it does not provide any safety or liveness. In this section, we  show that when $f > t$ and in case of a safety violation,  the variant of the PBFT protocol (referred to as PBFT-PK) where all messages sent by parties are signed, has the \emph{strongest forensic support}. Further, we  show that for an alternative variant where parties sometimes only use MACs (referred to as PBFT-MAC), forensic support is impossible.

\subsection{Overview} 
\label{subsec:pbft-overview}
We start with an overview  focusing on  a single-shot version of PBFT, i.e., a protocol for consensus on a single value. The protocol described here uses digital signatures to authenticate all messages and routes messages through leaders as shown in \cite{ramasamy2005parsimonious}; however we note that our arguments  for PBFT-PK also apply to the original protocol in \cite{castro1999practical}. % \kartik{when signatures are used, our analysis should hold for original PBFT too, right?}\peiyao{Yes, if all messages are signed} %(which is different from the original protocol in \cite{castro1999practical}).}\gerui{The protocol described below is the variant adapted from Ramasamy and Cachin~\cite{ramasamy2005parsimonious} and is different from the original protocol~\cite{castro1999practical,castro2002practical}. Comparing with the original one, it uses digital signatures to authenticate all messages and routes messages through leaders. The original PBFT is discussed in section~\ref{sec:pbft-mac}.} 

The protocol proceeds in a sequence of consecutive views denoted as view number $e = 1, 2, \cdots$. A higher view is a view with a larger view number. Each view has a unique leader. Each view of PBFT progresses as follows:

\begin{itemize}[leftmargin=*, topsep=0pt]
    \item[-] \textbf{\ppr.} The leader proposes a \nv message containing a proposal $v$ and a status certificate $M$ (explained later) to all replicas.
    \item[-] \textbf{\vone.} On receiving the first \nv message containing a valid value $v$ in a view $e$, a replica sends \vone for $v$ if it is \emph{safe} to vote based on a locking mechanism (explained later). It sends this vote to the leader. The leader collects $2t+1$ such votes to form an aggregate signature \qctwo. The leader sends \qctwo to all replicas.
    \item[-] \textbf{\commit.} On receiving a \qctwo in view $e$ containing message $v$, a replica locks on $(v, e)$ and sends \commit to the leader. The leader collects $2t+1$ such votes to form an aggregate signature \qcthree. The leader sends \qcthree to all replicas.
    \item[-] \textbf{\reply.} On receiving \qcthree from the leader, replicas output $v$ and send a \reply (along with \qcthree) to the client.
\end{itemize}

Once a replica locks on a value $v$ in view $e$, we call $(v, e)$ is the current lock of this replica. And a higher lock is a lock formed in a higher view. With lock $(v,e)$, the replica only votes for the value $v$ in subsequent views. The only scenario in which it votes for a value $v' \neq v$ is when the status certificate $M$ provides sufficient information stating that $2t+1$ replicas are not locked on $v$. At the end of a view, every replica sends its lock to the leader of the next view. The next view leader collects $2t+1$ such values as a status certificate $M$.

The safety of PBFT follows from two key quorum intersection arguments:

\myparagraph{Uniqueness within a view.} Within a view, safety is ensured by votes in either round. Since a replica only votes once for the first valid value it receives, by a quorum intersection argument, two conflicting values cannot both obtain \qcthree when $f \leq t$. 

\myparagraph{Safety across views.} Safety across views is ensured by the use of locks and the status certificate. First, observe that if a replica $r$ outputs a value $v$ in view $e$, then a quorum of replicas lock on $(v, e)$. When $f \le t$, this quorum includes a set $H$ of at least $t+1$ honest replicas. For any replica in $H$ to update to a higher lock, they need a \qctwo in a higher view $e' > e$, which in turn requires a vote from at least one of these honest replicas in view $e'$. 
However, replicas in $H$ will vote for a conflicting value $v'$ in a higher view only if it is accompanied by a status certificate $M$ containing $2t+1$ locks that are not on value $v$. When $f \le t$, honest replicas in $M$ intersect with honest replicas in $H$  at least one replica -- this honest replica will not vote for a conflicting value $v'$.

%\peiyao{\myparagraph{Variants of PBFT} In terms of the methods to authenticate messages, there are two versions of PBFT. The first variant is based on \cite{castro1999practical} and described above (called PBFT-PK), where leaders route all messages in \prepare and \commit phases. It uses public key to verify signatures of all messages. The second variant is based on \cite{castro2002practical} (called PBFT-MAC) and uses all to all communication in \prepare and \commit phases, it utilizes shared secret to establish authenticated communications between any two replicas, .} \kartik{can we delete this paragraph? does the content at start of the section suffice?} \peiyao{yes, let me move the variant name to the start of the section too.}

\subsection{Forensic Analysis for PBFT-PK}

The agreement property for PBFT holds only when $f \leq t$. When the number of faults are higher, this agreement property (and even termination) can be violated. In this section, we show how to provide forensic support for PBFT when the agreement property is violated. We show that, if two honest replicas output conflicting values $v$ and $v'$ due to the presence of $t<f \le m$ Byzantine replicas, our forensic protocol can detect $t+1$ Byzantine replicas with an irrefutable proof. For each of the possible scenarios in which safety can be violated, the proof shows exactly what property of PBFT was not respected by the Byzantine replicas. The irrefutable proof explicitly uses messages signed by the Byzantine parties, and is thus only applicable to the variant PBFT-PK where all messages are signed. % \peiyao{And since the irrefutable proof should be able to convince a third party, it is supposed to be publicly verifiable. We will discuss how PBFT-PK finds such proof and how PBFT-MAC fails to do the same thing.} 

\myparagraph{Intuition.} In order to build intuition, let us assume $n = 3t+1$ and $f = t+1$ and start with a simple scenario: two honest replicas output values $v$ and $v'$ in the same view. It must then be the case that a \qcthree is formed for both $v$ and $v'$. Due to a quorum intersection argument, it must be the case that all replicas in the intersection have voted for two conflicting values to break the uniqueness property. Thus, all the replicas in the intersection are culpable. For PBFT-PK, the \qcthree (as well as \qctwo) for the two conflicting values act as the irrefutable proof for detecting $t+1$ Byzantine replicas.

\begin{figure}[htbp]
    \centering
    %\vspace{-20pt}
    \includegraphics[width=\columnwidth]{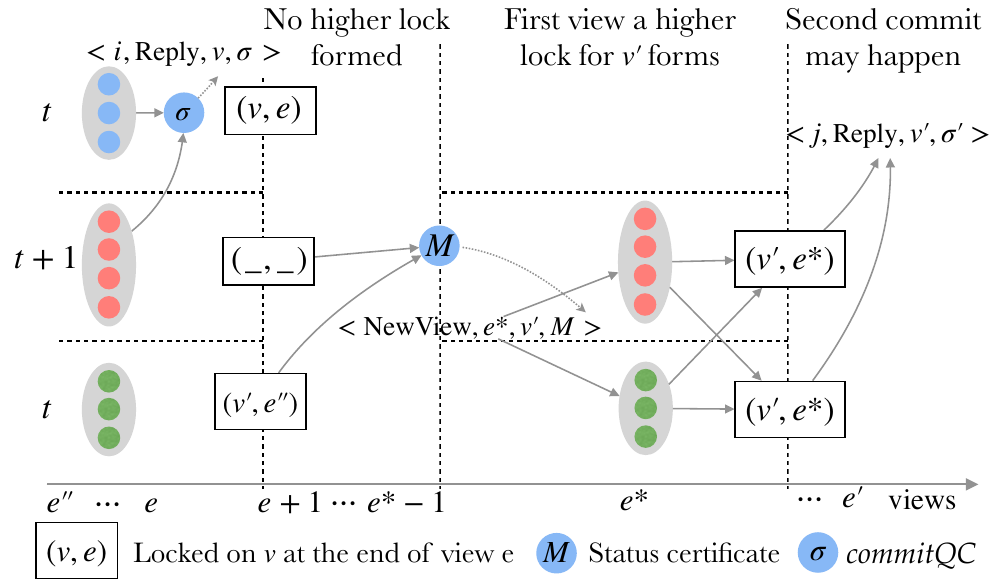}
    \caption{\textbf{An example sequence of events in the PBFT-PK protocol that leads to replicas $i$ and $j$ outputting different values. }}
    \label{fig:pbft-disagree}
   % \vspace{-20pt}
\end{figure}

When two honest replicas output conflicting values in different views, there are many different sequences of events that could lead to such a disagreement. One such sequence is described in Figure~\ref{fig:pbft-disagree}. The replicas are split into three sets: the blue set and the green set are honest replicas each of size $t$  and the red replicas are Byzantine replicas of size $t+1$. 

\begin{itemize}[leftmargin=*, topsep=0pt]
    \item In view $e$, replica $i$ outputs $v$ due to $\qcthree$ formed with the $\commit$ from the honest blue set and the Byzantine red set. At the end of the view, replicas in the blue and red set hold locks $(v, e)$ whereas the green set holds a lower lock for a different value. 
    \item In the next few views, no higher locks are formed. Thus, the blue and the red set still hold locks $(v, e)$.
    \item Suppose $e^*$ is the first view where a higher lock is formed. At the start of this view, the leader receives locks from the honest green set who hold lower-ranked locks and the Byzantine red set who maliciously send lower-ranked locks. The set of locks received by the leader is denoted by $M$ and suppose the highest lock was received  for $v'$. The leader proposes $v'$ along with $M$. This can make any honest replica ``unlock'' and vote for $v'$ and form quorum certificates in this view.
    \item In some later view $e'$, replica $j$ outputs $v'$.
\end{itemize}

With this sequence of events, consider the following questions: (1) Is this an admissible sequence of events? (2) How do we find the culpable Byzantine replicas? What does the irrefutable proof consist of? (3) How many replica transcripts do we need to construct the proof?

To answer the first question, the only nontrivial part in our example is the existence of a view $e^*$ where a higher lock is formed. However, such a view $e < e^* \leq e'$ must exist because replica $j$ outputs in view $e'$ and a higher lock must be formed at the latest in view $e'$. 

For the second question, observe that both the red replicas as well as the green replicas sent locks lower than $(v, e)$ to the leader in $e^*$. However, only the red replicas also sent \commit messages for value $v$ in view $e$. Thus, by intersecting the set of \commit messages for value $v$ in view $e$ and the messages forming the status certificate sent to the leader of $e^*$, we can obtain a culpable set of $t+1$ Byzantine replicas. So the proof for PBFT-PK consists of the \qcthree in $e$ and the status certificate in $e^*$, which indicates that the replicas sent a lower lock in view $e^*$ despite having access to a higher lock in a lower view $e$. 

For the third question, the \nv message containing the status certificate $M$ in view $e^*$ can act as the proof, so only one transcript needs to be stored. 

\begin{algorithm}[htbp]
\begin{algorithmic}[1]
\As{a replica running PBFT-PK}
    \State $Q\gets $ all \nv messages in transcript\label{alg:pbft-3}
    \Upon{$\lr{\conflictacross, e, v, e'}$ from a client}
        \For{$m \in Q$}
            \State $(v'',e'')\gets$ the highest lock in $m.M$
            \If{$(m.e \in(e, e'])\land(v''\neq v)\land(e'' \le e)$} 
                \State send $\lr{\nv, m}$ to the client\label{alg:pbft-13}
            \EndIf
        \EndFor
    \EndUpon
\EndAs{}

\As{a client}
    \Upon{two conflicting \reply messages}\label{alg:pbft-18}
        \If{the two messages are from different views}
            \State $\lr{\reply,e,v,\sigma}\gets$ the message from lower view
            \State $e'\gets$ the view number of \reply from higher view
            \State broadcast $\lr{\conflictacross, e, v, e'}$ \label{alg:pbft-21}
            \State wait for: $\lr{\nv, m}$ s.t. $m.e\in(e,e']\land(v''\neq v)\land(e'' \le e)$ where $(v'',e'')$ is the highest lock in $m.M$. \label{alg:pbft-message-com-1}
            \If{in $m.M$ there are two locks $(e'',v_1,\sigma_1)$ and $(e'',v_2,\sigma_2)$ s.t. $v_1\neq v_2$} \label{alg:pbft-20-withinM}
            \State \textbf{output} $\sigma_1\cap \sigma_2$ 
            \Else
            \State \textbf{output} the intersection of senders in $m.M$ and signers of $\sigma$.\label{alg:pbft-23} 
            \EndIf
        \Else\label{alg:pbft-25}
            \ignore{\State $v, v',e\gets$ two output values and the view number
            \State broadcast $\lr{\conflictwithin, e, v, v'}$
            \State wait for: $qc, qc'\gets$ two \qctwo of output values from possibly different replicas in view $e$.
            \State \textbf{output} $qc.\sigma \cap qc'.\sigma$}
            \State $\lr{\reply,e,v,\sigma}\gets$ first \reply message
            \State $\lr{\reply,e,v',\sigma'}\gets$ second \reply message
            \State \textbf{output} $\sigma \cap \sigma'$\label{alg:pbft-28}
        \EndIf
    \EndUpon
\EndAs{}
\end{algorithmic}
\caption{Forensic protocol for PBFT-PK Byzantine agreement}
\label{alg:fa-pbft1}
\end{algorithm}

\myparagraph{Forensic protocol for PBFT-PK.}
Algorithm~\ref{alg:fa-pbft1} describes the entire protocol to obtain forensic support atop PBFT-PK. For completeness, we also provide a complete description of the PBFT-PK protocol in Algorithm~\ref{alg:single-pbft}. Each replica keeps all received messages as transcripts and maintains a set $Q$ containing all received \nv messages (line~\ref{alg:pbft-3}). If a client observes the replies of two conflicting values, it first checks if two values are output in the same view (line~\ref{alg:pbft-18}). If yes, then any two \qcthree for two different output values can provide a culpability proof for at least $t+1$ replicas (lines~\ref{alg:pbft-25}-\ref{alg:pbft-28}). Otherwise, the client  sends a request for a possible proof between two output views $e, e'$ (lines~\ref{alg:pbft-21}). Each replica looks through the set $Q$ for the \nv message in the smallest view $e^* > e$ such that the status certificate $M$ contains the highest lock $(e'', v'')$ where $v''\ne v$ and $e'' \le e$ and sends it to the client (line~\ref{alg:pbft-13}). If inside $M$ there are conflicting locks in the same view, the intersection of them proves at least $t+1$ culprits (line~\ref{alg:pbft-20-withinM}), otherwise the intersection of $M$ and the \qcthree proves at least $t+1$ culprits (line~\ref{alg:pbft-23}).

The following theorem sharply characterizes the forensic support capability of PBFT-PK. As long as $m\leq 2t$, the best possible forensic support is achieved (i.e., $k=1$ and $d=t+1$). Algorithm~\ref{alg:fa-pbft1} can be used to irrefutably detect $t+1$ Byzantine replicas.  Conversely, if $m > 2t$ then no forensic support is possible (i.e., $k=n-f$ (messages from all honest nodes) and $d=0$).    %The proof indicates the Byzantine action performed by a detected Byzantine replica. 

\begin{theorem}
\label{thm:pbft}
With $n=3t+1$, when $f>t$, if two honest replicas output conflicting values, PBFT-PK provides $(2t,\ 1,\ t+1)$-forensic support. Further $(2t+1, n-f, d)$-forensic support is impossible with $d>0$. 
\end{theorem}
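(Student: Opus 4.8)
The plan is to establish the positive claim by the same two-case split that underlies the safety argument of \S\ref{subsec:pbft-overview}, turning each quorum-intersection into a pair of mutually contradictory \emph{signed} messages. In the same-view case, if $v$ and $v'$ are both output in a view $e$, then two \qcthree's exist; since each carries $2t+1$ signed \commit messages and $n=3t+1$, their signer sets intersect in at least $t+1$ replicas, each of which signed \commit for two distinct values in the same view. No honest replica does this, so these $t+1$ are irrefutably Byzantine, and the two \reply messages the client already holds suffice (here no extra transcript is needed). For the cross-view case with outputs in views $e<e'$, I would let $e^{*}\in(e,e']$ be the first view in which a \qctwo forms for a value $\ne v$. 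Before $e^{*}$ no higher lock exists, so every signer of the \qcthree for $v$ in $e$ is still locked on $(v,e)$, while the status certificate $M$ carried by the \nv of $e^{*}$ has highest lock $(v'',e'')$ with $v''\ne v$ and $e''\le e$. Intersecting the $2t+1$ signers of that \qcthree with the $2t+1$ senders of $M$ yields $\ge t+1$ replicas that both sent \commit for $(v,e)$ and reported in $M$ a lock different from $(v,e)$ (or, in the degenerate subcase where $M$ itself contains two conflicting locks at a common view, I intersect those two \qctwo's); in either case this is a self-contradiction, so the $\ge t+1$ replicas are provably Byzantine.

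The quantitative crux is where $m=2t$ and $k=1$ enter. Because $f\le 2t$, the \qctwo that first forms in $e^{*}$ must contain at least one honest signer, and that honest replica necessarily received, hence recorded, the \nv carrying the incriminating $M$; thus a single honest transcript ($k=1$) always surfaces the proof, exactly as in Algorithm~\ref{alg:fa-pbft1}. Combining the cases gives $(2t,1,t+1)$-forensic support, and it then remains only to prove optimality through the impossibility half.

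For the impossibility I would take $f=2t+1$, so only $n-f=t$ honest replicas exist, and build an execution satisfying the hypothesis — two honest replicas $a,b$ outputting $v$ and $v'$ — whose honest transcripts admit an innocent explanation for \emph{every} replica. The enabling fact is that with $2t+1$ Byzantine replicas the adversary can assemble any quorum certificate entirely among themselves. So I would have $a$ output $v$ in a view $e$ from a \qcthree forged by the Byzantine set, and then hide the unlocking step: in the first view $e^{*}$ where the lock leaves $v$, both the \qctwo for $v'$ and its justifying status certificate $M$ are produced by Byzantine replicas only and are never delivered to any honest replica — possible precisely because an all-Byzantine \qctwo needs $2t+1$ signers, which requires $f\ge 2t+1$. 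By view $e'$ the Byzantine replicas present $b$ with a status certificate $M'$ reporting locks $(v',e^{*})$, which is internally consistent, so $b$ outputs $v'$.

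The heart of the argument, and the step I expect to be the main obstacle, is proving that these honest transcripts carry no irrefutable proof against anyone. Each Byzantine replica's signed messages, as seen by honest replicas, tell one consistent story: it sent \commit for $(v,e)$ and later reported a lock $(v',e^{*})$ with $e^{*}>e$, which is exactly the behavior of an honest replica that legitimately unlocked via a valid \qctwo for $v'$ in $e^{*}$ — and that \qctwo, being correctly signed by $2t+1$ replicas, is present and valid. To make this rigorous I would exhibit, for any replica the forensic protocol might accuse, a second execution identical in every honest replica's transcript but in which that replica is honest: the incriminating $M$ of view $e^{*}$ lives only in Byzantine memory, so its absence from the transcripts cannot be exploited, and a blamer can never prove that the accused lacked a valid unlocking certificate. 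Since the forensic protocol's input is identical across the two executions, any nonempty output is wrong in one of them, forcing $d=0$. The delicate point to handle is verifying that the adversary can route all messages so that no honest replica ever records two conflicting signed messages from the same party — ruling out on-the-record same-view equivocation and commit-versus-lock contradictions — which is exactly where the gap between $f=2t$ and $f=2t+1$ becomes decisive.
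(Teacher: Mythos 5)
Your proposal is correct and follows essentially the same route as the paper: the same two-case split with quorum intersections (including the degenerate subcase of two conflicting highest locks inside $M$) for the positive direction, and a two-world indistinguishability argument for the impossibility at $f=2t+1$, where the adversary forges the certificates alone and withholds the unlocking evidence so that every signed message on the honest record admits an innocent explanation. Your explicit justification that $f\le 2t$ forces an honest signer into the first conflicting \qctwo, guaranteeing a $k=1$ witness holding the incriminating \nv, is a point the paper leaves implicit; the paper's impossibility proof instantiates your exoneration schema concretely with fixed partitions $P,Q,R$ and two explicit worlds, which discharges the "delicate point" you flag.
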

\begin{proof} We prove the forward part of the theorem below. The converse (impossibility) is proved in \S\ref{sec:proofoftheorem4.1}. 
Suppose the values $v$ and $v'$ are output in views $e$ and $e'$ respectively. 

\myparagraph{Case $e=e'$.} 

\noindent\underline{Culpability.} The quorums \qcthree  for $v$ and \qcthree  for $v'$ intersect in $t+1$ replicas. These $t+1$ replicas should be Byzantine since the protocol requires a replica to vote for at most one value in a view.

\noindent\underline{Witnesses.} Client can obtain the culpability proof based on two \qcthree. No additional communication is needed in this case ($k=0$).

\myparagraph{Case $e\neq e'$.}

\noindent\underline{Culpability.}
If $e \neq e'$, then WLOG, suppose $e < e'$. Since $v$ is output in view $e$, it must be the case that $2t+1$ replicas are locked on $(v, e)$ at the end of view $e$ (if they are honest). Now consider the first view $e < e^* \leq e'$ in which a higher lock $(v'', e^*)$ is formed (not necessarily known to any honest party) where $v'' \neq v$ (possibly $v'' = v'$). Such a view must exist since $v'$ is output in view $e' > e$ and a lock will be formed in at least view $e'$. Consider the status certificate $M$ sent by the leader of view $e^*$ in its \nv message. $M$ must contain $2t+1$ locks; each of these locks must be from view $e'' \le e$, and a highest lock among them is $(v'', e'')$.  

We consider two cases based on whether the status certificate contains two different highest locks: (i) there exist two locks $(v'', e'')$ and $(v''', e'')$ s.t. $v'' \ne v'''$ in $M$. (ii) $(v'', e'')$ is the only highest lock in $M$. For the first case, since two locks are formed in the same view, the two quorums forming the two locks in view $e''$ intersect in $t+1$ replicas. These replicas are Byzantine since they voted for more than one value in view $e$.

For the second case, $(v'', e'')$ is the only highest lock in the status certificate $M$. $M$ intersects with the $2t+1$ signers of \qcthree in view $e$ at $t+1$ Byzantine replicas. These replicas are Byzantine because they had a lock on $v \ne v''$ in view $e \ge e''$ but sent a different lock to the leader of view $e^* > e$.
%We consider two cases based on whether the status message contains \peiyao{two different highest locks} lock $(v''', e)$ for a value $v''' \neq v''$. The two cases are (i) there does not exist such a lock $(v''', e)$, and (ii) there exists such a lock $(v''', e)$.  the first case, \gerui{a review say it's hard to follow and a figure helps. can we refer figure 1 here?} the status certificate $M$ with $2t+1$ locks intersect with the $2t+1$ signers of \qcthree in view $e$ at $t+1$ Byzantine replicas. These replicas are Byzantine because they had a lock in view $e$ that was different from, and at least as high as the one sent to the leader of view $e^* > e$.

%For the second case, since $v''$ is proposed in view $e^*$, it exists in the status certificate $M$ and is a highest lock in $M$. Moreover, in this case, $(v''', e)$ exists in $M$ too. \gerui{This means $M$ contains two different locks both formed in view $e$. (I think this is clearer than the next sentence.)} Both of these locks were formed in view $e$. The two quorums forming the two locks in view $e$ intersect in $t+1$ replicas. These replicas are Byzantine since they voted for more than one value in view $e$.

\noindent\underline{Witnesses.} Client can obtain the proof by storing the \nv message containing the status certificate $M$ in $e^{*}$. Only one witness is needed to provide the \nv message ($k=1$). The status certificate $M$ and the first \qcthree act as the irrefutable proof. %\kartik{transcripts/witnesses.}
\end{proof}

\myparagraph{Communication complexity.} In the first branch of the forensic protocol, Algorithm~\ref{alg:fa-pbft1}, the client needs to receive one message from $k=1$ replica and the message size is $(2t+1)(|v|+|sig|)$ where $|v|$ and $|sig|$ stand for the size of a value and an aggregate signature (line~\ref{alg:pbft-message-com-1}). In the second branch, the client doesn't need any message (line~\ref{alg:pbft-25}). Therefore the complexity of the client receiving messages is $O(n(|v|+|sig|))$. Notice that we exclude the communication for learning replica outputs (\reply messages) since that procedure happens \textit{before} the forensic protocol.

\subsection{Impossibility for PBFT-MAC}\label{sec:pbft-mac}
%\gerui{TODO: point out that Byzantine can refuse to answer during forensic. or point out it in the model section?}\peiyao{I added a sentence in the model section for this.}
%\gerui{Perhaps we can describe PBFT-MAC here and not describe it in previous sections?} \kartik{can we explain this as a diff from what is described in overview section? also need intuition for why this is impossible.}
We now show an impossibility for a variant of PBFT proposed in \cite[Section 5]{castro1999practical}. The arguments here also apply to the variant in \cite{castro2002practical}. Compared to \S\ref{subsec:pbft-overview}, the only difference in this variant is (i)  \prepare and \commit messages are authenticated using MACs instead of signatures, and (ii) these messages are broadcast instead of routing them through the leader.

% The variant PBFT-MAC is originally proposed in \cite[Section 5]{castro1999practical}, but the same argument works for the variant in \cite{castro2002practical}. PBFT-MAC authenticates messages such as \commit messages using MACs rather than signatures. And replicas broadcast \prepare and \commit messages rather than route it via a leader. As we will show, the use of MACs makes it impossible to provide forensic support.

\myparagraph{Intuition.} The key intuition behind the impossibility relies on the absence of digital signatures which were used to ``log'' the state of a replica when some replica $i$ outputs a value. In particular, if we consider the example in Figure~\ref{fig:pbft-disagree}, while $i$ receives $2t+1$ \commit messages for value $ v$, these messages are not signed. Thus, if  $t+1$ Byzantine replicas vote for a different value $v'$, $v'$ can be output by a different replica. % (signed) to the leader of the next view, $v'$ can be committed in a subsequent view. 
The absence of a verifiable proof stating the set of replicas that sent a \commit to replica $i$ prevents any forensic analysis. We formalize this intuition below.

\ignore{A broadcast message from replica $x$ carries a vector of MACs, which includes the MAC between $x,y$ for any $y\neq x$. When replica $y$ receives the message, it can only check the MAC between $x,y$ and not between $x,z$ for $z\neq y$. As a result, if $x$ is Byzantine, it can include invalid MACs between $x,z$ for all $z\neq y$ and still make $y$ accept such a {\it partially valid} message. And if $y$ is Byzantine, it can also create such a partially valid message. Therefore, given such a partially valid message, a client cannot distinguish whether $x$ or $y$ is Byzantine. We will use this property on \commit messages to construct two indistinguishable worlds. In one world a Byzantine replica $x$ does broadcast a partially valid \commit message, and in another world a Byzantine replica $y$ claims $x$ has broadcast the message therefore frames honest replica $x$.}

\begin{theorem}
With $n=3t+1$, when $f>t$, if two honest replicas output conflicting values, $(t+1, 2t, d)$-forensic support is impossible with $d>0$ for PBFT-MAC.
\end{theorem}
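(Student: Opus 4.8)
The plan is to prove the statement by an indistinguishability (``framing'') argument that exploits the symmetric-key nature of the MACs on \prepare and \commit messages. First I would fix the semantics of an \emph{irrefutable} proof: a proof accusing a replica $r$ must be sound in \emph{every} execution consistent with the data it contains, with no assumption on the number of honest replicas. Consequently, to rule out $d>0$ it suffices to construct two executions $E_1$ and $E_2$ of PBFT-MAC such that (i) each has exactly $f=t+1$ Byzantine replicas and exhibits an agreement violation between two honest replicas (so the hypotheses of Definition~\ref{defn:forensic-support} are met), (ii) the Byzantine sets $B_1,B_2$ are \emph{disjoint} (possible since $2(t+1)\le 3t+1$ for $t\ge 1$), and (iii) the transcript reported by every replica identifier is identical in $E_1$ and $E_2$. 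Given such a pair, any forensic client sees identical input and therefore produces identical output $D$; soundness in $E_1$ forces $D\subseteq B_1$ and soundness in $E_2$ forces $D\subseteq B_2$, hence $D\subseteq B_1\cap B_2=\emptyset$ and $d=0$, contradicting $d>0$.

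Both executions are instances of the cross-view attack of Figure~\ref{fig:pbft-disagree}. In $E_1$ I take the red (Byzantine) set to be $B_1$ and let the honest blue/green replicas be $N\setminus B_1$ (which contains all of $B_2$); in $E_2$ the roles swap, with red $=B_2$ and honest replicas $N\setminus B_2\supseteq B_1$. The replicas in $N\setminus(B_1\cup B_2)$ are honest in both worlds and form the set that must be unable to tell $E_1$ from $E_2$. The lever is the observation of \S\ref{sec:pbft-mac}: a \commit (or \prepare) message broadcast by $x$ carries a vector of pairwise MACs, each verifiable \emph{only} by its intended recipient and, because the key is shared, forgeable by that recipient. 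I would therefore route every message on which $E_1$ and $E_2$ differ as a \emph{partially valid} \commit/\prepare: valid under the key shared with one particular replica and garbage to everyone else. A common-honest replica discards any such message addressed to it whose own MAC is garbage, so its local view is unaffected; the only replicas that ``see'' the differing messages are the swing replicas, and each swing replica is honest in exactly one of the two worlds and Byzantine (hence free either to fabricate that message or to replay the honest transcript of the other world) in the other. This is precisely what makes (iii) achievable: in $E_2$ the replicas of $B_2$ replay the honest transcripts they held in $E_1$, and symmetrically for $B_1$, while the common-honest replicas' genuine transcripts coincide across the two worlds.

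I expect the main obstacle to be discharging (iii) rigorously: showing that the two executions can be made simultaneously \emph{valid} — every honest replica's messages consistent with the protocol and with the (possibly adversarial) messages it receives — while their reported transcripts coincide. The delicate points are (a) the status certificate $M$ that the leader of the pivotal view $e^{*}$ embeds in its signed \nv must be identical in both worlds even though the locks composing $M$ are backed by MAC-authenticated \prepare certificates that are not third-party verifiable; and (b) the quorum of $2t+1$ \commit messages that causes an honest replica to output $v$ must be reproducible from the swing replicas in whichever world they are Byzantine. Both are handled by the partial-validity device, but one must check that no common-honest replica ever needs to verify a MAC that is valid in one world and garbage in the other; I would ensure this by assigning, for every differing message, the unique valid-MAC recipient to lie in $B_1\cup B_2$ and never in $N\setminus(B_1\cup B_2)$, which keeps the common-honest replicas oblivious and closes the argument. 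A secondary check is that the signed messages alone (the \nv proposals and the \reply outputs) never certify culpability by themselves: proposing $v'$ on the basis of a received status certificate, and outputting upon a locally formed quorum, are both legal actions, so no replica can be pinned without appeal to the MAC-authenticated \prepare/\commit evidence that the construction renders ambiguous.
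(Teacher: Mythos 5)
Your meta-argument is exactly the one the paper uses: two executions with an agreement violation, \emph{disjoint} Byzantine sets $B_1,B_2$ of size $t+1$, identical reported transcripts, hence any sound accusation set lies in $B_1\cap B_2=\emptyset$; and your lever --- that a MAC is verifiable only by, and therefore forgeable by, its intended recipient --- is the right one. The gap is in the concrete construction. You instantiate both worlds with the \emph{cross-view} attack of Figure~\ref{fig:pbft-disagree}, but in PBFT-MAC only the \prepare and \commit messages are demoted to MACs; the \nv message carrying the status certificate $M$, and the \vc messages whose lock claims populate $M$, remain digitally signed. Your partial-validity device therefore cannot be applied to precisely the artifacts on which the two worlds must differ. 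Concretely: in world $E_i$ the $2t+1$ signed lock claims in $M$ must include low-lock claims from all of the ``red'' set $B_i$ (that is what enables the unlock and the second commit), and a signed low-lock claim from a replica that is \emph{honest} in the other world forces that replica to genuinely hold a low lock there, i.e.\ to lie outside the view-$e$ commit quorum. With $B_1\cap B_2=\emptyset$ and $|B_1|=|B_2|=t+1$, a single status certificate serving both worlds would need at least $2t+2$ distinct senders, exceeding its size $2t+1$; and routing around this (hiding the \nv from all common-honest replicas, making the $e^*$-leader Byzantine, etc.) fails because every replica is honest in at least one world and its signed \nv/\vc output there is pinned down. So condition (iii) cannot be discharged on the cross-view scenario, and the obstacle you flag as ``delicate point (a)'' is not merely delicate but fatal to this route.

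The paper's proof sidesteps all of this by placing the disagreement \emph{within a single view} $e$: no view change occurs, no status certificate exists, and the only evidence of who voted is the MAC-authenticated \prepare/\commit traffic. The four-way partition $P,Q,R,\{x\}$ is arranged so that the two conflicting commit quorums are $P\cup R\cup\{x\}$ and $Q\cup R\cup\{x\}$, and the forged votes (from $R$ in World~1, from $x$ in World~2) are addressed only to members of the forging coalition, who hold the relevant pairwise keys --- so no common-honest replica ever checks a MAC that is valid in one world and garbage in the other, which is the invariant you correctly identified as necessary. If you restrict your construction to the within-view case in this way, your argument goes through and coincides with the paper's; as written, the cross-view instantiation does not.
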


\begin{proof}
Suppose the protocol provides forensic support to detect $d \geq 1$ replicas with irrefutable proof. To prove this result, we construct two worlds where a different set of $t+1$ replicas are Byzantine in each world but a forensic protocol cannot be correct in both worlds. We fix $f = t+1$, although the arguments will apply for any $f > t$.

Let there be four replica partitions $P,Q,R,\{x\}$. $|P|=|Q|=|R|=t$, and $x$ is an individual replica. In both worlds, the conflicting outputs are presented in the same view $e$. Suppose the leader is a replica from set $R$.

\myparagraph{World~1.} Let $P$ and $x$ be Byzantine replicas in this world. The honest leader from set $R$ in view $e$ proposes $v'$. Parties in $R$, $x$ and $Q$ send \prepare and \commit messages (authenticated with MACs) for value $v'$. Due to partial synchrony, none of these messages arrive at $P$. At the end of view $e$, only $R$ and one replica $q$ in $Q$ receive enough \commit messages and send replies to the client. So the client receives the first set of $t+1$ replies for value $v'$, which contain the same quorum $R, x, Q$.

The Byzantine parties in $P$ and $x$ simulate a proposal from the leader for $v$, and the sending of \prepare and \commit messages within $R, P$ and $x$. The simulation is possible due to the absence of a PKI. At the end of view $e$, $P$ and $x$ obtain enough \commit messages and send replies to the client. Thus, the client receives the second set of $t+1$ replies for value $v$, which contain the same quorum $P, R, x$.% The Byzantine parties in $P$ and $x$ forge a proposal from the leader for $v$ and broadcast \prepare and \commit messages for value $v$ within $P$ and $x$. Besides, they forge the \prepare and \commit messages from the honest party $R$. At the end of view $e$, $P$ and $x$ get enough \commit messages and send replies to the client. So the client receives the second set of $t+1$ replies for value $v$, which contain the same quorum $P, R, x$. 
 Client starts the forensic protocol.

During the forensic protocol, Byzantine $P$ and $x$ only present the votes for $v$, forged votes from $R$ as their transcripts. Since $t+1$ parties have output each of $v$ and $v'$, there is a safety violation. Since the protocol has forensic support for $d\geq 1$, using these transcripts, the forensic protocol determines some subset of $P$ and $x$ are culpable.

\myparagraph{World~2.} Let $R$ and $q$ (one replica in $Q$) be Byzantine replicas in this world. The Byzantine leader in view $e$ proposes $v$ to $P, R$ and $x$. They send \prepare and \commit messages (authenticated with MACs) for value $v$. These messages do not arrive at $Q$. At the end of view $e$, parties in $P$ and $x$ output $v$. So the client receives the first set of $t+1$ replies for value $v$, which contain the same quorum $P, R, x$.

Similarly, the leader sends $v'$ to $Q, R$ and $x$. The proposal does not arrive at $x$. Only $Q$ and $R$ send \prepare and \commit messages (authenticated with MACs) for $v'$, these messages do not arrive at $P$. However, $R$ and $q$ forge \prepare and \commit messages from $x$. At the end of view $e$, $R$ and $q$ output $v'$. So the client receives the second set of $t+1$ replies for value $v'$, which contain the same quorum $R, x, Q$. Client starts the forensic protocol.
 
During the forensic protocol, Byzantine $R$ and $q$ sends the same transcripts as in World~1 by dropping votes for $v$ and forging votes from $x$. Again, since $t+1$ parties have output each of $v$ and $v'$, there is a safety violation. However, observe that the transcript presented to the forensic protocol is identical to that in World~1. Thus, the forensic protocol outputs some subset of $P$ and $x$ as culpable replicas. In World~2, this is incorrect since replicas in $P$ and $x$ are honest. This completes the proof.

\end{proof}

%%%%%%%%%%%%%%%%%%%%%%%%%%%%%% ignore from here

\ignore{
\begin{theorem}
When $f > t$, there exists an execution of PBFT-MAC where a safety violation occurs and we cannot provide forensic support for $m > t$, $k = n$, and $d = t+1$.
\end{theorem}

\begin{table}[!htb]
\caption{World 1. ``$\Leftarrow$'' means receive and ``$\Rightarrow$'' means send/broadcast.}
\label{tab:pbft-mac-1}
\begin{footnotesize}
\begin{tabular}{|c|c|}\hline
  & View $e$  \\\hline
$P$ & \makecell{$\Rightarrow$\commit on $v$,\\form $\qcthree_1$ on $v$\\$\Leftarrow\qcthree_1$} \\\hline
$Q$ & \makecell{$\Rightarrow$\commit on $v'$,\\form $\qcthree_2$ on $v'$\\$\Leftarrow\qcthree_2$} \\\hline
\color{red}{$R$} & \makecell{$\Rightarrow$\commit on $v$ to $P$,\\$\Rightarrow$\commit on $v'$ to $Q$,\\form $\qcthree_1,\qcthree_2$\\$\Leftarrow\qcthree_1$} \\\hline
\color{red}{$x$} & same as above\\\hline
\end{tabular}
\end{footnotesize}
\end{table}

\begin{table}[!htb]
\caption{World 2. ``$\Leftarrow$'' means receive and ``$\Rightarrow$'' means send/broadcast.}
\label{tab:pbft-mac-2}
\begin{footnotesize}
\begin{tabular}{|c|c|}\hline
  & View $e$  \\\hline
\color{red}{$P$} & \makecell{$\Rightarrow$\commit on $v$,\\forge \commit from $R$\\form $\qcthree_1$ on $v$\\$\Leftarrow\qcthree_1$} \\\hline
$Q$ & \makecell{$\Rightarrow$\commit on $v'$,\\form $\qcthree_2$ on $v'$\\$\Leftarrow\qcthree_2$} \\\hline
$R$ & \makecell{$\Rightarrow$\commit on $v'$,\\form $\qcthree_2$ on $v'$\\$\Leftarrow\qcthree_2$} \\\hline
\color{red}{$x$} & \makecell{$\Rightarrow$\commit on $v$ to $P$,\\$\Rightarrow$\commit on $v'$ to $Q$,\\form $\qcthree_1,\qcthree_2$\\$\Leftarrow\qcthree_1$}\\\hline
\end{tabular}
\end{footnotesize}
\end{table}

\begin{proof}
Suppose the protocol provides forensic support to detect $d \geq t+1$ replicas with irrefutable proof. To prove this result, we construct two worlds where a different set of $t+1$ replicas are Byzantine in each world but a forensic protocol cannot be correct in both worlds. We fix $f = t+1$, although the arguments will apply for any $f > t$.

Let there be four replica partitions $P,Q,R,\{x\}$. $|Q|=|P|=|R|=t$, and $x$ is an individual replica. In both worlds, the conflicting outputs are presented in the same view $e$. Suppose the leader is a replica from set $R$.

\myparagraph{World~1.} Let $R$ and $x$ be Byzantine replicas in this world. The Byzantine leader in view $e$ proposes $v$ to $P, R$ and $x$. They send \prepare and \commit messages (authenticated with MACs) for value $v$. Parties in $P$ output $v$ in view $e$. Due to partial synchrony, none of these messages arrive at $Q$.
Similarly, the leader sends $v'$ to $Q, R$ and $x$. They send \prepare and \commit messages (authenticated with MACs) for $v'$; these messages do not arrive at $P$ but arrive at other parties. Parties in $Q$ output $v'$ in view $e$.

% Subsequently, all message are delayed. Thus, all parties except those in $P$ send (signed) \vc messages with a lock on $v'$. In view $e+1$, the leader proposed $v'$ with a supporting status certificate and honest replicas in $Q$ commit $v'$.

During the forensic protocol, $P$ and $Q$ send their transcripts and state that they have output $v$ and $v'$ respectively. Byzantine $R$ only presents its communication with $Q$ and states that it has committed $v'$. Byzantine $x$ only presents its communication with $P$ and $R$ and states that it has committed $v$. Since $\geq t+1$ parties have committed each of $v$ and $v'$, there is a safety violation. Since the protocol has forensic support, using these transcripts, the forensic protocol determines $R$ and $x$ are culpable.

% If this protocol has any forensic support, then it should be able to detect the $d = t+1$ Byzantine replicas with an irrefutable proof. The Byzantine replica $x$  Suppose given all honest transcripts, the protocol detects 

\myparagraph{World~2.} Let $P$ and $x$ be Byzantine replicas in this world. The honest leader from set $R$ in view $e$ proposes $v'$. Parties in $R$, $x$ and $Q$ send \prepare and \commit messages (authenticated with MACs) for value $v'$. These messages do not arrive at $P$. The Byzantine parties in $P$ and $x$ simulate the receipt of a proposal for $v$ and of \prepare and \commit messages for value $v$ from parties in $P, R$ and $x$.

During the forensic protocol, $Q$ and $R$ send their transcripts and state that they have committed $v'$. Byzantine $P$ and $x$ only present the simulated communication between $P, R$ and $x$ as their transcripts. Again, since $\geq t+1$ parties have committed each of $v$ and $v'$, there is a safety violation. However, observe that the transcript presented to the forensic protocol is identical to that in World~1. Thus, the forensic protocol outputs $R$ and $x$ as culpable replicas. In world 2, this is incorrect since replicas in $R$ are honest. This completes the proof.

\ignore{
\kartik{Gerui's proof below}
Suppose in the sense of contradiction that an irrefutable proof of culprits exists for any execution. We will construct two worlds of execution where different $t+1$ replicas are Byzantine in different worlds. We will fix the number of Byzantine replicas $f=t+1$, but the following argument works for any $f\ge t+1$.

Let there be five replica partitions $P,Q,R,\{x\},\{y\}$. $|Q|=t-1,|P|=|R|=t$, and $x,y$ are individual replicas. In worlds 1 and 2, the conflicting outputs are in views $e,e'$ and $e'=e+1$).

\myparagraph{World 1} World 1 is presented in table~\ref{tab:pbft-mac-1}. Let $R$ and $x$ be Byzantine replicas in this world. They, together with $P$, broadcast \commit on $v$ in view $e$. And they, together with $Q$ and $y$, send \vc on $v'$ to the next leader in the view change between $e$ and $e+1$. The MACs of \commit messages sent by Byzantine replica $x$ in view $e$ is partially valid, that is, suppose the recipient of the message is $z$, then only the MAC for $z$ is valid. Notice that replica $z$ cannot check the validity of MACs for other replicas, therefore it considers the message valid. Such partially valid MACs help the construction of world 2.  

Let the replicas in $R$ provide their true transcripts to the client. However, let Byzantine replica $x$ not provide any information. In this way, the irrefutable proof can and has to be computed from transcripts from all replicas but $x$. At last, suppose the irrefutable proof holds $R$ and $x$ as culprits which are the correct ones.

\myparagraph{World 2} World 2 is presented in table~\ref{tab:pbft-mac-2}. Let $R$ and $y$ be Byzantine replicas in this world. They, together with $P$, broadcast \commit on $v$ in view $e$. And they, together with $Q$ and $x$, send \vc on $v'$ in the view change between $e$ and $e+1$. Let replicas in $R$ forge $\qcthree_1$ as world 1, such that the quorum contains the vote from $x$ rather than $y$. Notice that the vote from $x$ is forgeable since the vote in world 1 is partially valid, and the recipient can forge its MAC. Replicas put the forged $\qcthree_1$ in the transcripts and provide it to the client. Let replica $y$ forge their transcripts as in world 1. And the transcripts from $P$ and $Q$ are indistinguishable from world 1. As a result, the transcripts from all replicas but $x$ are indistinguishable to world 1. Hence, the irrefutable proof can be computed from transcripts from all replicas but $x$, and it should hold $R$ and $x$ as culprits as in world 1, which is a contradiction.

By the proof of contradiction, there exists an execution of this PBFT variant such that there is no irrefutable proof of $t+1$ culprits.
}
\end{proof}
}
% \noindent\emph{Remark.} Observe that the proof did not rely on the 
\section{Forensic Support for HotStuff}
\label{sec:basic-hotstuff}
HotStuff~\cite{yin2019hotstuff} is a partially synchronous consensus protocol that provides an optimal resiliency of $n = 3t+1$.  The HotStuff protocol is similar to PBFT but there are subtle differences which allow it to obtain a linear communication complexity for both its steady state and view change protocols (assuming the presence of threshold signatures). Looking ahead, these differences significantly change the way forensics is conducted if a safety violation happens.

\subsection{Overview}  We start with an overview of the protocol. For simplicity, we discuss a single-shot version of HotStuff. The protocol proceeds in a sequence of consecutive views where each view has a unique leader. Each view of HotStuff progresses as follows:\footnote{The description of HotStuff protocol is slightly different from the basic algorithm described in~\cite[Algorithm~2]{yin2019hotstuff} to be consistent with the description of PBFT in \S\ref{subsec:pbft-overview}.}
\begin{itemize}[leftmargin=*, topsep=0pt]
    \item[-] \textbf{\ppr.} The leader proposes a \nv message containing a proposal $v$ along with the \hqc (the highest \qctwo known to it) and sends it to all replicas.
    \item[-] \textbf{\prepare.} On receiving a \nv message containing a valid value $v$ in a view $e$ and a \hqc, a replica sends \prepare for $v$ if it is \emph{safe} to vote based on a locking mechanism (explained later). It sends this vote to the leader. The leader collects $2t+1$ votes to form an aggregate signature \qctwo in view $e$. The leader sends the view $e$ \qctwo to all replicas.
    \item[-] \textbf{\precommit.} On receiving a \qctwo in view $e$ containing message $v$, a replica updates its highest \qctwo to $(v, e)$ and sends \precommit to the leader. The leader collects $2t+1$ such votes to form an aggregate signature \qcprecom.
    \item[-] \textbf{\commit.} On receiving \qcprecom in view $e$ containing message $v$ from the leader, a replica locks on $(v, e)$ and sends \commit to the leader. The leader collects $2t+1$ such votes to form an aggregate signature \qcthree.
    \item[-] \textbf{\reply.} On receiving \qcthree from the leader, replicas output the value $v$ and send a \reply (along with \qcthree) to the client.%\kartik{along with \qcthree or entire transcript?}\peiyao{yes, along with \qcthree is added. They only send transcript upon receiving request for proof.}
\end{itemize}

Once a replica locks on a given value $v$, it only votes for the value $v$ in subsequent views. The only scenario in which it votes for a value $v' \neq v$ is when it observes a \hqc from a higher view in a \nv message. At the end of a view, every replica sends its highest \qctwo to the leader of the next view. The next view leader collects $2t+1$ such values and picks the highest \qctwo as \hqc. 
The safety and liveness of HotStuff when $f \leq t$ follows from the following:

\myparagraph{Uniqueness within a view.} Since replicas only vote once in each round, a \qcthree can be formed for only one value when $f \leq t$.

\myparagraph{Safety and liveness across views.} Safety across views is ensured using locks and the voting rule for a \nv message. Whenever a replica outputs a value, at least $2t+1$ other replicas are locked on the value in the view. Observe that compared to PBFT, there is no status certificate $M$ in the \nv message to ``unlock'' a replica. Thus, a replica only votes for the value it is locked on. The only scenario in which it votes for a conflicting value $v'$ is if the leader includes a \qctwo for $v'$ from a higher view in \nv message. This indicates that at least $2t+1$ replicas are not locked on $v$ in a higher view, and hence it should be safe to vote for it. The latter constraint of voting for $v'$ is not necessary for safety, but only for liveness of the protocol.

\myparagraph{Variants of HotStuff.}
In this paper, we study three variants of HotStuff, identical for the purposes of consensus but provide varied forensic support. The distinction among them is only in the information carried in \vone message. For all three versions, the message contains the message type \vone, the current view number $e$ and the proposed value $v$. In addition, \vone in \hsa contains $\eqc$, the view number of the \hqc in the \nv message. \hsb contains the hash of \hqc (cf. Table~\ref{tab:hotstuff_variants}). \hsb is equivalent to the basic algorithm described in~\cite[Algorithm 2]{yin2019hotstuff}. \hsc does not add additional information.

\begin{table}[]
    \centering
    \begin{tabular}{c|c|c|c}
    \toprule
         & \hsa &\hsb&\hsc  \\
        \toprule
        \info & $\eqc$&$ \text{Hash}(\hqc)$&$\emptyset$\\
        %\midrule
        $m$ & $2t$ & $2t$ &  $t+1$\\

        $k$ & $1$ & $t+1$ & $2t$\\
        %\midrule
        %\midrule
        $d$ & $t+1$ & $t+1$ & $1$\\
        %\midrule
        \bottomrule
    \end{tabular}
    \caption{Comparison of different variants of HotStuff, the \prepare message is $\signed{\vone,e,v, \info}$}
    \label{tab:hotstuff_variants}
\end{table}

\subsection{Forensic Analysis for HotStuff}
% \myparagraph{Intuition.} 
If two conflicting values are output in the same view, Byzantine replicas can be detected using \qcthree  and using ideas similar to that in PBFT. However, when the conflicting outputs of replicas $i$ and $j$ are across views $e$ and $e'$ for $e < e'$, the same ideas do not hold anymore. To understand this, observe that the two key ingredients for proving the culpability of Byzantine replicas in PBFT were (i) a \qcthree  for the value output in a lower view (denoted by $\sigma$ for replica $i$'s reply in Figure~\ref{fig:pbft-disagree}) and (ii) a status certificate from the first view higher than $e$ containing the locks from $2t+1$ replicas (denoted by $M$ for view $e^* > e$ in Figure~\ref{fig:pbft-disagree}). In HotStuff, a \qcthree still exists. However, for communication efficiency reasons, HotStuff does not include a status certificate $M$ in its \nv message. The status certificate in PBFT provides us with the following:

\begin{itemize}[leftmargin=*, topsep=0pt]
    \item \textbf{Identifying a potential set of culpable replicas.} Depending on the contents of $M$ and knowing $\sigma$, we could identify a set of Byzantine replicas.
    \item \textbf{Determining whether the view is the first view where a higher lock for a conflicting value is formed.} By inspecting all locks in $M$, we can easily determine this. Ensuring first view with a higher lock is important; once a higher lock is formed, even honest replicas may update their locks and the proof of Byzantine behavior may not exist in the messages in subsequent views.
\end{itemize}

\begin{figure}[tbp]
    \centering
    \includegraphics[width=\columnwidth]{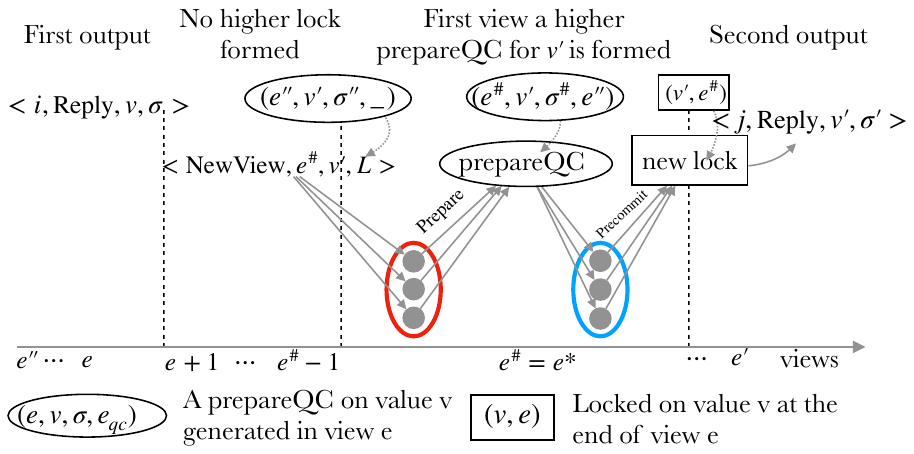}
    \caption{\textbf{Depiction of events in the \hsa protocol for the first view where a higher \qctwo for $v'$ is formed.} } 
    \label{fig:hotstuff-same}
\end{figure}

Let us try to understand this based on the first view $e^\#$ where a higher \qctwo is formed for $v'\ne v$ (see Figure~\ref{fig:hotstuff-same}).  The set of replicas who sent \vone (the red ellipse) in $e^\#$ and formed a \qctwo are our potential set of Byzantine replicas. Why? If $e^\#$ is indeed the first view in which a higher \qctwo is formed, then all of these replicas voted for a \nv message containing a \hqc from a lower or equal view $e''$ on a different value. If any of these replicas also held a lock $(v, e)$ (by voting for replica $i$'s output) then these replicas have output the culpable act of not respecting the voting rule. 

The only remaining part is to ensure that this is indeed the first view where a higher conflicting \qctwo is formed. The way to prove this is also the key difference among three variants of HotStuff. For \hsa, \qctwo contains $\eqc$, which explicitly states the view number of \hqc in the \nv message they vote for. If $\eqc < e$, \qctwo provides an irrefutable proof for culpable behavior. For \hsb, the hash information contained in \vone provides the necessary link to the \nv message they vote, so once the linked \nv message is accessible, the \qctwo and \nv together serve as the proof for culpable behavior. However for \hsc, even if we receive both \qctwo and \nv messages that are formed in the same view, no proof can be provided to show a connection between them. A Byzantine node vote for the first higher \qctwo can always refuse to provide the \nv message they receive.

Thus, to summarize, the red set of replicas in view $e^\#$ are a potential set of culpable nodes of size $t+1$. The irrefutable proof to hold them culpable constitutes two parts, (1) the first \qctwo containing their signed \vone messages, and (2) a proof to show this is indeed the first view. In the next two subsections we will introduce the forensic protocols for \hsa and \hsb to provide different forensic supports, and how it is impossible for \hsc to provide forensic support.

\subsection{Forensic Protocols for \hsa and \hsb}

\begin{algorithm}[htbp]
\begin{algorithmic}[1]
\As{a replica running \hsa}
    \State $P\gets $ all \qctwo in transcript \Comment{including \qctwo in \precommit message and \hqc in \nv message}\label{alg:hotstuffa-2}
    \Upon{$\lr{\conflictacross, e, v, e'}$ from a client}
        \For{$qc\in P$}
            \If{$(qc.v\neq v)\land (qc.e\in (e,e'])\land (qc.\eqc\le e)$} 
                \State send $\lr{\proofacross, qc}$ to the client\label{alg:hotstuffa-12}
            \EndIf
        \EndFor
    \EndUpon
\EndAs{}

\As{a client}
    \Upon{two conflicting \reply messages}
        \If{ two messages are from different views}\label{hotstuffa:28}
            \State $\lr{\reply,e,v,\sigma}\gets$ the message from lower view
            \State  $e'\gets$ the view number of message from higher view
            \State broadcast $\lr{\conflictacross, e, v, e'}$ \label{alg:hotstuffa-23}
            
            \State wait for  $\lr{\proofacross, qc}$ s.t.
            \begin{enumerate}[leftmargin=1in]
                \item[(1)] $e<qc.e\le e'$, and\label{alg:hotstuffa-24}
                \item[(2)] $(qc.v\neq v)\land (qc.\eqc\le e)$
            \end{enumerate}
            \State \textbf{output} $qc.\sigma \cap \sigma$\label{alg:hotstuffa-26}
        \Else \label{alg:hotstuffa-28}
            \State $\lr{\reply,e,v,\sigma}\gets$ first \reply message
            \State $\lr{\reply,e,v',\sigma'}\gets$ second \reply message
            \State \textbf{output} $\sigma \cap \sigma'$
        \EndIf
    \EndUpon
\EndAs{}
\end{algorithmic}
\caption{Forensic protocol for \hsa}
\label{alg:fa-hotstuff-a}
\end{algorithm}

\begin{algorithm}[htbp]
\begin{algorithmic}[1]
\As{a replica running \hsb}
    \State $P\gets $ all \qctwo in transcript\label{alg:hotstuffb-2}
    \State $Q\gets $ all \nv messages in transcript\label{alg:hotstuffb-3}
    \Upon{$\lr{\conflictacross, e, v, e'}$ from a client}
        \For{$qc\in P$}
            \If{$(qc.v\neq v)\land (qc.e\in (e,e'])$} 
                \State send $\lr{\proofacross, qc}$ to the client\label{alg:hotstuffb-12}
            \EndIf
        \EndFor
        \For{$m \in Q$}
            \If{$(m.v\neq v)\land(m.e\in (e,e'])\land(m.\hqc.e \le e)$} 
                \State send $\lr{\nv, m}$ to the client\label{alg:hotstuffb-13}
            \EndIf
        \EndFor
    \EndUpon
\EndAs{}

\As{a client}
    \State $NV\gets \{\}$
    \Upon{two conflicting \reply messages}
        \If{ two messages are from different views}\label{hotstuffb:28}
            \State $\lr{\reply,e,v,\sigma}\gets$ the message from lower view
            \State $e'\gets$ the view number of message from higher view
            \State broadcast $\lr{\conflictacross, e, v, e'}$ \label{alg:hotstuffb-23}
            \Upon{$\lr{\nv, m}$}
                \If{$(m.v\neq v)\land(m.e\in (e,e'])\land(m.\hqc.e \le e)$} 
                    \State $NV\gets NV\cup \{m\}$
                \EndIf
            \EndUpon
            \State wait for  $\lr{\proofacross, qc}$ s.t.
            \begin{enumerate}[leftmargin=1in]
                \item[(1)] $e<qc.e\le e'$, and\label{alg:hotstuffb-24}
                \item[(2)] $(qc.v\neq v)\land  (\exists m\in NV, \text{Hash}(m) = qc.hash)$
            \end{enumerate}
            \State \textbf{output} $qc.\sigma \cap \sigma$\label{alg:hotstuffb-26}
        \Else \label{alg:hotstuffb-28}
            \State $\lr{\reply,e,v,\sigma}\gets$ first \reply message
            \State $\lr{\reply,e,v',\sigma'}\gets$ second \reply message
            \State \textbf{output} $\sigma \cap \sigma'$
        \EndIf
    \EndUpon
\EndAs{}
\end{algorithmic}
\caption{Forensic protocol for \hsb}
\label{alg:fa-hotstuff-b}
\end{algorithm}

\myparagraph{Forensic protocol for \hsa.}
Algorithm~\ref{alg:fa-hotstuff-a} describes the protocol to obtain forensic support atop \hsa. A complete description of the general HotStuff protocol is also provided in Algorithm~\ref{alg:single-hotstuff-modified}. Each replica keeps all received messages as transcript and maintains a set $P$ containing all received \qctwo from \precommit messages and \hqc from \nv messages (line~\ref{alg:hotstuffa-2}). If a client observes outputs of two conflicting values in the same view, it can determine the culprits using the two \reply messages (line~\ref{alg:hotstuffa-28}).  Otherwise, the client sends a request to all replicas for a possible proof between two output views $e, e'$ for $e<e'$ (line~\ref{alg:hotstuffa-23}). Each replica looks through the set $P$  for \qctwo formed in views $e<e^\#\le e'$. If there exists a \qctwo whose value is different from the value $v$ output in $e$ and whose $\eqc$ is less than or equal to $e$, it sends a reply with this \qctwo to the client (line~\ref{alg:hotstuffa-12}). The client waits for a \qctwo (line~\ref{alg:hotstuffa-24}) formed between two output views. For \hsa, if it contains a different value from the first output value and an older view number $\eqc < e$, the intersection of this \qctwo and the \qcthree from the \reply message in the lower view proves at least $t+1$ culprits (line~\ref{alg:hotstuffa-26}).

\myparagraph{Forensic protocol for \hsb.} % \kartik{this is a good candidate to move to appendix.} 
Algorithm~\ref{alg:fa-hotstuff-b} describes the protocol to obtain forensic support atop \hsb, which is similar to the protocol for \hsa. For replicas running \hsb, besides $P$, they also maintains the set $Q$ for received \nv messages (line~\ref{alg:hotstuffb-3}). When receiving a forensic request from clients, replicas look through $P$ for \qctwo formed in views $e<e^\#\le e'$ and send all \qctwo whose values are different from the value $v$ to the client (line~\ref{alg:hotstuffb-12}). Besides, they also look through $Q$ for a \nv message formed in views $e<e^\#\le e'$ and send all \nv proposing a value different from $v$ and containing a \hqc with view number $\le e$ (line~\ref{alg:hotstuffb-13}). For \hsb, when receiving such a \nv for different values, the message will be stored temporarily by the client until a \qctwo for the \nv message with a matching hash is received. The \nv and the \qctwo together form the desired proof; the intersection of the \qctwo and the \qcthree provides at least $t+1$ culprits.

\myparagraph{Forensic proofs.} The following theorems characterize the forensic support capability of \hsa and \hsb. As long as $m\leq 2t$, \hsa can achieve the best possible forensic support (i.e., $k=1$ and $d=t+1$). \hsb can achieve a medium forensic support (i.e., $k=t+1$ and $d=t+1$). Conversely, if $m > 2t$ then no forensic support is possible for both protocols (i.e., $k=n-f$ and $d=0$).

\begin{theorem}
\label{thm:hotstuff_a}
With $n=3t+1$, when $f>t$, if two honest replicas output conflicting values, \hsa provides $(2t,\ 1,\ t+1)$-forensic support. Further $(2t+1, n-f, d)$-forensic support is impossible with $d>0$.
\end{theorem}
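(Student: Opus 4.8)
The plan is to prove the two halves separately, mirroring the PBFT-PK argument of Theorem~\ref{thm:pbft} but replacing the status certificate $M$ (which HotStuff omits) by the $\eqc$ field that \hsa carries inside each \qctwo. For the positive direction I would follow Algorithm~\ref{alg:fa-hotstuff-a} and split on whether the conflicting outputs $v,v'$ occur in the same view. If $e=e'$, two \qcthree's exist; each is an aggregate of $2t+1$ \commit signatures and an honest replica sends \commit for at most one value per view, so by quorum intersection at least $t+1$ replicas signed \commit for both $v$ and $v'$ and are culpable, with no extra witnesses ($k=0$). The substantive case is $e<e'$.

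For the cross-view case I would note that outputting $v$ at $e$ forces $2t+1$ replicas to sign \commit on $(v,e)$ (call this quorum $\sigma$, the \qcthree for $v$), while outputting $v'$ at $e'$ forces a \qctwo for $v'\neq v$ at view $e'$ that the honest outputter holds; hence an honest-visible conflicting \qctwo exists in $(e,e']$. Let $e^\#$ be the \emph{smallest} view in $(e,e']$ carrying such an honest-visible conflicting \qctwo $qc^\#$ (value $v''\neq v$). I would then establish two facts. (i) $qc^\#.\eqc\le e$: since $f\le 2t$ there is at least one honest signer of $qc^\#$, an honest replica votes for $v''$ only if the proposal is consistent with the \hqc it cites, so that \hqc is itself a \qctwo for $v''$ of view $qc^\#.\eqc$; were $qc^\#.\eqc>e$ it would be an honest-visible conflicting \qctwo of smaller view, contradicting minimality. (ii) Every replica in $qc^\#.\sigma\cap\sigma$ is Byzantine: it locked $(v,e)$ yet signed \vone for $v''\neq v$ at $e^\#$ citing $\eqc\le e$; the liveness rule needs the cited \hqc view to exceed its lock view $\ge e$, and the safety rule needs $v''$ to equal a value it locked — but minimality of $e^\#$ forbids any earlier conflicting \qctwo, hence any earlier \qcprecom that could have re-locked it onto $v''$. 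As $|qc^\#.\sigma|=|\sigma|=2t+1$ and $n=3t+1$, the intersection has size $\ge t+1$, giving $d=t+1$; and since $qc^\#$ is self-contained (the $\eqc$ lies in its signed content), one honest holder of $qc^\#$ suffices, so $k=1$.

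For the converse I would fix $f=2t+1$ and build two executions indistinguishable from any signed transcript a client can collect but with different faulty replicas. The leverage is that $2t+1$ Byzantine replicas can by themselves manufacture a fully valid \qctwo and then a \qcprecom for $v''$ at an intermediate view $e_1\in(e,e')$, freely setting $\eqc>e$ since no honest signer constrains them. This fabricated \qcprecom supplies an honest alibi: any replica later voting \vone for $v''$ can claim it received the \qcprecom, re-locked onto $(v'',e_1)$, and voted by the safety rule — exactly the behavior the positive proof excluded using $f\le 2t$. I would formalize this with worlds $W_1,W_2$ in which swapping one block of replicas between the honest and Byzantine sets leaves the collected signed objects identical, so any irrefutable proof valid in $W_1$ wrongly incriminates an honest replica in $W_2$; hence $d=0$ even with $k=n-f=t$.

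The main obstacle is the tight coupling between ``first conflicting view'' and irrefutability. Soundness of Algorithm~\ref{alg:fa-hotstuff-a} is \emph{false} for a non-minimal conflicting \qctwo (a replica could have legitimately re-locked through an intervening conflicting \qcprecom), so the delicate step is proving that the minimal honest-visible conflicting \qctwo simultaneously lives in some honest transcript (securing $k=1$) and genuinely satisfies $\eqc\le e$ — and this is exactly where $m\le 2t$ enters, via the guaranteed honest signer. For the converse the analogous difficulty is making the honest-visible transcripts coincide \emph{exactly}: unlike the PBFT-MAC impossibility we cannot forge signatures, so the indistinguishability must be engineered purely from the adversary's freedom to assemble $2t+1$-sized quorums internally and to present the identical signed quorum certificates in both worlds.
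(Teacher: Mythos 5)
Your proposal is correct and follows essentially the same route as the paper's proof: the same split into same-view and cross-view cases, the same key step of intersecting the earliest conflicting \qctwo in $(e,e']$ with the \qcthree for $v$ (using the signed $\eqc$ field to certify that this \qctwo cites a \hqc from view $\le e$), the same $k=1$ witness argument, and the same two-world indistinguishability for the converse with $f=2t+1$ Byzantine replicas fabricating an intermediate \qcprecom as an ``honest alibi.'' Your explicit treatment of the re-locking subtlety --- that soundness requires the \emph{minimal} conflicting \qctwo, since an honest replica could legitimately re-lock through an intervening conflicting \qcprecom --- is a point the paper handles only implicitly by defining $e^\#$ as the first view where such a \qctwo is formed, so your version is, if anything, slightly more careful on that step.
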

\begin{proof}We prove the forward part of the theorem below. The proof of converse (impossibility) is the same as \S\ref{sec:proofoftheorem4.1}.
Suppose two conflicting values $v,\ v'$ are output in views $e$, $e'$ respectively. 

\myparagraph{Case $e=e'$.}

\noindent\underline{Culpability.} The \qcthree of $v$ and \qcthree of $v'$ intersect in $t+1$ replicas. These $t+1$ replicas should be Byzantine since the protocol requires a replica to vote for at most one value in a view. 

\noindent\underline{Witnesses.} Client can obtain a proof based on the two \reply messages, so additional witnesses are not necessary in this case.

\myparagraph{Case $e\neq e'$.}

\noindent\underline{Culpability.}
If $e \neq e'$, then WLOG, suppose $e < e'$. Since $v$ is output in view $e$, it must be the case that $2t+1$ replicas are locked on $(v, e)$ at the end of view $e$. Now consider the first view $e < e^* \leq e'$ in which a higher lock $(v'', e^*)$ is formed where $v''\ne v$ (possibly $v''=v'$). Such a view must exist since $v'$ is output in view $e' > e$ and a lock will be formed in at least view $e'$. For a lock to be formed, a higher \qctwo must be formed too. Consider the first view $e<e^\# \leq e'$ in which the corresponding \qctwo for $v''$ is formed. The leader in $e^\#$ broadcasts the \nv message containing a \hqc on $(v'', e'')$. Since this is the first time a higher \qctwo is formed and there is no higher \qctwo for $v''$ formed between view $e$ and $e^\#$, we have $e''\le e$. The formation of the higher \qctwo indicates that $2t+1$ replicas received the \nv message proposing $v''$ with \hqc on $(v'', e'')$ and consider it a valid proposal, i.e., the view number $e''$ is larger than their locks because the value is different. 

 Recall that the output value $v$ indicates $2t+1$ replicas are locked on $(v, e)$ at the end of view $e$. In this case, the $2t+1$ votes in \qctwo in view $e^\#$ intersect with the $2t+1$ votes in \qcthree in view $e$ at $t+1$ Byzantine replicas. These replicas should be Byzantine because they were locked on the value $v$ in view $e$ and vote for a value $v'' \neq v$ in a higher view $e^\#$ when the \nv message contained a \hqc from a  view $e'' \leq e$. Thus, they have violated the voting rule.

\noindent\underline{Witnesses.} Client can obtain a proof by storing a \qctwo formed between $e$ and $e'$, whose value is different from $v$ and whose $\eqc \le e$. So only one witness is needed ($k=1)$, the \qctwo and the first \qcthree act as the irrefutable proof.
\end{proof}

\begin{theorem}
\label{thm:hotstuff_b}
With $n=3t+1$, when $f>t$, if two honest replicas output conflicting values, \hsb provides $(2t,\ t+1,\ t+1)$-forensic support. Further $(2t+1, n-f, d)$-forensic support is impossible with $d>0$. 
\end{theorem}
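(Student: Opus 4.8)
The plan is to track the proof of \hsa (Theorem~\ref{thm:hotstuff_a}) closely: I would reuse its culpability argument unchanged and spend the real effort on the witness count, which is exactly where \hsb degrades from $k=1$ to $k=t+1$; the converse I would simply inherit from \S\ref{sec:proofoftheorem4.1}. As in \hsa, the same-view case $e=e'$ is trivial — the two \qcthree certificates inside the \reply messages intersect in $t+1$ replicas that voted twice in one view, and no extra witnesses are needed — so the whole content lies in the cross-view case $e<e'$.

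For culpability when $e<e'$, I would repeat the \hsa argument verbatim. Since $v$ is output in $e$, a $(2t+1)$-quorum is locked on $(v,e)$; let $e^\#\in(e,e']$ be the first view carrying a \qctwo for some $v''\ne v$ (it exists because a lock, hence a \qctwo, for a value $\ne v$ must appear by $e'$). Its $2t+1$ signers meet the $2t+1$ signers of the \qcthree of $v$ in $e$ in $t+1$ Byzantine replicas, who were locked on $(v,e)$ yet voted for $v''$ in $e^\#$ off a \nv whose \hqc has view $e''\le e$. The one genuinely new point is the \emph{shape} of the proof: in \hsb the \vone message carries only $\text{Hash}(\hqc)$ rather than $\eqc$, so the \qctwo of $e^\#$ no longer self-certifies that $e''\le e$. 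I would therefore take the irrefutable proof to be the hash-matched pair (\qctwo of $e^\#$, \nv of $e^\#$) together with the already-held \qcthree of $e$; the matching hash binds the signers to a proposal whose \hqc is demonstrably from a view $\le e$.

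The heart of the proof — and the step I expect to be the main obstacle — is showing that $t+1$ honest transcripts always recover \emph{both} halves of this pair. The clean lever is a counting identity: any message received by some $(2t+1)$-quorum is held by all but at most $t$ honest replicas, since with $f\le 2t$ the quorum has $\ge 2t+1-f$ honest members, leaving at most $(n-f)-(2t+1-f)=n-2t-1=t$ honest replicas outside it. Applied to the \nv of $e^\#$ (received by its $2t+1$ \vone voters), at most $t$ honest replicas lack it, so any $t+1$ honest transcripts contain it. The delicate half is the \emph{aggregate} \qctwo of $e^\#$: a Byzantine leader of $e^\#$ need never hand the assembled certificate to any honest replica, and the honest voters individually hold only their own shares. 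My resolution is that this certificate must nonetheless resurface at honest replicas — if a lock on $v''$ already forms in $e^\#$ then $2t+1$ replicas received the \qctwo in the \precommit step, and otherwise the first later view re-promoting $v''$ (one exists, since a lock on $v''$ forms by $e'$) carries the \qctwo of $e^\#$ as the \hqc inside its \nv, received by its $2t+1$ voters (which is exactly why Algorithm~\ref{alg:fa-hotstuff-b} also harvests \hqc's out of stored \nv messages into $P$). The same quorum bound then leaves at most $t$ honest replicas without it. Since each half is absent from at most $t$ honest replicas, any $t+1$ honest transcripts jointly contain both, establishing $k=t+1$; with $d=t+1$ and $m=2t$ inherited from the culpability step, the forward direction yields $(2t,t+1,t+1)$.

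For the converse I would invoke the construction of \S\ref{sec:proofoftheorem4.1} verbatim, as \hsa does: once $f$ is allowed to reach $2t+1$, one builds two executions that frame disjoint honest sets while presenting the client identical transcripts, so no replica can be held culpable ($d=0$) even from all $n-f$ honest transcripts. This argument is indifferent to the \vone payload, so no new work is needed beyond the citation.
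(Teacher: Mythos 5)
Your proof is correct and follows essentially the same route as the paper's: the same-view case and the cross-view culpability argument are inherited verbatim from the \hsa proof, the witness bound rests on the same counting (the \nv and the \qctwo of $e^\#$ are each held by a $(2t+1)$-quorum, hence each is missing from at most $t$ honest replicas, so any $t+1$ honest transcripts jointly contain both halves of the hash-matched pair), and the converse is the cited indistinguishability construction. The only place you go beyond the paper is in justifying why the \emph{aggregate} \qctwo of $e^\#$ is held by a full quorum at all --- the paper simply asserts that ``$2t+1$ replicas have access to the \qctwo'' --- and your case split (either the lock forms in $e^\#$ itself via the \precommit round, or the certificate resurfaces as the \hqc embedded in a later \nv) supplies the justification the paper leaves implicit.
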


\begin{proof}We prove the forward part of the theorem below. The proof of converse (impossibility) is the same as \S\ref{sec:proofoftheorem4.1}. Suppose two conflicting values $v,\ v'$ are output in views $e$, $e'$ respectively.

\myparagraph{Case $e=e'$.}
Same as Theorem~\ref{thm:hotstuff_a}.

\myparagraph{Case $e\neq e'$.}

\noindent\underline{Culpability.} 
Same as Theorem~\ref{thm:hotstuff_a}.

\noindent\underline{Witnesses.}
Since \qctwo of \hsb only has the hash of \hqc, the irrefutable proof contains the \nv message that includes the \hqc and the corresponding \qctwo with the matching hash $\text{Hash}(\hqc)$. The client may need to store all \nv messages between $e$ and $e'$ whose value is different from $v$ and the whose $\hqc$ is formed in $\eqc \le e$, until receiving a \qctwo for some \nv message with a matching hash. In the best case, some replica sends both the \nv message and the corresponding \qctwo so the client only needs to store $k=1$ replica's transcript. In the worst case, we can prove that any $t+1$ messages of transcript are enough to get the proof. Consider the honest replicas who receive the first \qctwo and the \nv message. $2t+1$ replicas have access to the \qctwo and $2t+1$ replicas have access to the \nv message. Among them at least $t+1$ replicas have access to both messages, and we assume they are all Byzantine. Then at least $t$ honest replicas have the \qctwo and at least $t$ honest replicas have the \nv message. The total number of honest replicas $n-f \le 2t$. Thus among any $t+1$ honest replicas, at least one have \nv message and at least one have \qctwo. Therefore, $t+1$ transcripts from honest replicas ensure the access of both \nv message and \qctwo and thus guarantee the irrefutable proof. 

%formed in $e^*$, there are two possible cases: 1) If a \qcthree is also formed in $e^*$, then there are $2t+1$ replicas who receive \qctwo in $e^*$; 2) If no \qcprecom is formed $e^*$, there must exist a later view $e^{\#\#}$ where a higher \qctwo for $v''$ is formed, $2t+1$ replicas in this view receive the \nv message (containing the \hqc formed in $e^\#$). In both scenarios, there are at least $2t+1-f$ honest replicas who have received the first higher \qctwo, and in the worst case assume their messages are the last messages the client receives. Before hearing from those honest replicas, other honest replicas may send different \nv messages which also contain \hqc formed in some views $<e$. But the client can not determine which view of these \nv messages forms a higher \qctwo. Thus, the client needs to store $(n-f)-(2t+1-f)+1 = t+1$ messages of transcript to obtain a culpability proof containing the \nv message, \qctwo and the first \qcthree. 
\end{proof}

\myparagraph{Communication complexity.} In line~\ref{alg:hotstuffa-24} of Algorithm~\ref{alg:fa-hotstuff-a}, the client needs to receive one message from $k=1$ replica and the message size is $(|v|+|sig|)$ where $|v|$ and $|sig|$ stand for the size of a value and an aggregate signature. Therefore the complexity of the client receiving messages is $O(|v|+|sig|)$ for \hsa. As for \hsb, theorem~\ref{thm:hotstuff_b} shows that in the worst case, the client needs to receive messages from $k=t+1$ replicas. Each of those replicas sends one message of size $O(|v|+|sig|+|hash|)$ where $|hash|$ stands for the size of a hash value. Therefore the complexity of the client receiving messages is $O(n(|v|+|sig|+|hash|))$ for \hsb.

\subsection{Impossibility for \hsc}
Compared to the other two variants, in \hsc, \vone message and \qctwo are not linked to the \nv message. We show that this lack of information is sufficient to guarantee impossibility of forensics.
%Therefore, unlike \hsa or \hsb, given a \qctwo, there is no way to make sure it is in the first view higher than a view $e$ where a \qctwo is formed. Byzantine replicas may not respond or send a later \qctwo to the client and potentially avoid detection. % and make the client think it is the first \qctwo, such that they can avoid detection. 
% We discuss the intuition and formal theorem related to the impossibility below.

\myparagraph{Intuition.}
When $f=t+1$, from the forensic protocols of \hsa and \hsb, we know that given across-view $\qcthree_1$ and $\qcthree_2$ (ordered by view) and the first \qctwo higher than $\qcthree_1$, the intersection of \qctwo and $\qcthree_1$ contains at least $d=t+1$ Byzantine replicas. The intersection argument remains true for \hsc, however, it is impossible for a client to decide whether \qctwo is the first one only with the transcripts sent by $2t$ honest replicas (when $f=t+1$). In an execution where there are two \qctwo in view $e^*$ and $e'$ respectively ($e^*<e'$), % so that the \qctwo in view $e^*$ is the first one. 
the Byzantine replicas (say, set $P$) may not respond with the \qctwo in $e^*$. The lack of information disallows a client from separating this world from another world $P$ is indeed honest and sharing all the information available to them.  We formalize this intuition in the theorem below. % where these parties  and send the simulated transcripts (as if they are honest replicas in another execution) to the client so that a client may think \qctwo in $e'$ is the first one, and uses it to compute the intersection, holds wrong replicas as culpable, and misses Byzantine replicas. We formalize this intuition in the theorem below.

\begin{theorem}
With $n=3t+1$, when $f>t$, if two honest replicas output conflicting values, $(t+1,\ 2t,\ d)$-forensic support is impossible with $d>1$ for \hsc.
 Further, $(t+2,\ n-f,\ d)$-forensic support is impossible with $d>0$.
\label{thm:hotstuff-null}
\end{theorem}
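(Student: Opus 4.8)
The plan is to obtain both impossibility claims from one structural fact about \hsc: \emph{the only way a set of signed messages can irrefutably convict a single replica is through an intra-view self-contradiction}. In \hsc a \vone and the resulting \qctwo carry neither $\eqc$ (as in \hsa) nor $\text{Hash}(\hqc)$ (as in \hsb), so no signed message a replica produces records the view of the \hqc its prepare-vote was cast against. Consequently the across-view voting-rule violation that underlies Theorems~\ref{thm:hotstuff_a} and~\ref{thm:hotstuff_b} leaves no per-replica signed trace: a replica that committed $v$ in view $e$ and later voted for $v'\ne v$ can always maintain that it saw a \nv with a genuinely higher \hqc, and nothing in \hsc refutes this. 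I would formalize this as a per-suspect refutation: for any candidate proof $\pi$ and any replica $r$ named in $\pi$ whose messages in $\pi$ are not intra-view contradictory, I construct an admissible \hsc execution in which $r$ is honest yet every signed message of $\pi$ is still produced (giving $r$'s honest votes the appropriate high \hqc and attributing the equivocation elsewhere). This shows $\pi$ is consistent with $r$ honest, hence not irrefutable for $r$. The upshot is the reduction $d \le \#\{\text{replicas exposed to an intra-view self-contradiction in the transcript}\}$, where ``self-contradiction'' means two conflicting signed votes in one view, or a signed \nv whose \hqc is inconsistent with the proposer's own other signed messages.

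For the first claim ($f=t+1$, $k=2t=n-f$) I would then exhibit one safety-violating \hsc run, following the across-view skeleton of Figure~\ref{fig:hotstuff-same}, whose honest transcript exposes at most one such contradiction, so the reduction gives $d\le 1$ and $(t+1,2t,d)$ is impossible for $d>1$. Concretely: $v$ is output in view $e$ via a \qcthree with $2t+1$ signers that lock on $(v,e)$, and $v'$ is output later; the $t+1$ Byzantine replicas supply the locked signatures needed for the first higher \qctwo for $v'$ while routing their conflicting votes only to parties that will not report them, so that no honest replica's transcript contains two conflicting intra-view votes from anyone except a single proposer whose \nv is necessarily self-incriminating. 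That lone proposer is exactly the one culprit the positive result ($d=1$) can pin, and the construction must arrange that no second replica is similarly exposed.

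For the second claim I would use the extra fault. With $f=t+2$ I expect to build a safety-violating \hsc run whose honest transcript exposes \emph{no} intra-view self-contradiction at all, so the reduction forces $d=0$ and $(t+2,n-f,d)$ is impossible for $d>0$. The role of the extra fault is a counting one: the legitimizing prepare-QC for $v'$ that the first higher \qctwo points to must sit at a view $\le e$; with only $t+1$ faults I expect this QC to be forced to coincide with view $e$ itself, producing conflicting prepare-QCs for $v$ and $v'$ in the same view and hence one exposed equivocation, whereas one additional corrupted replica provides precisely the slack to place that QC strictly below $e$, removing the overlap and thus the last exposed contradiction. I note that a plain two-worlds argument (as for PBFT-MAC) cannot reach $d=0$ here, because any two candidate Byzantine sets each contain $\ge t+1$ of the $2t+1$ lower-quorum signers and therefore intersect; this is why the stronger per-suspect/evidence-insufficiency route is needed.

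The main obstacle is making the reduction watertight, i.e.\ proving that \emph{no} combination of the reported objects (\qcthree replies, intermediate \qctwo, and \nv messages with their embedded \hqc) can irrefutably certify the across-view violation in \hsc. The delicate point is that an honest replica locked on $(v,e)$ which votes for $v'$ must have received, and will report, a \nv carrying a high \hqc; I must show that such a reported \nv together with a \qctwo still leaves the \emph{view} of the relevant \hqc ambiguous (because the unlinkable \hsc votes do not bind a voter to any particular \nv), so that ``this \qctwo is the first higher one'' can never be certified and the suspect's legitimizing \nv can always be posited. The companion difficulty is the tight counting that separates $f=t+1$ (one forced exposed contradiction, hence $d\le 1$) from $f=t+2$ (none, hence $d=0$). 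Both difficulties are exactly resolved in \hsa and \hsb by the reported $\eqc$ or $\text{Hash}(\hqc)$, which fixes the \hqc view and collapses every alternative world into a single one; verifying that \hsc lacks any such binding, and that the two executions above are genuine safe/live \hsc runs realizing the stated outputs, is the crux of the proof.
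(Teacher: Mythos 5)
Your core diagnosis --- that \hsc votes and \qctwo{}s carry no binding to the \nv/\hqc they were cast against, so ``this is the first higher \qctwo'' can never be certified --- is exactly the right phenomenon, and it matches the paper's stated intuition. But your proof route diverges substantially from the paper's, and the parts you flag as ``the crux'' are precisely the parts you have not supplied. The paper does not prove a general reduction of the form $d \le \#\{\text{intra-view self-contradictions}\}$; it gives a single concrete two-world indistinguishability argument (\S\ref{proof:hotstuff-null}). Replicas are partitioned into $P,Q,R,\{x\}$ with $x$ the leader of views $e,e^*,e'$; $\qcthree_1$ on $v$ is signed by $P,R,x$ and $\qcthree_2$ on $v'$ by $Q,R,x$. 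In World~1 ($R,x$ Byzantine) no \qctwo forms in $e^*$ and the first higher \qctwo for $v'$ forms in $e'$; in World~2 ($P,x$ Byzantine) it forms already in $e^*$. The honest transcripts (from $Q$, and from $P$ in World~1) are identical in both worlds precisely because the unlinkable votes cannot distinguish which \nv they answered, so any protocol convicting $d>1$ replicas must name someone outside $\{x\}$ and is wrong in one world. Your per-suspect refutation lemma would subsume this, but proving it requires constructing an admissible exonerating execution for \emph{every} suspect and \emph{every} candidate proof $\pi$, which is a strictly harder task than the paper's; as written it is an assertion, not a proof, and your concrete executions for both claims are described only by the properties they should have.

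Two further points. First, your observation that a plain two-world argument can never reach $d=0$ --- because any Byzantine set consistent with an across-view violation must contain $t+1$ of the $2t+1$ signers of the lower \qcthree, and two such $(t+1)$-subsets of a $(2t+1)$-set intersect --- is correct and sharper than the paper, whose remark that the construction ``can be easily modified'' for $m=t+2$, $d>0$ elides this: the modification must involve at least three mutually indistinguishable worlds with empty common intersection of fault sets (or an argument like your per-suspect one), not merely a relabeling of two. Second, your counting story for why $f=t+1$ forces exactly one exposed contradiction (``the legitimizing QC is forced to coincide with view $e$, producing conflicting prepare-QCs in the same view'') does not match the actual mechanism: in the paper's construction the single convictable replica is the common leader $x$ of the two worlds' fault sets, not a replica caught between two same-view prepare-QCs, and the $d\le 1$ bound falls out of the unavoidable pairwise intersection of candidate fault sets rather than from any same-view QC collision. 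So the proposal identifies the right obstruction and even improves on the paper's terse remark in one respect, but it does not yet contain a proof: the reduction lemma and the two explicit executions --- which are the entire content of the paper's argument --- remain to be constructed.
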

The theorem is proved in \S\ref{proof:hotstuff-null}.
\ignore{
\begin{figure}
    \centering
    \includegraphics[width=\columnwidth]{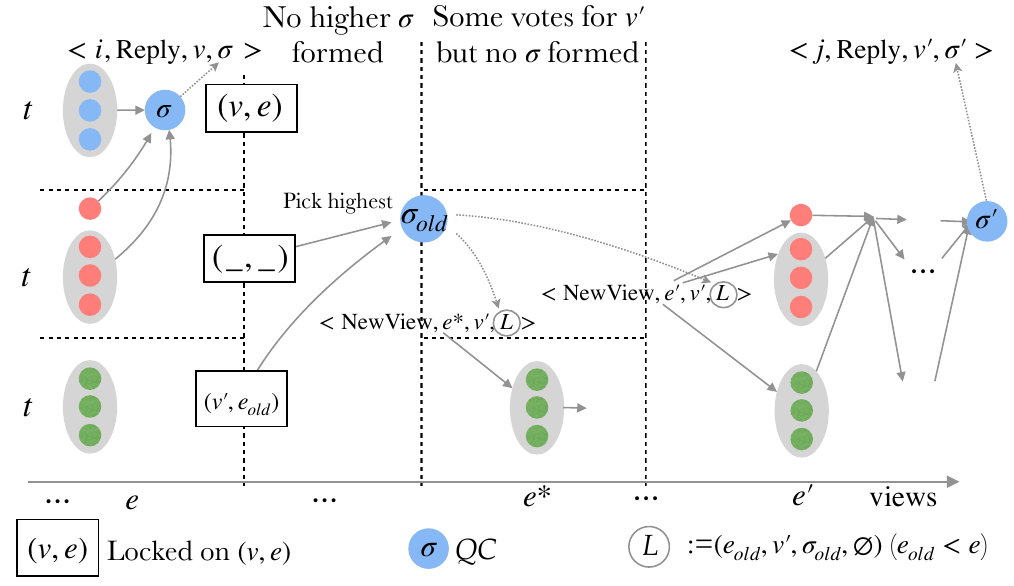}
    \caption{World~1. TODO: explain who are P Q R etc. and red means Byzantine.}
    \label{fig:hotstuff-null-1}
\end{figure}
\begin{figure}
    \centering
    \includegraphics[width=\columnwidth]{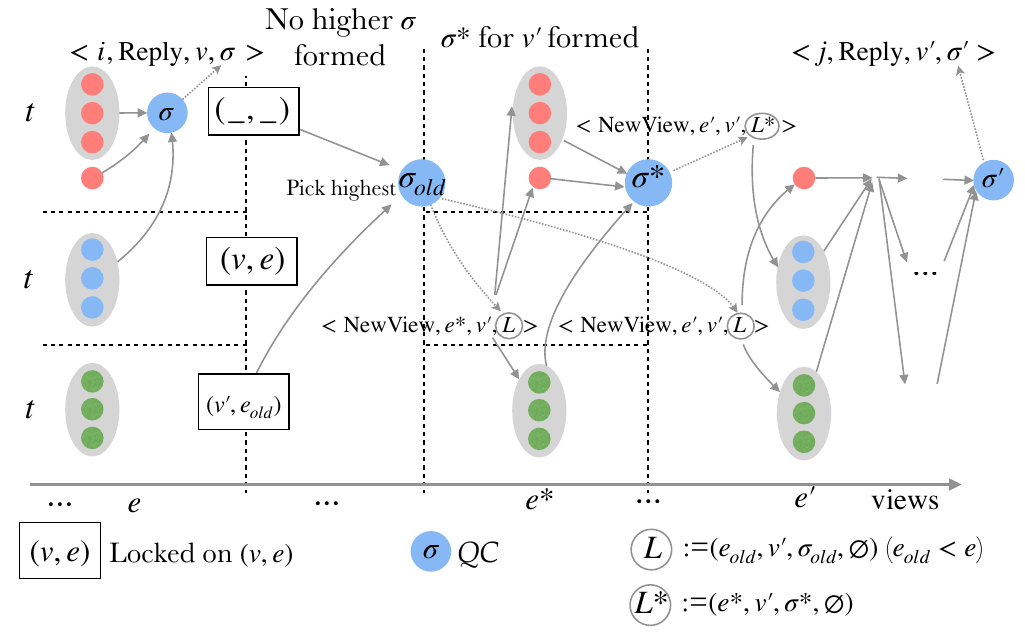}
    \caption{World~2. TODO: explain who are P Q R etc. and red means Byzantine.}
    \label{fig:hotstuff-null-2}
\end{figure}
}

\ignore{
\begin{proof}
Suppose the protocol provides forensic support to detect $d>1$ Byzantine replicas with irrefutable proof and the proof can be constructed from the transcripts of all honest replicas.  To prove this result, we construct two worlds where a different set of replicas are Byzantine in each world. We will fix the number of Byzantine replicas $f=t+1$, but the following argument works for any $f\ge t+1$.

Let there be four replica partitions $P,Q,R,\{x\}$. $|Q|=|P|=|R|=t$, and $x$ is an individual replica. In both worlds, the conflicting outputs are presented in view $e,e'$ ($e+1<e'$) respectively. Let $\qcthree_1$ be on value $v$ in view $e$, and signed by $P,R,x$. And let $\qcthree_2$ (and a \qcprecom) be on value $v'$ in view $e'$, and signed by $Q,R,x$. Suppose the leader of view $e,e^*,e'$ ($e<e^*<e'$) is replica $x$.

\myparagraph{World~1} is presented in Figure~\ref{fig:hotstuff-null-1}. % Let $R$ and $x$ be Byzantine replicas in this world, and they sign \qctwo on $v'$ in $e'$ with $Q$, which is in the first view $>e$. In view $e^*$, the leader proposes value $v'$ and $Q$ sends \prepare on it, but it doesn't form a \qctwo. % \kartik{Write this as: 
Let $R$ and $x$ be Byzantine replicas in this world. In view $e^*$, the leader proposes value $v'$ and $Q$ sends \prepare on it, but a \qctwo is not formed. In view $e'$, the Byzantine parties, together with $Q$, sign \qctwo on $v'$. $e'$ is the first view where a \qctwo for $v'$ is formed.

During the forensic protocol, all honest replicas in $P$ and $Q$ send their transcripts. Byzantine $R$ and $x$ do not provide any information. Since the protocol has forensic support, the forensic protocol can output $d > 1$ replicas in $R$ and $x$ as culprits.

\myparagraph{World~2} is presented in Figure~\ref{fig:hotstuff-null-2}. % Let $P$ and $x$ be Byzantine replicas in this world, and they sign the first \qctwo on $v'$ in $e^*$ with $Q$, which is the first view $> e$. \kartik{
Let $P$ and $x$ be Byzantine replicas in this world. Here, in view $e^* > e$, $P$ and $x$, together with $Q$ sign \qctwo on $v'$. In this world, $e^*$ is the first view where a \qctwo for $v'$ is formed. View $e' > e^*$ is similar to that of World~1 except that honest $R$ receives a \nv message with $\qctwo^*$ (rather than $\qctwo_{old}$).

During the forensic protocol, $Q$ sends their transcripts, which are identical to those in World~1. Byzantine $P$ can provide the same transcripts as those in World~1. Observe that the transcripts from $P$ and $Q$ presented to the forensic protocol are identical to those in World~1. Thus, the forensic protocol can also outputs $d>1$ replicas in $R$ and $x$ as culpable. In World~2, this is incorrect since replicas in $R$ are honest. 

Based on $2t$ transcripts, World~1 and World~2 are indistinguishable. To obtain an irrefutable proof of $d>1$ culprits, the client needs to collect more than $2t$ transcripts, more than the number of honest parties available. This completes the proof. % However, Byzantine replicas can always choose to stay silent during the forensic protocol and honest replicas can provide at most $2t$ transcripts. This complete the proof.}
\end{proof}

\paragraph{Remark.} The above proof can be easily modified to work with parameters $d > 0$ when $m = t+2$.
}

\section{Forensic Support for Algorand}
\label{sec:algorand}
Algorand~\cite{chen2019algorand} is a synchronous consensus protocol which tolerates up to one-third fraction of Byzantine users. At its core, it uses a BFT protocol from \cite{micali2018byzantine,chen2018algorand}. However, Algorand runs the protocol by selecting a small set of replicas, referred to as the committee, thereby achieving consensus with sub-quadratic communication. %  Algorand achieves consensus with sub-quadratic communication; it selects a small set of replicas, referred to as the committee, to participate in each step of the protocol. 
The protocol is also player replaceable, i.e., different steps of the protocol have different committees, thus tolerating an adaptive adversary. Each replica uses {\it cryptographic self-selection} using a verifiable random function (VRF)~\cite{micali1999verifiable} to privately decide whether it is selected in a committee. The VRF is known only to the replica until it has sent a message to other parties thus revealing and proving its selection in the committee. In this section, we present an overview of the BFT protocol and then show why it is impossible to achieve forensic support for this protocol. 
%\gerui{Looking ahead, Algorand allows the absence of messages in a synchronous step to switch state of honest parties, and Byzantine parties can utilize this to launch an attack without leaving a trail.}
% \kartik{Talk about identifying only bad committee members?} % until after it has sent some message revealing its selection result. \gerui{Maybe delete this sentence since it may not be true.}Looking ahead, the use of synchrony and self-selection makes it difficult to provide forensic support.\kartik{is this true? They play a role, but the fact that the underlying protocol makes decisions based on absence of messages is what makes it hard.}

\subsection{Overview}
We start with an overview of the single-shot version of Algorand~\cite{chen2019algorand}. The protocol assumes synchronous communication, where messages are delivered within a known bounded time. The protocol proceeds in consecutive steps, each of which lasts for a fixed amount of time that guarantees message delivery. Each step has a self-selected committee, and a replica can compute its VRF, a value that decides whether it is selected in the committee, and it is known only by the replica itself until it has sent the value. All messages sent by a committee member is accompanied by a VRF thus allowing other replicas to verify its inclusion in the committee. Parameters such as committee size $\kappa$ are chosen such that the number of honest parties in the committee is greater than a threshold $t_H \geq 2\kappa/3$ with overwhelming probability. % \kartik{Write this as the following. We provide an overview of the protocol from [cite, section]. The protocol is divided into two sub-protocols: \gradedconsensus and \bba. \gradedconsensus, which forms the first four steps of the protocol, allows us to ... Subsequently, the replicas engage in a \bba protocol to determine ... }
%We provide an overview of the protocol from \cite{chen2019algorand}. 

The BFT protocol is divided into two sub-protocols: \gradedconsensus and \bba. In \gradedconsensus, which forms the first three steps of the protocol, each replica $r$ inputs its value $v_r$. Each replica $r$ outputs a tuple containing a value $v^{\text{out}}_r$ and a grade $g_r$. In an execution where all replicas start with the same input $v$, $v^{\text{out}}_r = v$ and $g_r = 2$ for all replicas $r$. 
The replicas then enter the next sub-protocol, denoted \bba. If $g_r = 2$, replica $r$ inputs value $b_r = 0$, otherwise it inputs $b_r = 1$. At the end of \bba, the replicas agree on the tuple $(0, v)$ or $(1, v_\perp)$.\footnote{$v_\perp$ is considered external valid in Algorand.} The \bba sub-protocol also uses a random coin; for simplicity, we assume access to an ideal global random coin. Our forensic analysis in the next section only depends on \bba and thus, we only provide an overview for \bba here. The protocol proceeds in the following steps,

\begin{itemize}
    \item Steps 1-3 are \gradedconsensus. At the end of \gradedconsensus, each replica inputs a value $v_r$ and a binary value $b_r$ to \bba. % $b=0$ iff $v\neq v_\perp$ and $b=1$ iff $v=v_\perp$ where $v_\perp$ is a default ``empty'' value.
    \item Step 4. Each replica in the committee broadcasts its vote for $(b_r, v_r)$ along with its VRF.
    \item Step $s$ ($s \geq 5, s \equiv 2 \mod 3$) is the \coinfixtozero step of \bba. In this step, a replica checks {\it Ending Condition 0}: if it has received $\ge t_H$ valid votes on $(b,v)$ from the previous step, where $b=0$, it outputs $(b,v)$ and ends its execution. Otherwise, it updates $b_r$ as follows: 
    \[b_r\gets \left\{\begin{array}{lc}
         1,& \text{if}\ge t_H\text{ votes on $b=1$} \\
         0,& \text{otherwise}
    \end{array}\right.\]
    If the replica is in the committee based on its VRF, it broadcasts its vote for $(b_r,v_r)$ along with the VRF.
    \item Step $s$ ($s \geq 6, s \equiv 0 \mod 3$). Symmetric to the previous step but for bit $1$ instead of $0$. Also, the votes need not be for the same $v$ in the ending condition. 
    \item Steps $s$ ($s\ge7, s\equiv 1\mod 3$) is the \coingenuinelyflipped step of \bba. In this step, it updates its variable $b_r$ as follows:
    \[b_r\gets \left\{\begin{array}{l}
         0, \text{if}\ge t_H\text{ votes on $b=0$} \\
         1, \text{if}\ge t_H\text{ votes on $b=1$} \\
         \text{random coin of step }s,\quad \text{otherwise}
    \end{array}\right.\]
    If the replica is in the committee based on its VRF, it broadcasts its vote for $(b_r,v_r)$ along with the VRF.
\end{itemize}

%binary value $b$ and a multi-value $v$. as their inputs to \bba. Subsequently, the replicas engage in \bba, a binary Byzantine agreement protocol to reach agreement on the tuple $(b,v)$. \bba uses a random coin at some steps, and for simplicity we assume there is a magic random coin that appears in the sky for each step. The steps of the whole Algorand protocol are as follows:
%The first four steps of the protocol are the \gradedconsensus sub-protocol and steps starting from step 4 \kartik{you mean step 5? first 4 are for graded consensus.} are the binary Byzantine agreement sub-protocol, referred to as \bba. We assume \gradedconsensus works in good condition, \kartik{what does good condition mean?} and focus on the safety violation and forensics of \bba. In \bba, replicas reach agreement on a tuple $(b,v)$ where $b$ is a binary value and $v$ is a multi-value. \bba uses a random coin at some steps, and for simplicity we assume there is a magic random coin that appears in the sky for each step, since its implementation is irrelevant to forensics. The output of \gradedconsensus and details of \bba are as follows:
\ignore{
\begin{itemize}
    \item[-] Steps 1-4\gerui{3.5} are \gradedconsensus. The output of \gradedconsensus is that each replica has a value $v$ and a binary value $b$, which are the starting values for \bba. $b=0$ iff $v\neq v_\perp$ and $b=1$ iff $v=v_\perp$ where $v_\perp$ is a default ``empty'' value.
    \item[-] Step 4 (after \gradedconsensus) is the start of \bba. At step 4, after each replica in the committee has obtained tuple $(b,v)$ from \gradedconsensus, it broadcasts its vote for $(b,v)$.
    \item[-] Steps $s$ ($s\ge5, s\equiv 2\mod 3$) are \coinfixtozero step of \bba. In this step, a replica checks {\it Ending Condition 0}:\kartik{perhaps can explain terminating condition here}\gerui{done} if it has received $\ge t_H$ valid votes on $(b,v)$ from the previous step, where $b=0$ and the head of $v$ has been received in \gradedconsensus, it outputs $(b,v)$ and ends its execution. (The latter condition of the head of $v$ is irrelevant to forensics, so we will not mention it hereafter.) If it shouldn't end, it updates its variable $b$ according to the number of valid votes it has received from the previous step. It is as follows:
    \[b\gets \left\{\begin{array}{lc}
         1,& \text{if}\ge t_H\text{ votes on $b=1$ and the same $v$} \\
         0,& \text{otherwise}
    \end{array}\right.\]
    If the replica is in the committee based on its VRF, it broadcasts its vote for $(b,v)$ along with the VRF.
    \item[-] Steps $s$ ($s\ge6, s\equiv 0\mod 3$) are \coinfixtoone step of \bba. \kartik{just say symmetric to the previous step?}\gerui{This step is symmetric to \coinfixtozero step, where it may output $(b=1,v=v_\perp)$ by {\it Ending Condition 1}, or it may update $b$ as follows:}% If it shouldn't end, it updates its variable $b$ according to the number of valid votes it has received from the previous step. It is as follows:
    \[b\gets \left\{\begin{array}{lc}
         0,& \text{if}\ge t_H\text{ votes on $b=0$} \\
         1,& \text{otherwise}
    \end{array}\right.\]
    If the replica is in the committee based on its VRF, it broadcasts its vote for $(b,v)$ along with the VRF.
    \item[-] Steps $s$ ($s\ge7, s\equiv 1\mod 3$) are \coingenuinelyflipped step of \bba. In this step, it updates its variable $b$ as follows:
    \[b\gets \left\{\begin{array}{l}
         0, \text{if}\ge t_H\text{ votes on $b=0$ and the same $v$} \\
         1, \text{if}\ge t_H\text{ votes on $b=1$} \\
         \text{magic coin of step }s,\quad \text{otherwise}
    \end{array}\right.\]
    If the replica is in the committee based on its VRF, it broadcasts its vote for $(b,v)$ along with the VRF.
    %\item[-] Ending Condition 0. In a \coinfixtozero step, if a replica has received $\ge t_H$ valid votes on $(b,v)$ from the previous step, where $b=0$ and the head of $v$ has been received in \gradedconsensus, it outputs $(b,v)$ and ends its execution. Notice the latter condition of the head of $v$ is irrelevant to forensics, so we will not mention it hereafter. \gerui{Without loss of generality, assume a replica that has ended execution with this condition to vote for its output $(b,v)$ for all subsequent steps.}\kartik{did not understand this statement}
    %\item[-] Ending Condition 1. \kartik{remove this after ending condition 0 is inlined.} In a \coinfixtoone step, if a replica has received $\ge t_H$ valid votes on $(b,v)$ from the previous step, where $b=1$, it stops its execution and outputs $(b,v_\perp)$ where $v_\perp$ is a default value, and ends its execution. \gerui{Without loss of generality, assume a replica that has ended execution with this condition to vote for its output $(b,v_\perp)$ for all subsequent steps.}
    %\item[-] \gerui{Without loss of generality, if in any step, the number of valid votes is less than $t_H$, which means the honest committee members of the previous step is less than the threshold $t_H$, a replica ends without output since this scenario violates the assumption of honest-majority-of-user or synchrony.}\kartik{does algorand say this?}\gerui{no, original algorand doesn't say this. we can delete this.}
\end{itemize}
}

\myparagraph{Safety of \bba within a step.} If all honest replicas reach an agreement before any step, the agreement will hold after the step. If the agreement is on binary value 0 (1 resp.) then the opposite Ending Condition 1 (0 resp.) will not be satisfied during the step. This is because synchronous communication ensures the delivery of at least $t_H$ votes on the agreed value and there are not enough malicious votes on the other value. 

\myparagraph{Safety of \bba across steps.} For the step \coinfixtozero (1 resp.), if any honest replica ends due to Ending Condition 0 (1 resp.), all honest replicas will agree on binary value 0 and value $v$ (1 and $v_\perp$ resp.) at the end of the step, because there could only be less than $t_H$ votes on binary value 1 (0 resp.). Hence, together with safety within a step, binary value 1 and value $v_\perp$ (0 and $v$ resp.) will never be output.

\ignore{
Algorand assumes honest-majority-of-user, which ensures $f$ Byzantine members of any committee out of the $n$ committee members to satisfy the following inequalities with overwhelming probability~\cite{chen2019algorand} \gerui{I cite 2019 paper. what about 2017 paper?}\kartik{both should have the same bound, isn't it?}:
\[n-f>t_H,\text{ and }(n-f)+2f=n+f<2t_H.\]
(Notice that committee members $n$ and Byzantine members $f$ in Algorand are random variables.) 
Denote these inequalities as {\it honest-majority-of-committee}, and it ensures the safety of Algorand as follows.

\myparagraph{Safety of \gradedconsensus.} \gradedconsensus ensures that all honest replicas have the same output value $v$ or a default output value $v_\perp$. Due to {\it honest-majority-of-committee}, Byzantine replicas cannot create two conflicting values $v,v'$ ($v\neq v_\perp, v'\neq v_\perp$) that are honest replicas' output. %As mentioned before, we will not attack \gradedconsensus and assume it works in good condition.

\myparagraph{Safety of \bba within a step.} If all honest replicas reach an agreement before any step, the agreement will hold after the step. If the agreement is on binary value 0 (1 resp.) then the opposite Ending Condition 1 (0 resp.) will not be satisfied during the step. This is because synchronous communication ensures the delivery of at least $t_H$ votes on the agreed value and there is not enough malicious votes on the other value. 

\myparagraph{Safety of \bba across steps.} For the \coinfixtozero (1 resp.) step, if any honest replica ends due to Ending Condition 0 (1 resp.), all honest replicas will agree on binary value 0 and value $v$ (1 and $v_\perp$ resp.) at the end of the step, because there could only be less than $t_H$ votes on binary value 1 (0 resp.). Hence, together with safety within a step, binary value 1 and value $v_\perp$ (0 and $v$ resp.) will never be output.
}

\subsection{Impossibility of Forensics}
When the Byzantine fraction in the system is greater than one-third, with constant probability, a randomly chosen committee of size $\kappa < n$ will have $t_H < 2\kappa/3$. In such a situation, we can have a safety violation. Observe that since only $\kappa < n$ committee members send messages in a round, the number of culpable replicas may be bounded by $O(\kappa)$. However, we will show an execution where no Byzantine replica can be held culpable. % can construct an execution of Algorand with no forensic support. % When the assumption of honest-majority-of-user doesn't hold anymore, the honest-majority-of-committee inequalities don't hold with overwhelming probability. Thus, \gerui{Byzantine replicas can make a violation of agreement, and from there} we can construct an execution of Algorand such that there is no forensic support.

\myparagraph{Intuition.} The safety condition for \bba relies on the following: if some honest replica commits to a value $b$, say $b = 0$, in a step and terminates, then all honest replicas will set $b=0$ as their local value. In all subsequent steps, there will be sufficient ($> 2/3$ fraction) votes for $b=0$ due to which replicas will never set their local value $b=1$. Thus, independently of what Byzantine replicas send during the protocol execution, honest replicas will only commit on $b = 0$. On the other hand, if replicas do not receive $> 2/3$ fraction of votes for $b=0$, they may switch their local value to $b=1$ in the \coinfixtoone or \coingenuinelyflipped step. This can result in a safety violation. When the Byzantine fraction is greater than one-third, after some replicas have committed 0, the Byzantine replicas can achieve the above condition by selectively not sending votes to other replicas (say set $Q$), thereby making them switch their local value to $b=1$. There is no way for an external client to distinguish this world from another world where the set $Q$ is Byzantine and states that it did not receive these votes. We formalize this intuition in the theorem below. Observe that our arguments work for the \bba protocol with or without player-replaceability. %that the argument still holds even without player-replaceability, i.e., if a committee is fixed across rounds instead of selected differently each round.}

% Byzantine replicas can selectively send votes to some replicas and not to some other replicas. As a result, some honest replicas collect enough votes for Ending Condition 0 whereas others collect enough votes for Ending Condition 1. Those who don't receive votes cannot prove it.

\begin{theorem}
When the Byzantine fraction exceeds 1/3, if two honest replicas output conflicting values, $(t+1,\ 2t,\ d)$-forensic support is impossible with $d>0$ for Algorand.
\label{thm:algorand}
\end{theorem}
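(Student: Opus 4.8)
The plan is to reuse the two–worlds indistinguishability template that already gave the impossibilities for PBFT-MAC and \hsc. I fix $f=t+1$, so the $2t$ honest replicas supply exactly the $k=2t$ transcripts the client is allowed to collect, and I build two executions that (i) both end in a genuine \bba safety violation, (ii) induce an identical collection of reported transcripts at the client, yet (iii) have disjoint Byzantine sets, so that any replica a forensic protocol names is honest in at least one world. The whole construction is anchored at a single \coinfixtozero step $s$: I arrange that one honest replica (call it $p$) receives $\ge t_H$ votes on $(0,v)$ and terminates with $(0,v)$, while a disjoint honest group $Q$ sees fewer than $t_H$ votes on $b=0$, hence (receiving $\ge t_H$ votes on $b=1$) updates $b\gets 1$ and eventually outputs $(1,v_\perp)$ in a later \coinfixtoone step. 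This is possible only because the Byzantine fraction exceeds $1/3$, i.e. with constant probability the committee drawn at step $s$ satisfies $t_H<2\kappa/3$, leaving enough surplus votes to push $p$ above threshold on $0$ while $Q$ crosses threshold on $1$.

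The mechanism creating the split must avoid any self-incriminating \emph{equivocation}, since votes carry VRF-authenticated signatures and two conflicting signed votes from one replica would themselves be a proof once both surface among the $2t$ honest transcripts. I therefore rely only on \emph{selective withholding}, which is unattributable. Let $R$ be the group that votes $b=0$ and delivers to $p$'s side but not to $Q$. In World~1 I let $R$ (together with the group $S$ that supplies $Q$'s $b=1$ votes) be Byzantine; $Q$ is honest and genuinely never receives $R$'s $b=0$ votes, so it legitimately falls below $t_H$ on $b=0$ and switches. In World~2 I instead let $Q$ be Byzantine while $R$ and $S$ are honest: now $R$ broadcasts its $b=0$ votes to everyone including $Q$, so $Q$ actually receives $\ge t_H$ votes on $(0,v)$ and \emph{should} have ended with $0$, but being Byzantine it deletes $R$'s votes from the transcript it reports and otherwise behaves and reports exactly as honest $Q$ did in World~1, again outputting $(1,v_\perp)$. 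As in the PBFT-MAC proof, the partition is chosen so that \emph{every replica reports the same transcript in both worlds} — for $R,S$ this is genuine-honest behavior in World~2 and a consistent Byzantine report in World~1, and for $Q$ it is the reverse — so the client's total view is byte-for-byte identical.

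Because the absence of a vote from a replica's transcript is not attributable to any sender, the client cannot distinguish ``$Q$ honest, never received $R$'s votes'' (World~1) from ``$Q$ Byzantine, concealing $R$'s votes'' (World~2). Hence any forensic protocol outputs the identical culprit set in both worlds. But $\mathrm{Byz}(\text{World 1})=R\cup S$ and $\mathrm{Byz}(\text{World 2})=Q$ are disjoint, so a nonempty output cannot be a subset of both; whatever the protocol names is honest in one world, contradicting irrefutability and forcing $d=0$. The argument is purely per-step and never reuses a committee across steps, so it applies equally to the player-replaceable \bba.

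The main obstacle is the quantitative bookkeeping at step $s$: I must simultaneously guarantee that $p$ sees $\ge t_H$ votes on $(0,v)$; that in World~1 the group $Q$ sees $<t_H$ votes on $b=0$ but $\ge t_H$ on $b=1$; that no individual Byzantine replica ever has to cast two conflicting signed votes; and that the two Byzantine sets $R\cup S$ and $Q$ can be made disjoint while the replica-level budget $f=t+1$ is respected. The delicate point is that the vote thresholds live at the committee level (conditioning on the constant-probability event $t_H<2\kappa/3$) whereas the forensic parameters count all replicas, so I must check that the committee composition implied by a bad draw is consistent with a disjoint, size-$(t+1)$ Byzantine partition in each world. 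Once these counts are shown consistent with a valid \bba execution, the identity of the two reported transcripts — and hence the contradiction — is immediate.
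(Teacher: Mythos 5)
Your overall strategy — two worlds related by swapping the roles of a withholding group and a deprived group, using only selective non-delivery (never equivocation) so that the $2t$ honest transcripts are byte-for-byte identical — is exactly the template the paper uses. But the specific mechanism you propose for flipping $Q$ to $b=1$ does not work, and the failure sits precisely in the ``quantitative bookkeeping'' you defer to the end. You want $Q$, at a single \coinfixtozero step $s$, to see fewer than $t_H$ votes on $b=0$ \emph{and} at least $t_H$ votes on $b=1$, the latter supplied by a Byzantine group $S$. Since every honest replica enters \bba with $b=0$ (the disagreement must be manufactured inside \bba, with \gradedconsensus running cleanly), the only source of $b=1$ votes is $S$ itself, so you need $|S\cap\text{committee}|\ge t_H=2\kappa/3$ — a two-thirds-Byzantine committee, far beyond what ``fraction exceeds $1/3$'' or a replica budget of $f=t+1$ permits. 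The alternative, letting $R$ vote both $0$ (toward $p$) and $1$ (toward $Q$), is the signed equivocation you correctly rule out. Moreover, adding $S$ to the Byzantine set of World~1 makes $|R\cup S|$ exceed the $m=t+1$ bound, and breaks the clean two-way role swap you need for transcript identity.

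The missing idea is the asymmetry of the update rules across step types: at a \coinfixtozero step the default (absent $t_H$ votes on $1$) is $b\gets 0$, but at the following \coinfixtoone step the default (absent $t_H$ votes on $0$) is $b\gets 1$. The paper's construction therefore spreads the attack over two consecutive steps: at steps 4--5 the Byzantine group $R$ votes $(0,v)$ but delivers only to $P$, so $P$ crosses $t_H$ and terminates with $(0,v)$ while $Q$ sees only $p+q<t_H$ votes on $0$ and \emph{stays} at $b=0$; then at step 6, seeing fewer than $t_H$ votes on $0$, $Q$ defaults to $b=1$ with no $b=1$ votes required from anyone, and from step 7 onward $q+r>t_H$ suffices to lock in $1$ and trigger Ending Condition 1. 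With that fix your two worlds (World~1: $R$ Byzantine and withholding; World~2: $Q$ Byzantine, receiving $R$'s votes but reporting a transcript in which they are absent) go through exactly as you describe, since non-receipt of a vote is unattributable. As written, however, the single-step version of your construction cannot be instantiated as a valid \bba execution.
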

The theorem is proved in \S\ref{sec:algorand-proof}.
\ignore{
\begin{proof}
We construct two worlds where a different set of replicas are Byzantine in each world. Let replicas be split into three partitions $P$, $Q$, and $R$, and $|P|=(n-2\epsilon)/3,|Q|=|R|=(n+\epsilon)/3$ 
%where $N$\gerui{Is $n,N$ same here?} is the number of replicas 
and $\epsilon>0$ is a small constant. Denote the numbers of replicas from $P,Q,R$ in a committee by $p,q,r$. Let $\kappa$ denote the expected committee size; $t_H=2\kappa/3$. With constant probability, we will have $p < \kappa/3$, $q > \kappa/3$ and $r > \kappa/3$ and $p+q < 2\kappa/3$ in steps 4 to 8. % \kartik{Pramod, do we need to be more formal here?}

\myparagraph{World 1.} Replicas in $R$ are Byzantine in this world. We have $p+q < t_H$ and $q+r > t_H$. The Byzantine parties follow the protocol in \gradedconsensus. Thus, all replicas in step 4 hold the same tuple of $b=0$ and $v$ ($v\neq v_\perp$). Then, the following steps are executed.
\begin{itemize}
    \item[Step 4] Honest committee members that belong to $P$ and $Q$ broadcast their votes on $(b=0,v)$ whereas Byzantine committee members that belong to $R$ send votes to replicas in $P$ and not $Q$. %(Without loss of generality, suppose the honest replicas in committees for the next steps belong to $Q$).
    \item[Step 5] Replicas in $P$ satisfy Ending Condition 0, and output $b=0$ and the value $v$. Replicas in $Q$ do not receive votes from committee members in $R$, so they update $b=0$ and broadcast their votes on $(b=0, v)$. %Honest committee members that belong to $Q$ doesn't receive votes from committee members in $R$, so they update $b=0$ and broadcast their votes on $(b=0,v)$. 
    Byzantine committee members that belong to $R$ pretend not to receive votes from committee members in $Q$, and also update $b=0$. And they send votes to replicas in $P$ and not $Q$.
    \item[Step 6] Replicas in $Q$ update $b=1$ since they receive $p+q<t_H$ votes.  Replicas in $R$ pretend not to receive votes from committee members in $Q$, and also update $b=1$. Committee members in $Q$ and $R$ broadcast their votes.
    \item[Steps 7-8] Committee members that belong to $Q$ and $R$ receive $q+r>t_H$ votes, so they update $b=1$ and broadcast their votes.
    %Since $q+r>2/3 n$, there are more than $2/3 n$ such votes.
    \item[Step9] Replicas in $Q$ and $R$ satisfy Ending Condition 1, and output $b=1$ and $v_\perp$, a disagreement with replicas in $P$.
\end{itemize}
During the forensic protocol, replicas in $P$ send their transcripts and state that they have output $b=0$. $Q$ and $R$ send their transcripts claiming in steps 4 and 5 they do not hear from the other partition, and they state that output $b=1$. %\gerui{$|P|<1/3$, I think ALgorand doesn't care about 1/3 replicas commit a value, right?}\kartik{did not understand this comment.}\gerui{the question is that do we need $b=0$ to be committed by $>1/3$ replicas? I guess no.}

If this protocol has any forensic support, then it should be able to detect some replica in $R$ as Byzantine.

\myparagraph{World 2.} This world is identical to World 1 except (i) Replicas in $Q$ are Byzantine and replicas in $R$ are honest, and (ii) the Byzantine set $Q$ behaves exactly like set $R$ in World~1, i.e., replicas in $Q$ do not send any votes to $R$ in steps 4 and 5 and ignore their votes. % symmetrically to  Let replicas in $Q$ be Byzantine in this world. Similar to world 1, assume \gradedconsensus gives all committee members in step 4 $(b=0,v)$. Next, in steps 4-9, let $Q$ act symmetrically as if they are in world 1: $Q$ don't send to $R$ and ignore votes from $R$ in step 4 and 5. 
During the forensic protocol, $P$ send their transcripts and state that they have output $b=0$. $Q$ and $R$ send their transcripts claiming in steps 4 and 5 they do not hear from the other partition, and they state that output $b=1$. 

From an external client's perspective, World 2 is indistinguishable from World 1. In World~2, the client should detect some replica in $R$ as Byzantine as in World 1, but all replicas in $R$ are honest. % which is a contradiction.
\end{proof}
}
% Therefore, in this execution of world 1 or 2, there is no forensic support.

\ignore{
Suppose honest-majority-of-committee inequalities doesn't hold in steps 4 to 8, that is, suppose $|HSV|<t_H$ and $|HSV|/2+|MSV|\ge t_H$. The following execution can create a safety violation while Byzantine replicas avoid being held accountable. First, assume \gradedconsensus works fine and all committee members in step 4 have $b=0$ and the same $v$ ($v\neq v_\perp$). Let the honest replicas be evenly split into two partitions $P$ and $Q$. Next, the following happens:
\begin{itemize}
    \item[Step4] Honest committee members broadcast their votes whereas Byzantine committee members send votes to replicas in $P$ exclusively and not to $Q$. %(Without loss of generality, suppose the honest replicas in committees for the next steps belong to $Q$).
    \item[Step5] Replicas in $P$ satisfies Ending Condition 0, and output $b=0$ and the value $v$. Honest committee members in $Q$ don't end, and update $b=0$ and broadcast their votes. Byzantine committee members don't send vote, so that the total number of votes is less than $t_H$.
    \item[Step6-8] Committee members not in $P$ update $b=1$ and broadcast their votes. With high probability, honest committee members in $Q$ and Byzantine committee members are beyond $t_H$, hence there are $\ge t_H$ such votes.
    \item[Step9] Replicas in $Q$ satisfy Ending Condition 1, and output $b=1$ and value $v_\perp$.
\end{itemize}

\kartik{need another world to make this claim.} Therefore, a safety violation happens; values $v$ and $v_\perp$ are output by replicas in $P$ and $Q$ respectively. The behavior that causes this safety violation is in step 4 and 5. As for steps 6-9, all replicas (including Byzantine ones) follow the protocol. In step 4, Byzantine replicas selectively send votes whereas in step 5, Byzantine replicas stay silent. If a replica selectively sends votes to some replicas and not to some other replicas, it is difficult to provide irrefutable proof. \gerui{I'm not sure about the previous sentence. may change this sentence wrt definition of irrefutable proof. maybe only its own message can be the proof} If a replica stays silent in a step, it is difficult to provide irrefutable proof as well. Furthermore, it is difficult to identify the replica due to the property of self-selection; no replica other than itself knows it is in the committee. Since the protocol relies on the assumption that at least $t_H$ votes should be delivered in any step, Byzantine replicas being silent can cause a safety violation. %Furthermore, cryptographic self-selection makes it impossible to identify the silent replicas. 
Hence, it is impossible to detect any Byzantine replicas in this execution.
}
\section{LibraBFT and {\sf Diem}}
\label{sec:impl}
In this paper, we have focused on forensics for single-shot consensus. Chained BFT protocols are natural candidates for consensus on a sequence with applications to  blockchains. LibraBFT is a chained version of Hotstuff and is the core consensus protocol in {\sf Diem}, a new cryptocurrency supported by Facebook~\cite{diem}. In this section, we show that LibraBFT has the strongest forensic support possible (as in Hotstuff-view). Further, we implement the corresponding forensic analysis protocol as a module on top of an open source {\sf Diem} client. We highlight the system innovations of our implementation and associated forensic dashboard; this has served as a reference implementation  presently under active consideration for deployment (due to anonymity imperativs we are unable to document this more concretely). 

\myparagraph{{\sf Diem} blockchain.}
{\sf Diem} Blockchain uses LibraBFT~\cite{librabft}, %\gerui{renamed to DiemBFT} \peiyao{Not sure about this, last time I used {\sf Diem}BFT in the draft and Dahlia revised it to LibraBFT}\pramod{agree with Peiyao. this is what Dahlia told us} 
 a variant of HotStuff protocol for consensus. In LibraBFT, the replicas are called validators, who receive transactions from clients and propose blocks of transactions in a sequence of rounds. 

\myparagraph{LibraBFT forensics.} The culpability analysis for LibraBFT is similar to Theorem~\ref{thm:hotstuff_b}. However, for the witnesses, the blockchain property of LibraBFT makes sure that any replica (validator) has access to the full blockchain and thus provides $(n-2, 1, t+1)$-forensic support. The formal result is below (proof in  \S\ref{sec:proof-diem}).
\begin{theorem}
\label{thm:diem}
For $n=3t+1$, when $f>t$, if two honest replicas output conflicting blocks, LibraBFT provides $(n-2,\ 1,\ t+1)$-forensic support.
\end{theorem}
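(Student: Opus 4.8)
The plan is to obtain the culpability (the $d=t+1$ part) by reusing the quorum-intersection argument of Theorem~\ref{thm:hotstuff_b} almost verbatim, after translating HotStuff's ``lock / \qctwo / \nv'' vocabulary into the chained block-and-QC structure of LibraBFT. Suppose the two honest validators commit conflicting blocks at rounds $e<e'$. Committing the block at round $e$ forces $2t+1$ validators to lock on it (the \qcthree / 3-chain rule), exactly as in HotStuff. Because the block committed at $e'$ conflicts with the block at $e$, the certified chain ending at $e'$ must fork away from the round-$e$ branch; let $e^\#\in(e,e']$ be the first round on this chain carrying a QC for a block $v''\neq v$ whose parent QC (\hqc) points to a round $e''\le e$. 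Such an $e^\#$ exists for the same reason as in Theorem~\ref{thm:hotstuff_b}. The $2t+1$ signers of the \qctwo at $e^\#$ intersect the $2t+1$ signers of the commit QC at round $e$ in at least $t+1$ validators, and each of these violated the locking/voting rule (it was locked on $(v,e)$ yet voted at $e^\#$ for a conflicting block whose \hqc is from a round $\le e$). Since honest validators never break this rule, all $t+1$ are Byzantine, giving $d=t+1$.

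The improvement over Theorem~\ref{thm:hotstuff_b} is entirely in the witness count $k$ and the tolerated fault bound $m$, and both follow from the single structural fact that in LibraBFT every validator stores the full, self-certifying blockchain. In \hsb the proof had to be split across replicas---one holding the \qctwo (which carries only $\text{Hash}(\hqc)$) and another holding the matching \nv message that reveals the parent round---forcing $k=t+1$ in the worst case. In LibraBFT this splitting disappears: the block proposed at round $e^\#$ is itself the ``\nv'' object and already embeds its parent QC (hence the round $e''\le e$), while the QC certifying that block is embedded in its child on the same chain. The honest validator that commits at $e'$ therefore holds the entire branch up to $e'$, and this one transcript contains both halves of the proof, so $k=1$. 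I would make this precise by invoking the chaining invariant (each block carries a QC for its parent) and observing that the forensic request of Algorithm~\ref{alg:fa-hotstuff-b} can be answered from a single committed chain.

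For the fault bound I would argue that the culpability step uses only (i) the commit QC for $v$ at round $e$, which the client already receives inside the lower \reply message, and (ii) the certified branch held by the honest committer of $v'$. Both are aggregate signatures and hence non-repudiable, so the proof never needs to distinguish ``who withheld what'' among Byzantine replicas---precisely the obstruction that capped $m$ at $2t$ and drove the single-shot impossibility results. The only thing the theorem requires is that the two conflicting committers exist and are honest, i.e.\ at least two honest validators, which is exactly $f\le n-2$; hence the support extends all the way to $m=n-2$.

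The main obstacle I anticipate is the chained-setting bookkeeping in the first paragraph: showing that the first conflicting certified round $e^\#$ actually lies on the branch stored by the honest committer of $v'$ (so that a single witness genuinely suffices), and that the LibraBFT 3-chain commit rule really forces the round-$e$ lock and exposes the round-$e''\le e$ parent pointer at $e^\#$ in the same way HotStuff's rule does. I would settle this by appealing to the commit/locking invariants of LibraBFT (inherited from HotStuff) together with the fact that the branch leading to the block committed at $e'$ is fully certified and fully stored, so the block at $e^\#$ and its certifying QC are both present on it; the remaining steps then coincide with the proof of Theorem~\ref{thm:hotstuff_b}.
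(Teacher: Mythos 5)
Your proposal matches the paper's own proof of Theorem~\ref{thm:diem}: the same quorum intersection between the \qcthree at view $e$ and the first conflicting \qctwo whose embedded \hqc points to a view $e''\le e$ yields $t+1$ culprits, and the same structural observation that LibraBFT blocks carry their parent QCs (so a single certified chain contains both the conflicting \qctwo and the block exposing $e''\le e$) delivers $k=1$ and $m=n-2$. Your decision to locate $e^\#$ on the branch held by the honest committer of $v'$, rather than as the globally first conflicting \qctwo as the paper does, is a harmless refinement that if anything makes the single-witness claim easier to justify.
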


The aforementioned three variants of HotStuff in \S\ref{sec:basic-hotstuff} are described under the VBA setting to reach consensus on a single value, and the value can be %\kartik{what can be?}\peiyao{the value} 
 directly included in the vote message (cf.\  Table~\ref{tab:ba_chain}, $v$ is contained in the \prepare message). In this setting, once a replica receives the \qcthree for the value, it will output the value and send a reply to the client, even if the \qcthree is the only message it receives in the current view so far. So when two honest replicas output conflicting values, it is possible that the client receives only the commit messages and extra communication is needed. And when $m > 2t$, Byzantine replicas are able to form QCs by themselves so that no other honest replicas can get access to the first \qctwo. Thus the bound on $m$ for \hsa and \hsb is $2t$.

However, the setting is slightly different in practice, when the value $v$ is no longer a single value, but actually a block with more fields and a list of transactions/commands, as in  LibraBFT. In single-shot consensus,   a block includes the transactions (value) and the \hqc. In this case, the block is too heavy to include directly in a vote message, %\gerui{I think "cannot" is not accurate. How about "is not"?} \peiyao{how about: the block is too heavy to be directly included in a vote message, so the replicas will compute the hash...} be directly included in a vote message, but 
so the replicas add the hash of the block  to the vote message (see  Table~\ref{tab:ba_chain}, Hash$(b=(v, \hqc))$ is contained in the \prepare message). And since only the \nv message has the block's preimage, replicas cannot vote/output until receiving the original blocks. Thus when two honest replicas output conflicting values, the client can obtain the full blockchain from one of them ($k=1$) and all \qctwo are part of the blocks. In this case, even if $m > 2t$ the client can still enjoy non-trivial forensic support. %\kartik{we only count replicas who send their full transcripts towards $k$? and that is why, $k=1$ for \hsa too, since the initial two parties showing equivocation do not send their full transcripts?} \peiyao{Yes, sending commitQC is part of the original protocol (along with the \reply message).
 %Only after the client observes a safety violation and sends a request, the forensic part begins. The messages sent and stored during the forensic protocol will be counted into $k$.}

\begin{table}[]
\begin{tabular}{c|c|c}
\toprule
     & \hsb  & LibraBFT  \\ \toprule
\vone                                                       & \begin{tabular}[c]{@{}c@{}}$\lr{\vone, e,v,$ \\ $ \text{Hash}(\hqc)}$\end{tabular} & \begin{tabular}[c]{@{}c@{}}$\lr{\vone, e, $\\ $\text{Hash}(b=(v, \hqc))}$\end{tabular} \\ \midrule

$k$  & $t+1$    & $1$   \\ \midrule
$m$   & $2t$    & $n-2$     \\ \midrule
\begin{tabular}[c]{@{}c@{}}extra\\ condition\end{tabular} & - & \begin{tabular}[c]{@{}c@{}}must receive the\\ preimage of hash\end{tabular} \\ \bottomrule
\end{tabular}
    \caption{Comparison of \hsb and LibraBFT %\kartik{use the nicer table template}
    }
    \label{tab:ba_chain}
\end{table}

\myparagraph{Forensic module.} Our  prototype consists of two components, a database \fs used to store quorum certificates received by validators, which can be accessed by clients through JSON-RPC requests or consensus API; and an independent \detector run by clients to analyze the forensic information. 
\begin{itemize}
    \item \textbf{\fs} maintains a map from the view number to quorum certificates and its persistent storage. %\gerui{can we change and its persistent storage to "on persistent storage"/"and store it persistently".} 
    It is responsible for storing forensic information and allows access by other components,  % and
    including clients (via JSON-RPC requests or consensus API).  
    \item \textbf{\detector } is run by clients manually to send requests periodically to connected validators. It collates information received from validators, using it as the  input to the forensic analysis protocol.%[TODO: pseudo code for detector]
\end{itemize}

\begin{figure}
    \centering
    \includegraphics[width=0.9\columnwidth]{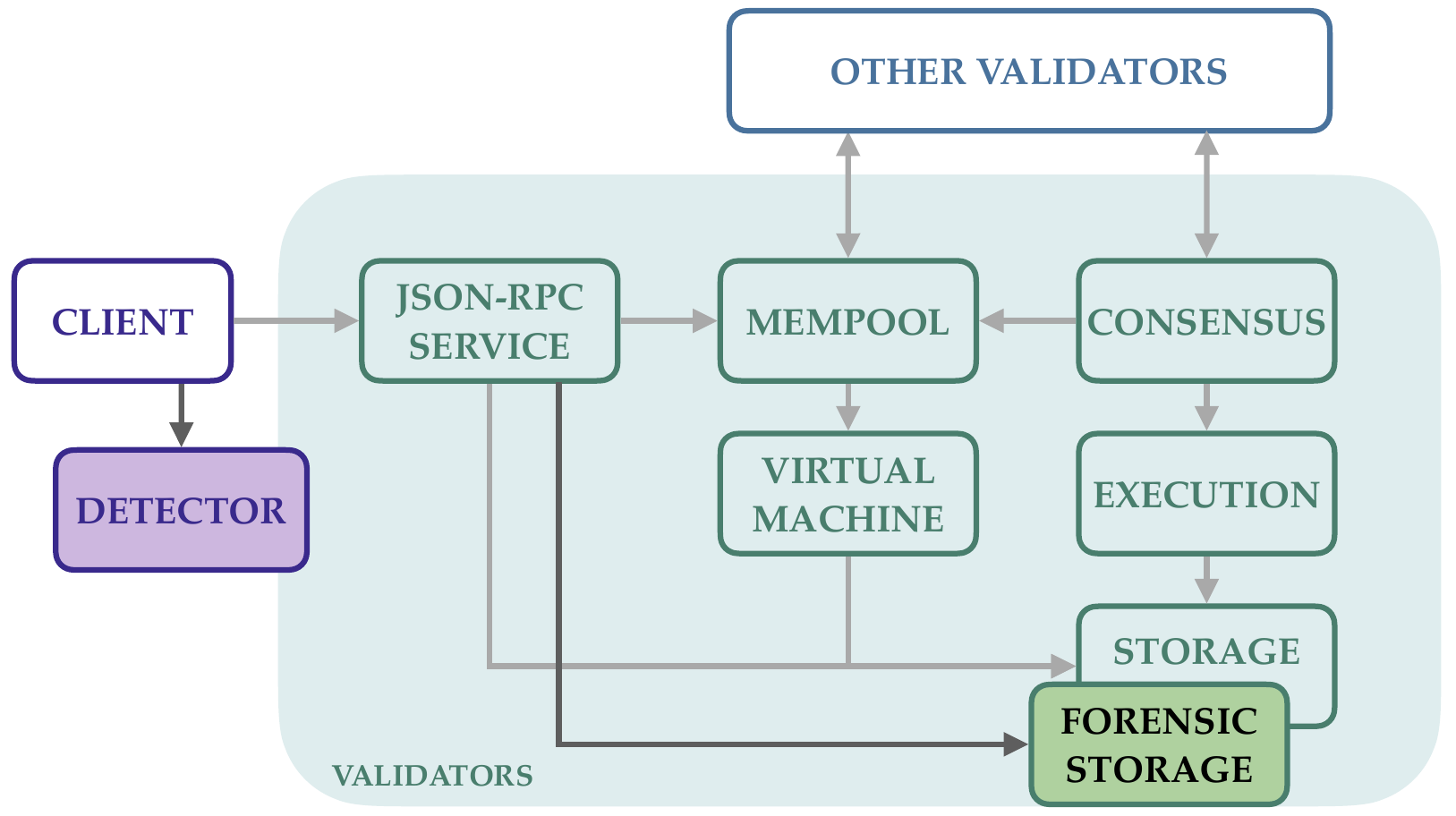}
    \caption{Forensic module integrated with {\sf Diem}.}
    \label{fig:forensic_store}
\end{figure}

\myparagraph{Testing using Twins~\cite{bano2020twins}.}
To test the correctness of forensic protocols, we build a testbed to simulate Byzantine attacks and construct different types of safety violations. Ideally, for modularity purposes, our testbed should not require us to modify the underlying consensus protocol to obtain Byzantine behavior. We leverage Twins~\cite{bano2020twins}, an approach to emulate Byzantine behaviors by running two instances of a node (i.e. replica) with the same identity. Consider a simple example setting with four nodes (denoted by $node0\sim 3$), where $node0$ and $node1$ are Byzantine so they have twins called $twin0$ and $twin1$. The network is split into two partitions, the first partition $P_1$ includes nodes $\{node0, node1, node2\}$ and the second partition $P_2$ includes nodes $\{twin0, twin1, node3\}$. Nodes in one partition can only receive the messages sent from the same partition. The double voting attack can be simulated when Byzantine leader proposes different valid blocks in the same view, and within each partition, all nodes will vote for the proposed block. The network partition is used to drop all messages sent from a set of nodes. However, it can only help construct the safety violation within the view. To construct more complicated attacks, we further improve the framework and introduce another operation called ``detailed drop'', which drops selected messages with specific types.

\begin{figure*}
    \centering
    \includegraphics[width=0.8\linewidth]{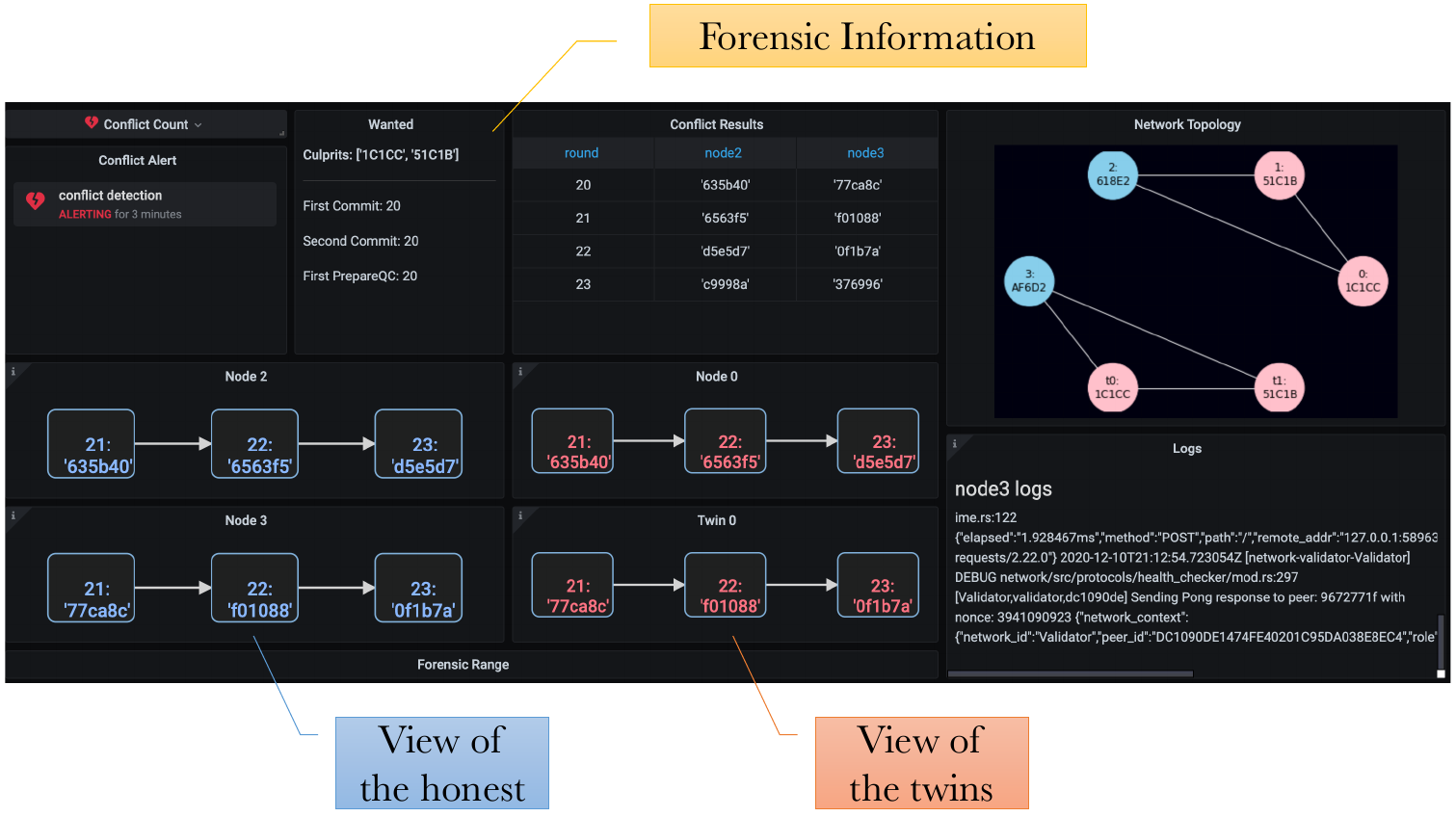}
    \caption{Forensic module dashboard.  }
    \label{fig:demo}
\end{figure*}
\myparagraph{Visualization.} The \detector accepts the registration of different views to get notified once the data is updated. We built  a dashboard to display the  information received by the detector and the analysis results output by the forensic protocol.  Figure~\ref{fig:demo} shows a snapshot of the dashboard   which displays  information about the network topology, hashes of latest blocks received at different validators, conflicting blocks, detected culprit keys and raw logs. Interactions with end-users,  including {\sf Diem} core-devs, has guided our design of the dashboard.

\section{Conclusion}
\label{sec:conclusion}
In this paper, we have embarked on a systematic study of the forensic properties of BFT protocols, focusing on 4 canonical examples: PBFT (classical), Hotstuff and VABA   (state-of-the-art protocols on partially synchronous and asynchronous network settings) and Algorand (popular protocol that is adaptable to proof of stake blockchains). Our results show that minor variations in the BFT protocols can have outsized impact on their forensic support. 

We exactly characterize  the forensic support of each protocol, parameterized by the triplet $(m,k,d)$.
The forensic support characterizations are remarkably similar across the protocols: if any non-trivial support is possible (i.e., at least one culpable replica can be implicated; $d> 0$), then the largest possible forensic support, $(2t, 1, t+1)$, is also possible; the one exception to this result is the Hotstuff-hash variant. Although the proof of forensic support is conducted for each protocol and its variant individually, we observe common trends: %make the following interesting observations: %there is substantial overlap among the proof ingredients, both in the achievable forensic support as well as the converse (impossibility results):  
\begin{itemize}
    \item For each of the protocols with strong forensic support, as a part of the protocol execution, there exist witnesses who hold signed messages from Byzantine parties indicating that they have not followed some rule in the protocol. % \gerui{The reason that these protocols have strong forensic support is that, in order to create a security breach, Byzantine parties \textit{have to} leave irrefutable evidence (signed messages) that can be used to prove their deviation from the protocol.}
    \item On the other hand, for protocols with no forensic support, the Byzantine parties are able to break safety without  leaving any evidence, although the mechanism to achieve this is different for each of PBFT-MAC, Algorand, and HotStuff-null. With PBFT-MAC, Byzantine parties are able to construct arbitrary transcripts due to the absence of signatures. Hence, message transcripts cannot be used as evidence. With Algorand, they can utilize a rule which relies on the absence of messages (under synchrony) to set an incorrect protocol state without leaving a trail. With HotStuff-null, due to the lack of links between messages across views, Byzantine parties can present fake message transcripts and thus, pretend to be honest. % With PBFT-MAC, the Byzantine parties utilize  the ability to construct arbitrary transcript (due to absence of signatures). With Algorand, they utilize a rule which relies on absence of messages (under synchrony) to set an incorrect protocol state without leaving a trail. With HotStuff-null, \kartik{...}% \pramod{1-2 sentences on commonality of impossibility results}
\end{itemize}
Conceptually, the  burning question is whether these common ingredients can be stitched together to form an overarching theory of   forensic support  for abstract families of secure BFT protocols: First, from an impossibility standpoint, is there a relationship between the need to use synchrony or the absence of PKI in a protocol  and absence of  forensic support? Second, for the positive results, can one  argue strong forensic support for an ``information-complete'' variant of any BFT protocol?  This is an active area of research. %\kartik{Conceptually, there are two burning questions based on these observations. First, from an impossibility standpoint, is there a relationship between the need to use synchrony or the absence of PKI in a protocol directly related to its forensic support? Second, for the positive results, can the common ingredients be stitched together to form an overarching theory of   forensic supports for abstract families of secure BFT protocols? This is an active area of research.}  

From a practical stand point, forensic analysis for existing blockchain protocols is of great interest; our forensic protocol for LibraBFT and its reference implementation has made strong inroads towards practical deployment. However, one shortcoming of the approach  in this paper is that forensic analysis is conducted   only upon {\em fatal} safety breaches. It is of great interest to conduct forensics with other forms of attacks: liveness attacks, censorship, small number of misbehaving replicas that impact performance.  We note that liveness attacks do not  afford, at the outset, undeniable complicity of malicious replicas and an important research direction is in formalizing weaker notions of culpability  proofs (perhaps including assumptions involving external forms of trust). 
This is an active area of research. %\sk{should treat liveness attacks separately (liveness and censorship attacks are not mild). We should state that liveness attacks do not afford the ability to provide unforgeable proof so we want to look at a weaker version of proof, or maybe utilize some external trust assumption. }
\section{Acknowledgement}
This research was partly supported by US Army Research Office Grant W911NF-18-1-0332, National Science Foundation CCF-1705007 and the XDC network. Kartik Nayak was supported in part by Novi and VMware gift research grant.

We thank the {\sf Diem} team for  implementation advice and Jovan Komatovic for helpful discussions about a forensic attack on HotStuff.
\bibliographystyle{ACM-Reference-Format}
\bibliography{acmart}

%%% -*-BibTeX-*-
%%% Do NOT edit. File created by BibTeX with style
%%% ACM-Reference-Format-Journals [18-Jan-2012].

\begin{thebibliography}{32}

%%% ====================================================================
%%% NOTE TO THE USER: you can override these defaults by providing
%%% customized versions of any of these macros before the \bibliography
%%% command.  Each of them MUST provide its own final punctuation,
%%% except for \shownote{}, \showDOI{}, and \showURL{}.  The latter two
%%% do not use final punctuation, in order to avoid confusing it with
%%% the Web address.
%%%
%%% To suppress output of a particular field, define its macro to expand
%%% to an empty string, or better, \unskip, like this:
%%%
%%% \newcommand{\showDOI}[1]{\unskip}   % LaTeX syntax
%%%
%%% \def \showDOI #1{\unskip}           % plain TeX syntax
%%%
%%% ====================================================================

\ifx \showCODEN    \undefined \def \showCODEN     #1{\unskip}     \fi
\ifx \showDOI      \undefined \def \showDOI       #1{#1}\fi
\ifx \showISBNx    \undefined \def \showISBNx     #1{\unskip}     \fi
\ifx \showISBNxiii \undefined \def \showISBNxiii  #1{\unskip}     \fi
\ifx \showISSN     \undefined \def \showISSN      #1{\unskip}     \fi
\ifx \showLCCN     \undefined \def \showLCCN      #1{\unskip}     \fi
\ifx \shownote     \undefined \def \shownote      #1{#1}          \fi
\ifx \showarticletitle \undefined \def \showarticletitle #1{#1}   \fi
\ifx \showURL      \undefined \def \showURL       {\relax}        \fi
% The following commands are used for tagged output and should be
% invisible to TeX
\providecommand\bibfield[2]{#2}
\providecommand\bibinfo[2]{#2}
\providecommand\natexlab[1]{#1}
\providecommand\showeprint[2][]{arXiv:#2}

\bibitem[\protect\citeauthoryear{Abraham, Malkhi, Nayak, Ren, and Yin}{Abraham
  et~al\mbox{.}}{2019b}]%
        {abraham2019sync}
\bibfield{author}{\bibinfo{person}{Ittai Abraham}, \bibinfo{person}{Dahlia
  Malkhi}, \bibinfo{person}{Kartik Nayak}, \bibinfo{person}{Ling Ren}, {and}
  \bibinfo{person}{Maofan Yin}.} \bibinfo{year}{2019}\natexlab{b}.
\newblock \showarticletitle{Sync hotstuff: Simple and practical synchronous
  state machine replication}.
\newblock \bibinfo{journal}{\emph{IACR Cryptology ePrint Archive}}
  \bibinfo{volume}{2019} (\bibinfo{year}{2019}), \bibinfo{pages}{270}.
\newblock


\bibitem[\protect\citeauthoryear{Abraham, Malkhi, and Spiegelman}{Abraham
  et~al\mbox{.}}{2019a}]%
        {abraham2019asymptotically}
\bibfield{author}{\bibinfo{person}{Ittai Abraham}, \bibinfo{person}{Dahlia
  Malkhi}, {and} \bibinfo{person}{Alexander Spiegelman}.}
  \bibinfo{year}{2019}\natexlab{a}.
\newblock \showarticletitle{Asymptotically optimal validated asynchronous
  byzantine agreement}. In \bibinfo{booktitle}{\emph{Proceedings of the 2019
  ACM Symposium on Principles of Distributed Computing}}.
  \bibinfo{pages}{337--346}.
\newblock


\bibitem[\protect\citeauthoryear{Abraham, Nayak, Ren, and Xiang}{Abraham
  et~al\mbox{.}}{2020}]%
        {abraham2020optimal}
\bibfield{author}{\bibinfo{person}{Ittai Abraham}, \bibinfo{person}{Kartik
  Nayak}, \bibinfo{person}{Ling Ren}, {and} \bibinfo{person}{Zhuolun Xiang}.}
  \bibinfo{year}{2020}\natexlab{}.
\newblock \showarticletitle{Optimal good-case latency for byzantine broadcast
  and state machine replication}.
\newblock \bibinfo{journal}{\emph{arXiv preprint arXiv:2003.13155}}
  (\bibinfo{year}{2020}).
\newblock


\bibitem[\protect\citeauthoryear{Anonymized}{Anonymized}{2020}]%
        {diemopensource}
\bibfield{author}{\bibinfo{person}{Anonymized}.}
  \bibinfo{year}{2020}\natexlab{}.
\newblock \bibinfo{title}{Forensic Module for Diem}.
\newblock
\newblock
\urldef\tempurl%
\url{https://github.com/BFTForensics/DiemForensics}
\showURL{%
\tempurl}


\bibitem[\protect\citeauthoryear{Association}{Association}{2020}]%
        {diem}
\bibfield{author}{\bibinfo{person}{Diem Association}.}
  \bibinfo{year}{2020}\natexlab{}.
\newblock \bibinfo{title}{Diem}.
\newblock
\newblock
\urldef\tempurl%
\url{https://github.com/diem/diem}
\showURL{%
\tempurl}


\bibitem[\protect\citeauthoryear{Bano, Sonnino, Chursin, Perelman, and
  Malkhi}{Bano et~al\mbox{.}}{2020}]%
        {bano2020twins}
\bibfield{author}{\bibinfo{person}{Shehar Bano}, \bibinfo{person}{Alberto
  Sonnino}, \bibinfo{person}{Andrey Chursin}, \bibinfo{person}{Dmitri
  Perelman}, {and} \bibinfo{person}{Dahlia Malkhi}.}
  \bibinfo{year}{2020}\natexlab{}.
\newblock \showarticletitle{Twins: White-Glove Approach for BFT Testing}.
\newblock \bibinfo{journal}{\emph{arXiv preprint arXiv:2004.10617}}
  (\bibinfo{year}{2020}).
\newblock


\bibitem[\protect\citeauthoryear{Boneh, Drijvers, and Neven}{Boneh
  et~al\mbox{.}}{2018}]%
        {bls}
\bibfield{author}{\bibinfo{person}{Dan Boneh}, \bibinfo{person}{Manu Drijvers},
  {and} \bibinfo{person}{Gregory Neven}.} \bibinfo{year}{2018}\natexlab{}.
\newblock \showarticletitle{Compact multi-signatures for smaller blockchains}.
  In \bibinfo{booktitle}{\emph{International Conference on the Theory and
  Application of Cryptology and Information Security}}. Springer,
  \bibinfo{pages}{435--464}.
\newblock


\bibitem[\protect\citeauthoryear{Buterin and Griffith}{Buterin and
  Griffith}{2017}]%
        {casper}
\bibfield{author}{\bibinfo{person}{Vitalik Buterin} {and}
  \bibinfo{person}{Virgil Griffith}.} \bibinfo{year}{2017}\natexlab{}.
\newblock \showarticletitle{Casper the Friendly Finality Gadget}.
\newblock \bibinfo{journal}{\emph{arXiv preprint arXiv:1710.09437}}
  (\bibinfo{year}{2017}).
\newblock


\bibitem[\protect\citeauthoryear{Cachin, Kursawe, and Shoup}{Cachin
  et~al\mbox{.}}{2005}]%
        {cachin2005random}
\bibfield{author}{\bibinfo{person}{Christian Cachin}, \bibinfo{person}{Klaus
  Kursawe}, {and} \bibinfo{person}{Victor Shoup}.}
  \bibinfo{year}{2005}\natexlab{}.
\newblock \showarticletitle{Random oracles in Constantinople: Practical
  asynchronous Byzantine agreement using cryptography}.
\newblock \bibinfo{journal}{\emph{Journal of Cryptology}} \bibinfo{volume}{18},
  \bibinfo{number}{3} (\bibinfo{year}{2005}), \bibinfo{pages}{219--246}.
\newblock


\bibitem[\protect\citeauthoryear{Castro and Liskov}{Castro and Liskov}{2002}]%
        {castro2002practical}
\bibfield{author}{\bibinfo{person}{Miguel Castro} {and}
  \bibinfo{person}{Barbara Liskov}.} \bibinfo{year}{2002}\natexlab{}.
\newblock \showarticletitle{Practical Byzantine fault tolerance and proactive
  recovery}.
\newblock \bibinfo{journal}{\emph{ACM Transactions on Computer Systems (TOCS)}}
  \bibinfo{volume}{20}, \bibinfo{number}{4} (\bibinfo{year}{2002}),
  \bibinfo{pages}{398--461}.
\newblock


\bibitem[\protect\citeauthoryear{Castro, Liskov, et~al\mbox{.}}{Castro
  et~al\mbox{.}}{1999}]%
        {castro1999practical}
\bibfield{author}{\bibinfo{person}{Miguel Castro}, \bibinfo{person}{Barbara
  Liskov}, {et~al\mbox{.}}} \bibinfo{year}{1999}\natexlab{}.
\newblock \showarticletitle{Practical Byzantine fault tolerance}. In
  \bibinfo{booktitle}{\emph{OSDI}}, Vol.~\bibinfo{volume}{99}.
  \bibinfo{pages}{173--186}.
\newblock


\bibitem[\protect\citeauthoryear{Chan and Shi}{Chan and Shi}{2020}]%
        {chan2020streamlet}
\bibfield{author}{\bibinfo{person}{Benjamin~Y Chan} {and}
  \bibinfo{person}{Elaine Shi}.} \bibinfo{year}{2020}\natexlab{}.
\newblock \showarticletitle{Streamlet: Textbook Streamlined Blockchains.}
\newblock \bibinfo{journal}{\emph{IACR Cryptol. ePrint Arch.}}
  \bibinfo{volume}{2020} (\bibinfo{year}{2020}), \bibinfo{pages}{88}.
\newblock


\bibitem[\protect\citeauthoryear{Chen, Gorbunov, Micali, and Vlachos}{Chen
  et~al\mbox{.}}{2018}]%
        {chen2018algorand}
\bibfield{author}{\bibinfo{person}{Jing Chen}, \bibinfo{person}{Sergey
  Gorbunov}, \bibinfo{person}{Silvio Micali}, {and} \bibinfo{person}{Georgios
  Vlachos}.} \bibinfo{year}{2018}\natexlab{}.
\newblock \showarticletitle{ALGORAND AGREEMENT: Super Fast and Partition
  Resilient Byzantine Agreement.}
\newblock \bibinfo{journal}{\emph{IACR Cryptol. ePrint Arch.}}
  \bibinfo{volume}{2018} (\bibinfo{year}{2018}), \bibinfo{pages}{377}.
\newblock


\bibitem[\protect\citeauthoryear{Chen and Micali}{Chen and Micali}{2019}]%
        {chen2019algorand}
\bibfield{author}{\bibinfo{person}{Jing Chen} {and} \bibinfo{person}{Silvio
  Micali}.} \bibinfo{year}{2019}\natexlab{}.
\newblock \showarticletitle{Algorand: A secure and efficient distributed
  ledger}.
\newblock \bibinfo{journal}{\emph{Theoretical Computer Science}}
  \bibinfo{volume}{777} (\bibinfo{year}{2019}), \bibinfo{pages}{155--183}.
\newblock


\bibitem[\protect\citeauthoryear{Civit, Gilbert, and Gramoli}{Civit
  et~al\mbox{.}}{2019}]%
        {civit2019polygraph}
\bibfield{author}{\bibinfo{person}{Pierre Civit}, \bibinfo{person}{Seth
  Gilbert}, {and} \bibinfo{person}{Vincent Gramoli}.}
  \bibinfo{year}{2019}\natexlab{}.
\newblock \showarticletitle{Polygraph: Accountable Byzantine Agreement.}
\newblock \bibinfo{journal}{\emph{IACR Cryptol. ePrint Arch.}}
  \bibinfo{volume}{2019} (\bibinfo{year}{2019}), \bibinfo{pages}{587}.
\newblock


\bibitem[\protect\citeauthoryear{contributors}{contributors}{2020}]%
        {thedayafter}
\bibfield{author}{\bibinfo{person}{Wikipedia contributors}.}
  \bibinfo{year}{2020}\natexlab{}.
\newblock \bibinfo{title}{"The Day After --- {W}ikipedia{,} The Free
  Encyclopedia"}.
\newblock
\newblock
\urldef\tempurl%
\url{https://en.wikipedia.org/wiki/The_Day_After}
\showURL{%
\tempurl}
\newblock
\shownote{Online; accessed 23 September 2020}.


\bibitem[\protect\citeauthoryear{Dwork, Lynch, and Stockmeyer}{Dwork
  et~al\mbox{.}}{1988}]%
        {dwork1988consensus}
\bibfield{author}{\bibinfo{person}{Cynthia Dwork}, \bibinfo{person}{Nancy
  Lynch}, {and} \bibinfo{person}{Larry Stockmeyer}.}
  \bibinfo{year}{1988}\natexlab{}.
\newblock \showarticletitle{Consensus in the presence of partial synchrony}.
\newblock \bibinfo{journal}{\emph{Journal of the ACM (JACM)}}
  \bibinfo{volume}{35}, \bibinfo{number}{2} (\bibinfo{year}{1988}),
  \bibinfo{pages}{288--323}.
\newblock


\bibitem[\protect\citeauthoryear{Gilad, Hemo, Micali, Vlachos, and
  Zeldovich}{Gilad et~al\mbox{.}}{2017}]%
        {gilad2017algorand}
\bibfield{author}{\bibinfo{person}{Yossi Gilad}, \bibinfo{person}{Rotem Hemo},
  \bibinfo{person}{Silvio Micali}, \bibinfo{person}{Georgios Vlachos}, {and}
  \bibinfo{person}{Nickolai Zeldovich}.} \bibinfo{year}{2017}\natexlab{}.
\newblock \showarticletitle{Algorand: Scaling byzantine agreements for
  cryptocurrencies}. In \bibinfo{booktitle}{\emph{Proceedings of the 26th
  Symposium on Operating Systems Principles}}. \bibinfo{pages}{51--68}.
\newblock


\bibitem[\protect\citeauthoryear{Haeberlen, Kouznetsov, and Druschel}{Haeberlen
  et~al\mbox{.}}{2007}]%
        {haeberlen2007peerreview}
\bibfield{author}{\bibinfo{person}{Andreas Haeberlen}, \bibinfo{person}{Petr
  Kouznetsov}, {and} \bibinfo{person}{Peter Druschel}.}
  \bibinfo{year}{2007}\natexlab{}.
\newblock \showarticletitle{PeerReview: Practical accountability for
  distributed systems}.
\newblock \bibinfo{journal}{\emph{ACM SIGOPS operating systems review}}
  \bibinfo{volume}{41}, \bibinfo{number}{6} (\bibinfo{year}{2007}),
  \bibinfo{pages}{175--188}.
\newblock


\bibitem[\protect\citeauthoryear{Haeberlen and Kuznetsov}{Haeberlen and
  Kuznetsov}{2009}]%
        {haeberlen2009fault}
\bibfield{author}{\bibinfo{person}{Andreas Haeberlen} {and}
  \bibinfo{person}{Petr Kuznetsov}.} \bibinfo{year}{2009}\natexlab{}.
\newblock \showarticletitle{The fault detection problem}. In
  \bibinfo{booktitle}{\emph{International Conference On Principles Of
  Distributed Systems}}. Springer, \bibinfo{pages}{99--114}.
\newblock


\bibitem[\protect\citeauthoryear{Kane, Fackler, G{\k{a}}gol, Straszak, and
  Zamfir}{Kane et~al\mbox{.}}{2021}]%
        {kane2021highway}
\bibfield{author}{\bibinfo{person}{Daniel Kane}, \bibinfo{person}{Andreas
  Fackler}, \bibinfo{person}{Adam G{\k{a}}gol}, \bibinfo{person}{Damian
  Straszak}, {and} \bibinfo{person}{Vlad Zamfir}.}
  \bibinfo{year}{2021}\natexlab{}.
\newblock \showarticletitle{Highway: Efficient Consensus with Flexible
  Finality}.
\newblock \bibinfo{journal}{\emph{arXiv preprint arXiv:2101.02159}}
  (\bibinfo{year}{2021}).
\newblock


\bibitem[\protect\citeauthoryear{Li and Mazi{\'e}res}{Li and
  Mazi{\'e}res}{2007}]%
        {li2007beyond}
\bibfield{author}{\bibinfo{person}{Jinyuan Li} {and} \bibinfo{person}{David
  Mazi{\'e}res}.} \bibinfo{year}{2007}\natexlab{}.
\newblock \showarticletitle{Beyond One-Third Faulty Replicas in Byzantine Fault
  Tolerant Systems.}. In \bibinfo{booktitle}{\emph{NSDI}}.
\newblock


\bibitem[\protect\citeauthoryear{Malkhi, Nayak, and Ren}{Malkhi
  et~al\mbox{.}}{2019}]%
        {malkhi2019flexible}
\bibfield{author}{\bibinfo{person}{Dahlia Malkhi}, \bibinfo{person}{Kartik
  Nayak}, {and} \bibinfo{person}{Ling Ren}.} \bibinfo{year}{2019}\natexlab{}.
\newblock \showarticletitle{Flexible byzantine fault tolerance}. In
  \bibinfo{booktitle}{\emph{Proceedings of the 2019 ACM SIGSAC Conference on
  Computer and Communications Security}}. \bibinfo{pages}{1041--1053}.
\newblock


\bibitem[\protect\citeauthoryear{Micali}{Micali}{2018}]%
        {micali2018byzantine}
\bibfield{author}{\bibinfo{person}{Silvio Micali}.}
  \bibinfo{year}{2018}\natexlab{}.
\newblock \bibinfo{title}{Byzantine agreement, made trivial}.
\newblock
\newblock


\bibitem[\protect\citeauthoryear{Micali, Rabin, and Vadhan}{Micali
  et~al\mbox{.}}{1999}]%
        {micali1999verifiable}
\bibfield{author}{\bibinfo{person}{Silvio Micali}, \bibinfo{person}{Michael
  Rabin}, {and} \bibinfo{person}{Salil Vadhan}.}
  \bibinfo{year}{1999}\natexlab{}.
\newblock \showarticletitle{Verifiable random functions}. In
  \bibinfo{booktitle}{\emph{40th annual symposium on foundations of computer
  science (cat. No. 99CB37039)}}. IEEE, \bibinfo{pages}{120--130}.
\newblock


\bibitem[\protect\citeauthoryear{Neu, Tas, and Tse}{Neu et~al\mbox{.}}{2020}]%
        {neu2020ebb}
\bibfield{author}{\bibinfo{person}{Joachim Neu}, \bibinfo{person}{Ertem~Nusret
  Tas}, {and} \bibinfo{person}{David Tse}.} \bibinfo{year}{2020}\natexlab{}.
\newblock \showarticletitle{Ebb-and-Flow Protocols: A Resolution of the
  Availability-Finality Dilemma}.
\newblock \bibinfo{journal}{\emph{arXiv preprint arXiv:2009.04987}}
  (\bibinfo{year}{2020}).
\newblock


\bibitem[\protect\citeauthoryear{Ramasamy and Cachin}{Ramasamy and
  Cachin}{2005}]%
        {ramasamy2005parsimonious}
\bibfield{author}{\bibinfo{person}{HariGovind~V Ramasamy} {and}
  \bibinfo{person}{Christian Cachin}.} \bibinfo{year}{2005}\natexlab{}.
\newblock \showarticletitle{Parsimonious asynchronous byzantine-fault-tolerant
  atomic broadcast}. In \bibinfo{booktitle}{\emph{International Conference on
  Principles of Distributed Systems}}. Springer, \bibinfo{pages}{88--102}.
\newblock


\bibitem[\protect\citeauthoryear{Ranchal-Pedrosa and Gramoli}{Ranchal-Pedrosa
  and Gramoli}{2020}]%
        {ranchal2020blockchain}
\bibfield{author}{\bibinfo{person}{Alejandro Ranchal-Pedrosa} {and}
  \bibinfo{person}{Vincent Gramoli}.} \bibinfo{year}{2020}\natexlab{}.
\newblock \showarticletitle{Blockchain Is Dead, Long Live Blockchain!
  Accountable State Machine Replication for Longlasting Blockchain}.
\newblock \bibinfo{journal}{\emph{arXiv preprint arXiv:2007.10541}}
  (\bibinfo{year}{2020}).
\newblock


\bibitem[\protect\citeauthoryear{Stewart and Kokoris-Kogia}{Stewart and
  Kokoris-Kogia}{2020}]%
        {stewart2020grandpa}
\bibfield{author}{\bibinfo{person}{Alistair Stewart} {and}
  \bibinfo{person}{Eleftherios Kokoris-Kogia}.}
  \bibinfo{year}{2020}\natexlab{}.
\newblock \showarticletitle{GRANDPA: a Byzantine finality gadget}.
\newblock \bibinfo{journal}{\emph{arXiv preprint arXiv:2007.01560}}
  (\bibinfo{year}{2020}).
\newblock


\bibitem[\protect\citeauthoryear{Team}{Team}{2020}]%
        {librabft}
\bibfield{author}{\bibinfo{person}{The~LibraBFT Team}.}
  \bibinfo{year}{2020}\natexlab{}.
\newblock \bibinfo{title}{State Machine Replication in the Diem Blockchain}.
\newblock
\newblock
\urldef\tempurl%
\url{https://developers.diem.com/docs/technical-papers/state-machine-replication-paper/}
\showURL{%
\tempurl}


\bibitem[\protect\citeauthoryear{Xiang, Malkhi, Nayak, and Ren}{Xiang
  et~al\mbox{.}}{2021}]%
        {xiang2021strengthened}
\bibfield{author}{\bibinfo{person}{Zhuolun Xiang}, \bibinfo{person}{Dahlia
  Malkhi}, \bibinfo{person}{Kartik Nayak}, {and} \bibinfo{person}{Ling Ren}.}
  \bibinfo{year}{2021}\natexlab{}.
\newblock \showarticletitle{Strengthened Fault Tolerance in Byzantine Fault
  Tolerant Replication}.
\newblock \bibinfo{journal}{\emph{arXiv preprint arXiv:2101.03715}}
  (\bibinfo{year}{2021}).
\newblock


\bibitem[\protect\citeauthoryear{Yin, Malkhi, Reiter, Gueta, and Abraham}{Yin
  et~al\mbox{.}}{2019}]%
        {yin2019hotstuff}
\bibfield{author}{\bibinfo{person}{Maofan Yin}, \bibinfo{person}{Dahlia
  Malkhi}, \bibinfo{person}{Michael~K Reiter}, \bibinfo{person}{Guy~Golan
  Gueta}, {and} \bibinfo{person}{Ittai Abraham}.}
  \bibinfo{year}{2019}\natexlab{}.
\newblock \showarticletitle{Hotstuff: Bft consensus with linearity and
  responsiveness}. In \bibinfo{booktitle}{\emph{Proceedings of the 2019 ACM
  Symposium on Principles of Distributed Computing}}.
  \bibinfo{pages}{347--356}.
\newblock


\end{thebibliography}
\appendix
\section{VABA has Strong Forensic Support}
\label{sec:vaba}
Validated Asynchronous Byzantine Agreement (VABA) is a  state-of-the-art protocol~\cite{abraham2019asymptotically} in the asynchronous setting with asymptotically optimal $O(n^2)$ communication complexity and expected $O(1)$ latency for $n \geq 3t+1$. % and borrows the leader-based voting paradigm from HotStuff  adapting it to asynchronous network environments.

\begin{algorithm}[b]
\begin{algorithmic}[1]
\As{a replica running VABA}
    \ForInitialize{$e\ge 1$}
        \For{$j\in[n]$}
        \State $\pflock[e][j]\gets \{\}$
        \State $\coin[e]\gets \{\}$
        \EndFor
    \EndForInitialize
    \Upon{$\lr{i,\nv,e,v,L}$ in view $e$ in \pp instance $i$}
        \State $(e', v', \sigma, \eqc)\gets L $\algorithmiccomment{Note that $L$ has selectors $e,v,\sigma,\eqc$} 
        \State $\pflock[e'][i] \gets  \pflock[e'][i]\cup \{(v', \sigma, \eqc)\}$\label{alg:vaba-6}
    \EndUpon
    \Upon{$\lr{i,\prepare,e,v,\sigma,\eqc}$ in view $e$ in \pp instance $i$}
        \State $\pflock[e][i]\gets \pflock[e][i]\cup \{(v, \sigma,\eqc)\}$\label{alg:vaba-6-2}
    \EndUpon
    \If{$Leader[e]$ is elected}
    \State discard $\pflock[e][j]$ for $j\neq Leader[e]$\label{alg:vaba-12}
    \State $\coin[e]\gets$ inputs to $\threscoin$ for electing $Leader[e]$\label{alg:vaba-13}
    \EndIf
    \Upon{$\lr{\vc,e,\qctwo,$ $\qcprecom,\qcthree}$ in view $e$}
        \State $(e', v', \sigma, \eqc)\gets\qctwo$\algorithmiccomment{Note that \qctwo has selectors $e,v,\sigma,\eqc$} 
        \State $\pflock[e'][Leader[e']]\gets \pflock[e'][Leader[e']]\cup \{(v', \sigma,\eqc)\}$\label{alg:vaba-6-3}
    \EndUpon
    \Upon{$\lr{\requestproofleader, e, e'}$ from a client}
        \For{all $e\le e^*\le e'$}
            \State send $\lr{\coinmsg, e^*, Leader[e^*], \coin[e^*]}$ to client
        \EndFor
    \EndUpon
    \Upon{$\lr{\conflictacross, e, v, \sigma, e'}$ with a collection of $LeaderMsg$ from a client}
        \For{all $e<e^\#\le e'$}
            \If{$Leader[e^\#]$ is not elected yet}
            \State $\lr{\coinmsg, e^\#, leader, \coin}\gets LeaderMsg$ of view $e^\#$
            \State check $leader$ is the leader generated by $\coin$ in view $e^\#$ (otherwise don't reply to the client)
            \State $Leader[e^\#]\gets leader$
            \EndIf
            \For{$qc\in \pflock[e^\#][Leader[e^\#]]$}
            \If{$(qc.v\neq v)\land (qc.\eqc\le e)$}
                \State send $\lr{\proofacross, e^\#, Leader[e^\#], qc}$ to the client%\label{alg:vaba-12}
            \EndIf
            \EndFor
        \EndFor
    \EndUpon
\EndAs{}
\algstore{vaba}
\end{algorithmic}
\caption{Forensic protocol for VABA}
\label{alg:fa-vaba}
\end{algorithm}
\addtocounter{algorithm}{-1}
\begin{algorithm}[htbp]
\begin{algorithmic}[1]
\algrestore{vaba}
\As{a client}
    \Upon{two conflicting \reply messages}
        \State $e \gets$ the view number of \reply  from lower view
        \State $e'\gets$ the view number of \reply  from higher view
        \ForInitialize{all $e\le e^*\le e'$}
            \State $Leader[e^*]\gets \{\}$
            \State $LeaderMsg[e^*]\gets \{\}$
        \EndForInitialize
        \State send $\lr{\requestproofleader, e, e'}$  to the replica of \reply message from higher view\label{alg:vaba-request-proof-leader-1}
        \For{all $e\le e^*\le e'$}
        \State wait for $\lr{\coinmsg, e^*, leader, \coin}$ s.t. $leader$ is the leader generated by $\coin$ in view $e^*$ (otherwise the \reply message is not considered valid)
        \State $Leader[e^*]\gets leader$
        \State $LeaderMsg[e^*]\gets \lr{\coinmsg, e^*, leader, \coin}$
        \EndFor\label{alg:vaba-request-proof-leader-2}
        \If{the two \reply messages are from different views}\label{alg:vaba-28}
            \State $\lr{i,\reply,e,v,\sigma}\gets$ the message from lower view
            \State $\lr{i',\reply,e',v',\sigma'}\gets$ the message from higher view
            \State check $i=Leader[e]$ and $i'=Leader[e']$ (otherwise the \reply message is not considered valid)\label{alg:vaba-verify-leader-1}
            \State broadcast $\lr{\conflictacross, e, v, \sigma, e'}$ with $LeaderMsg[e^*]$ for all $e<e^*\le e'$ 
            \State wait for  $\lr{\proofacross, e^\#,leader, qc}$  s.t.
            \begin{enumerate}[leftmargin=1in]
                \item[(1)] $e<e^\#\le e'$, and
                \item[(2)] $(qc.v\neq v)\land (qc.\eqc\le e)$, and
                \item[(3)] $leader=Leader[e^\#]$\label{alg:vaba-verify-leader-2}% is the leader generated by $\coin$ in view $e^\#$
            \end{enumerate}
            \State \textbf{output} $qc.\sigma \cap \sigma$
        \Else
            \State $\lr{i,\reply,e,v,\sigma}\gets$ first \reply message
            \State $\lr{i',\reply,e,v',\sigma'}\gets$ second \reply message
            \State check $i=i'=Leader[e]$ (otherwise the \reply message is not considered valid)\label{alg:vaba-verify-leader-3}
            \State \textbf{output} $\sigma \cap \sigma'$
        \EndIf\label{alg:vaba-end}
    \EndUpon
\EndAs{}
\end{algorithmic}
\caption{Forensic protocol for VABA}
\label{alg:fa-vaba-second}
\end{algorithm}

\subsection{Overview}
At a high-level, the VABA protocol adapts HotStuff to the asynchronous setting. There are three phases in the protocol:
\begin{itemize}[leftmargin=*, topsep=0pt]
	\item[-] \textbf{Proposal promotion.} In this stage, each of the $n$ replicas run $n$ parallel HotStuff-like instances, where replica $i$ acts as the leader within instance $i$.
    \item[-] \textbf{Leader election.} 
		After finishing the previous stage, replicas run a leader election protocol using a \threscoin primitive \cite{cachin2005random} to randomly elect the leader of this view, denoted as $Leader[e]$ where $e$ is a view number. At the end of the view, replicas adopt the ``progress'' from $Leader[e]$’s proposal promotion instance, and discard values from other instances.
	\item[-] \textbf{View change.} Replicas broadcast quorum certificates from $Leader[e]$’s proposal promotion instance and update local variables and/or output value accordingly.
\end{itemize}

Within a proposal promotion stage, the guarantees provided are the same as that of HotStuff, and hence we do not repeat it here. The leader election phase elects a unique leader at random -- this stage guarantees (i) with $\geq 2/3$ probability, an honest leader is elected, and (ii) an adaptive adversary cannot stall progress (since a leader is elected in hindsight). Finally, in the view-change phase, every replica broadcasts the elected leader's quorum certificates to all replicas.

\subsection{Forensic Support for VABA}
The key difference between forensic support for HotStuff and VABA is the presence of the leader election stage -- every replica/client needs to know \emph{which} replica was elected as the leader in each view. Importantly, the \threscoin  primitive ensures that there is a unique leader elected for each view. % \kartik{check: for simplicity, we should just set the coin threshold to $k = 2t+1$? check Constantinople paper.}\gerui{if $f>t$, unpredictability doesn't hold. but unique (robustness) holds, according to Constantinople paper.} 
Thus, the forensic analysis boils down to performing an analysis similar to the HotStuff protocol, except that the leader of a view is described by the leader election phase.
 
We present the full forensic protocol in Algorithm~\ref{alg:fa-vaba} for completeness. We make the following changes to VABA:
\begin{itemize}
    \item \textbf{Storing information for forensics.} Each replica maintains a list of \emph{ledgers} for all instances, containing all received \qctwo from \prepare messages, \nv messages, and \vc messages (lines~\ref{alg:vaba-6},\ref{alg:vaba-6-2},\ref{alg:vaba-6-3}). When the leader of a view is elected, a replica keeps the \pflock from the leader's instance and discards others (line~\ref{alg:vaba-12}). A replica also stores the random coins from the leader election phase for client verification (line~\ref{alg:vaba-13}).
    \item \textbf{Bringing proposal promotion closer to \hsa.} There are minor differences in the proposal promotion phase of VABA~\cite{abraham2019asymptotically} to the description in HotStuff (\S\ref{sec:basic-hotstuff}). We make this phase similar to that in the description of our HotStuff protocol with forensic support. In particular: (i) the $LOCK$ variable stores both the view number and the value (denoted by $LOCK.e$ and $LOCK.v$), (ii) the voting rule in a proposal promotion phase is: vote if $KEY$ has view and value equal to $LOCK$, except when $KEY$'s view is strictly higher than $LOCK.e$, (iii) assume a replica's own \vc message arrives first so that others' \vc messages do not overwrite local variables $KEY$ and $LOCK$, and (iv) add $\eqc$ into \vone. % \kartik{add appropriate line numbers here if they are from forensic algorithm}
    % \kartik{if we use threshold $2t+1$, we need to mention that.}
\end{itemize}

A client first verifies leader election (lines~\ref{alg:vaba-request-proof-leader-1}-\ref{alg:vaba-request-proof-leader-2}). Then, it follows steps similar to the HotStuff forensic protocol (lines~\ref{alg:vaba-28}-\ref{alg:vaba-end}) except that there are added checks pertaining to leader elections (lines~\ref{alg:vaba-verify-leader-1},\ref{alg:vaba-verify-leader-2},\ref{alg:vaba-verify-leader-3}).

We prove the forensic support in Theorem~\ref{thm:vaba}.
\begin{theorem}
\label{thm:vaba}
For $n=3t+1$, when $f > t$, if two honest replicas output conflicting values, VABA protocol provides $(2t,\ 1,\ t+1)$-forensic support. Further $(2t+1,\ n-f,\ d)$-forensic support is impossible with $d>0$.
\end{theorem}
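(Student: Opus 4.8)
The plan is to reduce the forward direction to the forensic analysis already carried out for \hsa (Theorem~\ref{thm:hotstuff_a}), isolating the one genuinely new ingredient --- the leader-election phase --- and then to dispatch the converse by the same indistinguishability argument as in \S\ref{sec:proofoftheorem4.1}. For the forward claim that VABA provides $(2t,1,t+1)$-forensic support, I would first observe that in each view $e$ the \threscoin primitive elects a \emph{unique} $Leader[e]$, and that the surviving $\pflock$ after leader election is exactly $Leader[e]$'s proposal-promotion instance, which behaves identically to a single HotStuff view. Suppose two honest replicas output conflicting $v,v'$ in views $e\le e'$. In the same-view case $e=e'$, both outputs carry a \qcthree from $Leader[e]$'s instance; these quorums intersect in $t+1$ replicas, each of whom voted twice in one view and is therefore Byzantine, and the two \reply messages alone suffice with no extra witness. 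In the cross-view case $e<e'$, I would mirror the \hsa argument: since $v$ is output in view $e$, at least $2t+1$ replicas are locked on $(v,e)$; let $e^\#$ be the first view in $(e,e']$ in which a higher \qctwo for some $v''\ne v$ is formed in the elected leader's instance (such a view exists because a lock, hence a \qctwo, must form by view $e'$). Because $e^\#$ is the first such view, the \hqc carried in its \nv has view $\eqc\le e$, so the $2t+1$ signers of this \qctwo intersect the $2t+1$ signers of the view-$e$ \qcthree in $t+1$ replicas who violated the voting rule and are thus culpable.

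The essential new step, relative to HotStuff, is that a client cannot take the leader of each intermediate view for granted: it must confirm that the incriminating \qctwo really belongs to the instance of the \emph{legitimately elected} $Leader[e^\#]$. I would handle this exactly as in Algorithm~\ref{alg:fa-vaba}: the client requests the stored threshold-coin inputs $\coin[e^*]$ for every $e^*\in(e,e']$, recomputes the coin output, and checks that the claimed leader is the one the coin designates. Since the \threscoin output is unique and publicly verifiable, this renders the leader identity --- and hence the attribution of the \qctwo to a specific instance --- irrefutable to any third party. For the witness count, I would argue as in \hsa that a single honest replica (concretely, the higher-view replier) holds both the offending \qctwo (which carries $\eqc\le e$, so no \nv preimage is needed, in contrast to \hsb) and the coin proofs for all intermediate views, giving $k=1$. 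The bound $m\le 2t$ enters here for the same reason as in HotStuff: when $f\le 2t$ every \qctwo contains at least one honest signer, so an honest witness to the first higher \qctwo is guaranteed to exist.

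For the converse, I would invoke the two-worlds construction of \S\ref{sec:proofoftheorem4.1}: when $m=2t+1$ the adversary controls a full quorum of $2t+1$ replicas, so it can internally simulate an entire proposal-promotion-and-commit sequence on a conflicting value while the $n-f=t$ honest replicas are driven to the other output, and then reveal nothing. Two executions with disjoint Byzantine sets but identical honest transcripts then make any nonzero attribution incorrect in one of them, so $d=0$ even with $k=n-f$. The only VABA-specific point to verify is that leader election grants no extra forensic leverage: a $2t+1$-sized Byzantine quorum can itself produce the threshold-coin shares for its shadow view, so the elected-leader machinery is reproducible inside the simulated world and leaves no additional evidence. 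The main obstacle I anticipate lies precisely in the forward direction's leader-verification step --- making watertight that the uniqueness and public verifiability of the \threscoin output let a \emph{single} witness certify both the leader identity of $e^\#$ and the offending \qctwo, so that the asynchronous, $n$-parallel-instance structure of VABA neither inflates $k$ beyond $1$ nor lets a Byzantine leader equivocate on which instance ``counts.''
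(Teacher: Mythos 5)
Your proposal matches the paper's proof: the forward direction reduces to the HotStuff-view analysis (Theorem~\ref{thm:hotstuff_a}) once the unique, publicly verifiable \threscoin output pins down the leader of each intermediate view, and the converse is the two-worlds indistinguishability argument of \S\ref{sec:proofoftheorem4.1}. The one slip is your identification of the single witness: the higher-view replier need not hold the first offending \qctwo; in the paper that replica only supplies the leader-election proofs (as part of validating its \reply), the client relays those proofs to all replicas, and a possibly different honest signer of the first higher \qctwo---guaranteed to exist since $f\le 2t$---serves as the one counted witness, which is how $k=1$ is actually obtained.
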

\begin{proof} We prove the forward part of the theorem below. The proof of converse (impossibility) is the same as \S\ref{sec:proofoftheorem4.1}.

The leader of a view is determined by the threshold coin-tossing primitive $\threscoin$ and Byzantine replicas cannot  forge the result of a leader election by the robustness property of the threshold coin. % \kartik{if we use Constantinople with k = 2t+1, then it is just a regular use of coin?} 
Suppose two conflicting outputs happen in view $e,e'$ with $e\le e'$. The replica who outputs in view $e'$ has access to the proof of leader election of all views $\le e'$. Therefore, a client can verify the leader election when it receives messages from this replica. Even if other replicas have not received messages corresponding to the elections in views $\le e'$, the client can send the proof of leader to them. The remaining forensic support proof follows from Theorem~\ref{thm:hotstuff_a} in a straightforward manner, where any witness will receive the proof of leader from the client (if leader is not elected) and send the proof of culprits to the client.
\end{proof}

\myparagraph{Communication complexity.} The client needs to first receive all leader election results from view $e$ to view $e'$, and each result is of size $|coin|$ (the size of the coin in the \threscoin primitive). Then, the client shares leader election results with all replicas. This step incurs receiving message complexity $O(l|coin|)$ where $l=e'-e$. Next, the client needs to receive one message from $k=1$ replica and the message size is $(|v|+|sig|)$. Therefore the complexity for the client receiving messages is $O(|v|+|sig|+l|coin|)$. However, the procedure of sharing leader election is irrelevant to forensic support, and we could assign it to replicas. (This procedure is included in the forensic protocol because we do not want to change the consensus protocol itself.) In that case, the client needs to receive just one leader election result, so the receiving message complexity is $O(|v|+|sig|+|coin|)$. % complexity one message from $k=1$ replica and the message size is $(|v|+|sig|)$ where $|v|$ and $|sig|$ stand for the size of a value and an aggregate signature. Therefore the complexity of the client receiving messages is $O(|v|+|sig|)$ for \hsa. As for \hsb, theorem~\ref{thm:hotstuff_b} shows that in the worst case, the client needs to receive messages from $k=t+1$ replicas. Each of those replicas sends one message of size $O(|v|+|sig|+|hash|)$ where $|hash|$ stands for the size of a hash value. Therefore the complexity of the client receiving messages is $O(n(|v|+|sig|+|hash|))$ for \hsb.}
%%%%%%%%%%%%%%%%%%%%%%%%%%%%%%%%%% from here is ignored

% \gerui{The threshold of coin tossing can be set to $k=n-t=2t+1$ (actually $t<k\le n-t$). Threshold coin tossing in Constantinople paper only works for $f<t$ for both robustness and unpredictability. Good news is that the robustness property still holds when $f>t$. Robustness: It is computationally infeasible for an adversary to produce k valid shares of an input $s$ such that the output of the coin tossing is not $F(s)$. The reason is: The robustness of the scheme follows from the soundness of the interactive proof of equality of discrete logarithms; and the assumption of the random oracle model.}
\ignore{
\subsection{Highlights}
The modification to original VABA protocol:
\begin{enumerate}
    \item $LOCK=(e)$ is changed to $LOCK=(e,v)$. ($e$ is view number, $v$ is value.)
    \item Voting rule is changed. Before: vote if proposal, a.k.a \qctwo or $KEY$ (in \nv) has view higher than or equal to $LOCK$. After: vote if $KEY$ has view and value equal to $LOCK$, except that $KEY$'s view is strictly higher than $LOCK.e$.
    \item Assume a replica's own \vc message arrives first so that others \vc message won't overwrite local variables.
    \item add $\eqc$ into \vone which is the same modification as HotStuff.
\end{enumerate}

Description in detail:
\begin{enumerate}
    \item $LOCK$ keeps the latest locked tuple of view and value
    \item \textbf{Voting rule.} Once a replica locks on view $e$ and value $v$, in subsequent views, it only votes when it observes a \qctwo on $(e,v)$  (same view and value as lock), or it observes a \qctwo from a higher view in a \nv message. In the latter case, it ``unlocks'' and vote for any value.
    \item Preference over its own \vc message: we assume the \vc message sent by itself arrives first among all messages. This preference prevents conflicting QCs created in the same view by Byzantine replicas from overwriting local variables.
\end{enumerate}
\subsection{Overview} %\gerui{VABA definition has two more properties: fairness and efficiency, but I think we don't need to mention them.} 
VABA  utilizes  two cryptographic techniques (besides multi-signatures): threshold signatures (\thressignature) and threshold coin-tossing (\threscoin). 
Similar to HotStuff, the protocol proceeds in a sequence of consecutive views. Each view progresses as follows:
\begin{itemize}[leftmargin=*, topsep=0pt]
	\item[-] \textbf{\pp.} In this stage, $n$ replicas run $n$ parallel HotStuff-like instances (explained later), where replica $i$ acts as the leader within instance $i$.
    \item[-] \textbf{Leader-Election.} 
		After finishing the previous stage, replicas use the outputs with \threscoin primitive to randomly elect the {\it leader of this view}, denoted as $Leader[e]$ ($e$ is view nunber). Replicas adopt $Leader[e]$’s instance from \pp stage, and discard other instances.
	\item[-] \textbf{\vdashc.} Replicas broadcast quorum certificates (QCs) from $Leader[e]$’s \pp instance and update local variables and/or output value accordingly (explained below).
\end{itemize}

Replicas keep two important local variables, $LOCK$ and $KEY$. Their roles are similar to $LOCK$ and $\qctwo$ in HotStuff: $LOCK$ keeps the latest locked tuple of view and value whereas $KEY$ records the latest $\qctwo$ a replica has received. The rule to updating these variables will be explained later. 

Each instance of \pp is similar to a HotStuff instance within one view, except that (a) there is one more round of voting, and (b) replicas hold QCs and defer local variable updating and committing until \vdashc stage. Each instance progresses as follows, where replica $i$ acts as the leader within instance $i$:
\begin{itemize}[leftmargin=*, topsep=0pt]
    \item[-] \textbf{Propose.} The leader proposes a \nv message containing its local variable $KEY$ and sends it to all replicas.
    \item[-] \textbf{\vone.} On receiving a \nv message containing $KEY$ and parsing its value $v$ and view $e'$, a replica sends \vone for $(v,e')$ if it is \emph{safe} to vote based on a locking mechanism (explained later). It sends this vote to the leader. The leader collects $2t+1$ such votes to form an aggregate signature \qctwo in view $e$. The leader sends the view $e$ \qctwo to all replicas.
    \item[-] \textbf{\vtwo.} On receiving a \qctwo in view $e$ containing message $v$, a replica sends \vtwo to the leader. The leader collects $2t+1$ such votes to form an aggregate signature \qcprecom and send it to all replicas.
    \item[-] \textbf{\vthree.} On receiving \qcprecom in view $e$ containing message $v$ from the leader, a replica sends \vthree to the leader. The leader collects $2t+1$ such votes to form an aggregate signature \qcthree and send it to all replicas.
    \item[-] \textbf{\vfour and Output.} On receiving \qcthree from the leader, a replica  sends \vfour to the leader. The leader collects $2t+1$ such votes to form an aggregate signature \qcfour, which is the leader's output.
\end{itemize}

The output for the leader $i$ of a \pp instance is \qcfour, which is broadcast upon entering Leader-Election stage. In Leader-Election stage, a replica collects $2t+1$ such \qcfour, and broadcasts a signed \skipshare message to imply that it has collected $2t+1$ \qcfour. A replica collects $2t+1$ \skipshare messages and broadcast a \skipmsg message containing the threshold signature created from those \skipshare messages. When a replica receives a \skipmsg message, it broadcast a \skipmsg message if it has not done so. It abandons all $n$ \pp instances it participates, and broadcasts a signed \coinshare message.  Again, a replica collects $2t+1$\gerui{$t+1$ written in VABA paper} \coinshare messages, and uses them with \threscoin primitive to randomly elect the leader of this view, denoted as $Leader[e]$ where $e$ is the view number.

In \vdashc stage, a replica collects three type of QCs: \qctwo, \qcprecom, and \qcthree from $Leader[e]$'s \pp instance (if any), and broadcast them. Then, a replica collects view change messages from  $2t+1$ distinct replicas and update its local variables and commit value as follows:
\begin{itemize}[leftmargin=*, topsep=0pt]
    \item[-] On receiving \qctwo: if \qctwo is in a view higher than $KEY$, update local variable $KEY\gets \qctwo$.
	\item[-] On receiving \qcprecom: if \qcprecom is in a view higher than $LOCK$, update local variable $LOCK\gets (\qcprecom.e,\qcprecom.v)$, i.e., the view and value of \qcprecom.
	\item[-] On receiving \qcthree: commit to its value $\qcthree.v$.
	\item[-] Preference over it own QCs: we assume the QCs sent by itself arrives first among all QCs.\ignore{On receiving a \qctwo/\qcprecom from itself, first follow the updating rule, then stop updating $KEY$/$LOCK$ from further view change messages for the same view.} This preference prevents conflicting QCs created in the same view by Byzantine replicas from overwriting local variables.
\end{itemize}

\myparagraph{Voting rule.} Once a replica locks on view $e$ and value $v$, in subsequent views, it only votes when it observes a \qctwo on $(e,v)$  (same view and value as lock), or it observes a \qctwo from a higher view in a \nv message. In the latter case, it ``unlocks'' and vote for any value. % The only exception is when it observes a \qctwo on $(e,v)$ in a \nv message.

The safety and liveness of the VABA protocol when $f \leq t$ follows from the following:

\myparagraph{Uniqueness within a view.} Within a \pp instance, since replicas only vote once in each round, a \qcthree can be formed for only one value when $f \leq t$.

\myparagraph{Safety and liveness across views.} Safety across views is ensured by the use of locks and the voting rule for a \nv message. Whenever a replica commits a value, at least $2t+1$ other replicas are locked on the value in the view. A replica only votes for the value it is locked on. The only scenario in which it votes for a conflicting value $v'$ is if the leader includes a \qctwo for $v'$ from a higher view in \nv message. Similar to HotStuff, the constraint of voting for $v'$ is not necessary for safety, but only for liveness of the protocol.\kartik{a commit happens only after a leader is elected. Also, how do the 4th round of votes help here?} \gerui{For the first part I agree, and need we point it out here? For the second question, 4th round doesn't help safety. The output of 4th round is used in leader election.}

\subsection{Forensic Support for VABA}

Assume that there is a disagreement and the current view number is $e$. A client can query replicas for $Leader[e^\#]$ of each view $e^\#\le e$ (with a proof of leader), and then query the transcripts (such as \qctwo and \qcprecom) of the corresponding \pp instance. \gerui{Notice that \threscoin primitive ensures that one unique leader will be elected for each view, and Byzantine replicas cannot change the result of a leader election.} Therefore, retrospectively, there is only the leader's HotStuff-like instance in each view, and the transcripts are identical to those in HotStuff protocol. As a result, the HotStuff forensic protocol can be adapted for VABA straightforwardly.\kartik{a discussion about the existence of two leaders is missing. We need to talk about what can go wrong in that phase.} \gerui{I added a sentence in the middle of this paragraph.}

Algorithm~\ref{alg:fa-vaba} describes the protocol to obtain forensic support atop VABA.\ignore{A complete description of the VABA protocol is also provided in Algorithm~\ref{alg:vaba:01} and \ref{alg:vaba:02}. The messages in VABA protocol are similar to those in HotStuff except that the former messages contain an extra field $i$, the instance number.} In VABA, a leader is elected on the fly, therefore, the forensic protocol has a few changes for leader election verification. Each replica maintains a list of \pflock for all instances, containing all received \qctwo from \prepare messages, \nv messages, and \vc messages (line~\ref{alg:vaba-6},\ref{alg:vaba-6-2},\ref{alg:vaba-6-3}). When the leader of a view is elected, a replica keeps the \pflock from the leader's instance and discard others (line~\ref{alg:vaba-12}). A replica also keeps the random coins for leader election, for client's verification (line~\ref{alg:vaba-13}). A client should first query and verify the leader elections (line~\ref{alg:vaba-request-proof-leader-1}-\ref{alg:vaba-request-proof-leader-2}). Then, a client follows the steps similar to the HotStuff forensic protocol (line~\ref{alg:vaba-28}-\ref{alg:vaba-end}), where it should also verify the leader election of replicas' replies (line~\ref{alg:vaba-verify-leader-1},\ref{alg:vaba-verify-leader-2},\ref{alg:vaba-verify-leader-3}).
%if a client observes commits for two conflicting values in the same view, it can determine the culprits by two \commit messages (line~\ref{alg:hotstuff-28}).  Otherwise, the client sends a request to all replicas for possible proof between two committed views $e, e'$ for $e<e'$ (line~\ref{alg:hotstuff-23}). Each replica looks through its list for views $e<e^\#\le e'$. If there exists a \qctwo whose value is different from the value $v$ committed in $e$ and recorded $\eqc$ is less than $e$, it sends a reply with this \qctwo to the client (line~\ref{alg:hotstuff-12}). The client waits for a \qctwo (line~\ref{alg:hotstuff-24}) from $e^\#$ between two committed views. If it contains a different value from the first committed value and an older view number $\eqc < e$, the intersection of this \qctwo and the \qcthree from the commit message in the lower view proves at least $t+1$ culprits (line~\ref{alg:hotstuff-26}).

}

\ignore{
\begin{algorithm}[H]
\begin{algorithmic}[1]
    \State $KEY\gets \lr{0,v_i,\sigma_\perp,0}$ with selectors $e,v,\sigma,\eqc$
    \State $LOCK \gets \lr{0,v_\perp}$ with selectors $e,v$ \algorithmiccomment{$0, v_\perp,\sigma_\perp$: default view, value, and signature}
    \State $e\gets 1$ \algorithmiccomment{View number}
    \While{$true$}
        \LeftComment{\pp Stage}
        \State participate in $\pp(id,e,j)$ ($j\neq i$)
        \State start $\pp(id,e,i)$ as leader
        \State broadcast the output $\qcfour$ in message $\lr{\done,e,\qcfour}$
        \LeftComment{Leader-Election Stage}
        \State collect $\lr{\done,e,\cdot}$ from $2t+1$ distinct replicas, broadcast $\signed{\skipshare,e}$ \algorithmiccomment{Use \thressignature}
        \State collect $\signed{\skipshare,e}$ from $2t+1$ distinct replicas, denote the collection as $\Sigma$
        \State create threshold signature from $\Sigma$ denoted as $\sigma$, broadcast $\lr{\skipmsg,e,\sigma}$
        \Upon{$\lr{\skipmsg,e,\sigma}$}
        \State abandon all $\pp(id,e,j)$ ($j\in[n]$)
        \State broadcast $\lr{\skipmsg,e,\sigma}$ if has not done so
        \State broadcast $\signed{\coinshare,e}$ \algorithmiccomment{Use \threscoin}
        \State collect $\signed{\coinshare,e}$  from $2t+1$ distinct replicas, and generate a random leader by threshold coin-tossing, denoted as $Leader[e]$
        \State go to \vdashc stage
        \EndUpon
    \LeftComment{\vdashc Stage}
    \State extract \qctwo, \qcprecom, and \qcthree from $\pp(id,e,Leader[e])$; use $\perp$ if a QC is not recorded
    \State broadcast $\lr{\vc,e,\qctwo,\qcprecom,\qcthree}$%\algorithmiccomment{If any QC is not recorded, use $\perp$}
    \Upon{$\lr{\vc,e,\qctwo,\qcprecom,\qcthree}$} \algorithmiccomment{Assume its own message arrives first}
    \If{$\qctwo \neq \perp\land \qctwo.e>KEY.e$}
        \State $KEY\gets \qctwo$
    \EndIf
    \If{$\qcprecom \neq \perp\land \qcprecom.e>LOCK.e$}
        \State $LOCK\gets (\qcprecom.e,\qcprecom.v)$
    \EndIf
    \If{$\qcthree \neq \perp$}
        \State decide $\qcthree.v$
    \EndIf
    \EndUpon
    \State collect $\lr{\vc,e,\cdot,\cdot,\cdot}$ from $2t+1$ distinct replicas, enter  the next view, $e\gets e+1$
    \EndWhile
    % \ForInitialize{$e\ge 1$}
    % \State $Leader[e]\gets \perp$
    % \EndForInitialize
\end{algorithmic}
 \caption{VABA protocol, replica $i$'s initial value $v_i$}
 \label{alg:vaba:01}
\end{algorithm}

\begin{algorithm}[H]
\begin{algorithmic}[1]
    %\State parse $id$ as $(ID,e,i)$ \algorithmiccomment{Replica $i$ is the leader within instance}
    \LeftComment{Propose and \vone Phase}
    \As{a leader}
        \State broadcast $\lr{i,\nv,e,KEY}$
    \EndAs{}
    \As{a replica}
        \State wait for  $\lr{i,\nv,e,L}$ from leader
        \State $\eqc\gets L.e$ \algorithmiccomment{the view number of $L$}
        \If{$(LOCK.e < \eqc) \lor (LOCK.e=\eqc \land LOCK.v = v)$}
            \State send $\signed{i,\vone,e,v, \eqc}$ to leader 
        \EndIf
    \EndAs{}
    \LeftComment{\vtwo Phase}
    \As{a leader}
        \State collect $\signed{i,\vone,e,v,\eqc}$ from $2t+1$ distinct replicas, denote the collection as $\Sigma$
        \State $\sigma\gets \thressign(\Sigma)$ %\algorithmiccomment{$\sigma$ is $\qctwo$}
        \State broadcast $\lr{i,\prepare,e,v,\sigma,\eqc}$
    \EndAs{}
    \As{a replica}
        \State wait for $\lr{i,\prepare,e,v,\sigma,\eqc}$ from leader
        \State record $(e,v,\sigma, \eqc)$ as $\qctwo$ for this instance
        \State send $\signed{i,\vtwo,e,v}$ to leader
    \EndAs{}
    \LeftComment{\vthree Phase}
    \As{a leader}
        \State collect $\signed{i,\vtwo,e,v}$ from $2t+1$ distinct replicas, denote the collection as $\Sigma$
        \State $\sigma\gets \thressign(\Sigma)$
        \State broadcast $\lr{i,\precommit,e,v,\sigma}$
    \EndAs{}
\As{a replica}
        \State wait for $\lr{i,\precommit,e,v,\sigma}$ from leader
        \State record $(e,v,\sigma)$ as $\qcprecom$ for this instance
        \State send $\signed{i,\vthree,e,v}$ to leader
    \EndAs{}
    \LeftComment{\vfour and Output Phase}
    \As{a leader}
        \State collect $\signed{i,\vthree,e,v}$ from $2t+1$ distinct replicas, denote the collection as $\Sigma$
        \State $\sigma\gets \thressign(\Sigma)$ 
        \State broadcast $\lr{i,\commit,e,v,\sigma}$
        \State collect $\signed{i,\vfour,e,v}$ from $2t+1$ distinct replicas, denote the collection as $\Sigma'$
        \State $\sigma'\gets \thressign(\Sigma')$
        \State output $(e,v,\sigma')$ as $\qcfour$
    \EndAs{}
    \As{a replica}
        \State wait for $\lr{i,\commit,e,v,\sigma}$ from leader
        \State record $(e,v,\sigma)$ as $\qcthree$ for this instance
        \State send $\signed{i,\vfour,e,v}$ to leader
    \EndAs{}
\caption{VABA protocol, instance $\pp(id,e,i)$; replica $i$ is the leader within instance}
\label{alg:vaba:02}
\end{algorithmic}
\end{algorithm}
}
% \begin{algorithm}[H]
% \begin{algorithmic}[1]
%     \State parse $id$ as $(ID,e)$
%     \State extract recorded \qctwo, \qcprecom, and \qcthree from the leader's \pp instance identified by $(id,e,Leader[e])$; use $\perp$ if any type of QC is not recorded
%     \State broadcast $\lr{\vc,e,\qctwo,\qcprecom,\qcthree}$%\algorithmiccomment{If any QC is not recorded, use $\perp$}
%     \State collect $\lr{\vc,e,\cdot,\cdot,\cdot}$ from $2t+1$ distinct replicas, enter  the next view, $e\gets e+1$
%     \vspace{1em}
%     \Upon{$\lr{\vc,e,\qctwo,\qcprecom,\qcthree}$} \algorithmiccomment{Assume its own message arrives first}
%     \If{$\qctwo \neq \perp\land \qctwo.e>KEY.e$}
%         \State $KEY\gets \qctwo$
%     \EndIf
%     \If{$\qcprecom \neq \perp\land \qcprecom.e>LOCK.e$}
%         \State $LOCK\gets (\qcprecom.e,\qcprecom.v)$
%     \EndIf
%     \If{$\qcthree \neq \perp$}
%         \State decide $\qcthree.v$
%     \EndIf
%     \EndUpon
% \caption{VABA protocol, \vdashc stage: instance identifier $id=(ID,e)$}
% \label{alg:vaba:04}
% \end{algorithmic}
% \end{algorithm}
\section{Impossibility of Forensic Support for $n=2t+1$}\label{sec:lb}

A validated Byzantine agreement protocol allows replicas to obtain agreement, validity, and termination so far as the actual number of faults $f\leq t$ where $t$ is a Byzantine threshold set by the consensus protocol. A protocol that also provides forensic support with parameters $m$ and $d$ allows the detection of $d$ Byzantine replicas when $\leq m$ out of $n$ replicas are Byzantine faulty. In particular, in \S\ref{sec:pbft} and \S\ref{sec:basic-hotstuff}, we observed that when $t=\lfloor n/3\rfloor,\ m = 2t$, and $k=1$, we can obtain $(2t,1,d)$-forensic support for $d = t+1$. This section presents the limits on the number of Byzantine replicas detected ($d$), given the total number of Byzantine faulty replicas available in the system ($m$). In particular, we show that if the total number of Byzantine faults are too high, in case of a disagreement, the number of corrupt (Byzantine) replicas that can be deemed undeniably culpable will be too few.

\myparagraph{Intuition.} To gain intuition, let us consider a specific setting with $n = 2t+1$, $m = n-t = t+1$, and $d > 1$. Thus, such a protocol provides us with agreement, validity, and termination if the Byzantine replicas are in a minority. If they are in the majority, the protocol transcript provides undeniable guilt of more than one Byzantine fault. We show that such a protocol does not exist. Why? Suppose we split the replicas into three groups $P$, $Q$, and $R$ of sizes $t$, $t$, and $1$ respectively. First, observe that any protocol cannot expect Byzantine replicas to participate in satisfying agreement, validity, and termination. Hence, if the replicas in $Q$ are Byzantine, replicas in $P \cup R$ may not receive any messages from $Q$. However, if, in addition, the replica $R$ is also corrupt, then $R \cup Q$ can separately simulate another world where $P$ are Byzantine and not sending messages, and $Q \cup R$ output a different value. Even if an external client obtains a transcript of the entire protocol execution (i.e., transcripts of $k=n-f$ honest replicas and $f$ Byzantine replicas), the only replica that is undeniably culpable is $R$ since it participated in both worlds. For all other replicas, neither $ P $ nor $ Q $ have sufficient information to prove the other set's culpability. Thus, an external client will not be able to detect more than one Byzantine fault correctly.
Our lower bound generalizes this intuition to hold for $n > 2t$, $m = n-t$, $k = n-f$, and $d > n-2t$.

\begin{theorem} For any validated Byzantine agreement protocol with $t<n/2$, when $f > t$, if two honest replicas output conflicting values, $(n-t,\ n-f,\ d)$-forensic support is impossible with $d>n-2t$.
\end{theorem}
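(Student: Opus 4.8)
The plan is to prove the impossibility by a two-world (indistinguishability) argument in the same spirit as the PBFT-MAC and \hsc lower bounds, but now exploiting the fact that the adversary controls the near-majority $n-t > t$ of the replicas. First I would partition the $n$ replicas into three groups $P$, $Q$, $R$ with $|P|=|Q|=t$ and $|R|=n-2t$ (possible since $t<n/2$, and then $n-2t\ge 1$), fix $f=n-t=m$ (so $f>t$ precisely because $t<n/2$, and $k=n-f=t$), and let $R$ be the ``swing'' set that is Byzantine in both worlds.

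In World~1 I would make $Q\cup R$ Byzantine and $P$ honest. The group $P\cup R$ runs a faithful execution with $Q$ silent toward it; since this set has $n-t$ members behaving honestly while only the $t$ replicas of $Q$ appear faulty, termination and agreement force every replica of $P$ to output a common externally valid value $v$ with a certificate signed by $P\cup R$. Simultaneously the Byzantine set $Q\cup R$ — which also has $n-t$ members and can itself complete a full run with $P$ cast as the $\le t$ faulty replicas — produces a conflicting certified output $v'\neq v$ signed by $Q\cup R$, with $R$ equivocating; thus the client observes a safety violation. World~2 is the mirror image: $P\cup R$ is Byzantine and $Q$ is honest, $Q\cup R$ honestly output $v'$, and $P\cup R$ manufacture the $v$-certificate. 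The crucial point — and the reason the bound holds for \emph{every} protocol, with or without a PKI — is that in each world the $n-t$ Byzantine replicas generate their half of the execution with their \emph{own, genuine} signatures and never need to forge a message of an honest replica; consequently the complete message transcript (every message, every signature, and both certificates for $v$ and $v'$) is identical in the two worlds.

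Given identical transcripts, any forensic protocol is a fixed function of that transcript and must return the same verdict in both worlds. Because the verdict is an \emph{irrefutable} proof — sound against an external verifier who makes no assumption on the number of honest replicas — it may implicate a replica only if that replica is Byzantine in every execution consistent with the transcript. Replicas in $P$ are honest in World~1 and replicas in $Q$ are honest in World~2, so neither group can ever be implicated; only the swing set $R$, Byzantine in both worlds, can. Hence at most $|R|=n-2t$ culprits can be named, and $(n-t,\,n-f,\,d)$-forensic support with $d>n-2t$ is impossible.

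I expect the main obstacle to be making the two executions provably indistinguishable while invoking only the black-box guarantees (agreement, validity, termination for $f\le t$) of an \emph{arbitrary} validated Byzantine agreement protocol. Concretely, I must argue that each honest half $P\cup R$ (resp.\ $Q\cup R$) genuinely reaches an output by appealing to termination with the complementary group recast as the $\le t$ faulty replicas, and then verify that the two resulting partial executions can be scheduled into one global run in which $R$ merely equivocates, so that the honest messages — and therefore the whole transcript — coincide across the two worlds. A secondary point to pin down is the formal meaning of ``irrefutable'': the step converting indistinguishability into the cap $d\le n-2t$ relies on reading the proof as verifiable without any honest-fraction assumption, which I would state explicitly before concluding.
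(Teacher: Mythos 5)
Your proposal is correct and follows essentially the same route as the paper: the identical partition into $P,Q,R$ of sizes $t,t,n-2t$, the swing set $R$ performing a split-brain equivocation, and the indistinguishability of the two worlds forcing any irrefutable verdict to implicate only $R$, hence $d\le n-2t$. The paper merely unfolds your "appeal to termination with the complementary group recast as the $\le t$ faulty replicas" into two explicit auxiliary worlds (with $Q$, resp.\ $P$, crashed) that define the transcripts $T_1,T_2$ later replayed in the two adversarial worlds, so the four-world presentation is the same argument with your scaffolding made explicit.
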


\begin{proof}
Suppose there exists a protocol that achieves agreement, validity, termination, and forensic support with parameters $n,\ t < n/2,\ m = n-t,\ k = n-f$ and $d > n-2t$. Through a sequence of worlds, and through an indistinguishability argument we will show the existence of a world where a client incorrectly holds at least one honest replica as culpable.
Consider the replicas to be split into three groups $P,\ Q$, and $R$ with $t$, $t$, and $n-2t$ replicas respectively. We consider the following sequence of worlds:

\myparagraph{World 1.} [$t$ Byzantine faults, satisfying agreement, validity, and termination]

\noindent\underline{Setup.} Replicas in $P$ and $R$ are honest while replicas in $Q$ have crashed. $P$ and $R$ start with a single externally valid input $v_1$. All messages between honest replicas arrive instantaneously.

\noindent\underline{Output.} Since there are $|Q| = t$ faults, due to agreement, validity and termination properties, replicas in $P$ and $R$ output $v_1$. Suppose replicas in $P$ and $R$ together produce a transcript $T_1$ of all the messages they have received. % Based on the transcript produced by $P$ and $R$, an external client commits $v_1$.

\myparagraph{World 2.} [$t$ Byzantine faults, satisfying agreement, validity, and termination]

\noindent\underline{Setup.} Replicas in $Q$ and $R$ are honest while replicas in $P$ have crashed. $Q$ and $R$ start with an externally valid input $v_2$. All messages between honest replicas arrive instantaneously.

\noindent\underline{Output.} Since $t$ replicas are Byzantine faulty, due to agreement, validity and termination properties, replicas in $Q$ and $R$ output $v_2$. Suppose replicas in $Q$ and $R$ together produce a transcript $T_2$ of all the messages they have received. % Based on the transcript produced by $Q$ and $R$, an external client commits $v_2$.

\myparagraph{World 3.} [$n-t$ Byzantine faults satisfying validity, termination, forensic support]

\noindent\underline{Setup.} Replicas in $P$ are honest while replicas in $Q$ and $R$ are Byzantine. Replicas in $P$ start with input $v_1$. Replicas in $Q$ and $R$ have access to both inputs $v_1$ and $v_2$. $Q$ behaves as if it starts with input $v_2$ whereas $R$ will use both inputs $v_1$ and $v_2$.  Replicas in $Q$ and $R$ behave with $P$ exactly like in World~1. In particular, replicas in $Q$ do not send any message to any replica in $P$. Replicas in $R$ perform a split-brain attack where one brain interacts with $P$ as if the input is $v_1$ and it is not receiving any message from $Q$. 
Also, separately, replicas in $Q$ and the other brain of $R$ start with input $v_2$ and communicate with each other exactly like in World~2. They ignore messages arriving from $P$. 

\noindent\underline{Output.} For replicas in $P$, this world is indistinguishable from that of World~1. Hence, they output $v_1$. Replicas in $P$ and the first brain of $R$ output transcript $T_1$ corresponding to the output. Replicas in $Q$ and the other brain of $R$ behave exactly like in World~2. Hence, they can output transcript $T_2$. Since the protocol provides $(n-t,n-f, d)$-forensic support for $d > n-2t$, the transcript  of messages should hold $d > n-2t$ Byzantine replicas undeniably corrupt. Suppose the client can find the culpability of $> n-2t$ replicas from $Q \cup R$, i.e., $\geq 1$ replica from $Q$.

\myparagraph{World 4.} [$n-t$ Byzantine faults satisfying validity, termination, forensic support]

\noindent\underline{Setup.} Replicas in $Q$ are honest while replicas in $P$ and $R$ are Byzantine. Replicas in $Q$ start with input $v_2$. Replicas in $P$ and $R$ have access to both inputs $v_1$ and $v_2$. $P$ behaves as if it starts with input $v_1$ whereas replicas in $R$ use both $v_1$ and $v_2$. Replicas in $P$ and $R$ behave with $Q$ exactly like in World~2. In particular, replicas in $P$ do not send any message to any replica in $Q$. Replicas in $R$ perform a split-brain attack where one brain interacts with $Q$ as if the input is $v_2$ and it is not receiving any message from $P$. 
Also, separately, replicas in $P$ and the other brain of $R$ start with input $v_1$ and communicate with each other exactly like in World~1. They ignore messages arriving from $Q$.

\noindent\underline{Output.} For replicas in $Q$, this world is indistinguishable from that of World~2. Hence, they output $v_2$. Replicas in $Q$ and the first brain of $R$ output transcript $T_2$ corresponding to the output. Replicas in $P$ and the other brain of $R$ behave exactly like in World~1. Hence, they can output transcript $T_1$. 

Observe that the transcript and outputs produced by replicas in $P$, $Q$, and $R$ are exactly the same as in World~3. Hence, the client will hold $> n-2t$ replicas from $Q \cup R$, i.e., $\geq 1$ replica from $Q$ as culpable. However, all replicas in $Q$ are honest in this world. This is a contradiction.
\end{proof}
\section{Proof of Theorems}\label{sec:proof-diem}

\subsection{Proof of Theorem~\ref{thm:pbft}}
\label{sec:proofoftheorem4.1}

\begin{proof}

Suppose there are $f = 2t+1$ Byzantine replicas, and let there be three replica partitions $P, Q, R$, $|P|=|Q|=t$, $|R|=t+1$. To prove the result, suppose the protocol has forensic support for $d>0$, we construct two worlds where a different set of replicas are Byzantine in each world.

\myparagraph{World~1.} Let $R,Q$ be Byzantine replicas in this world. During the protocol, replicas in $Q$ behave like honest parties. Suppose in view $e, e'$ ($e < e'$), two honest replicas $p_1,p_2\in P$ output two conflicting values $v, v'$ after receiving two \qcthree. The \qcthree for $v$ contains the \commit messages from $P$ and $R$, the \qcthree for $v'$ contains the \commit messages from $R$ and $Q$. All the other messages never reach $P$. During the forensic protocol, replicas in $P$ send their transcripts to the client. Since the protocol has forensic support for $d>0$, using these transcripts (two \qcthree), the forensic protocol determines some subset of $R$ are culpable (since $Q$ behave like honest).

\myparagraph{World~2.} Let $P,Q$ and some replica $r\in R$ are Byzantine replicas and replicas in $r$ behave honestly. Again, in view $e, e'$ ($e < e'$), two replicas $p_1,p_2\in P$ output two conflicting values $v, v'$ after receiving two \qcthree. The \qcthree for $v$ contains the \commit messages from $P$ and $R$, the \qcthree for $v'$ contains the \commit messages from $R$ and $Q$. Replicas in $R$ unlock themselves due to receiving a higher \qctwo formed in $e^*$ ($e<e^*<e'$). During the forensic protocol, replicas in $P$ send the same transcripts as of World~1 to the client (only two \qcthree). Thus, the forensic protocol outputs some subset of $R$ as culpable replicas. However, this is incorrect since replicas in $R$ are honest ($r$ is indistinguishable with replicas in $R/\{r\}$). This complete the proof.
\end{proof}

%\subsection{Proof of Theorem~\ref{thm:hotstuff_a}}
%\label{sec:proofoftheorem5.1}

%\subsection{Proof of Theorem~\ref{thm:hotstuff_b}}
%\label{sec:proofoftheorem5.2}

\subsection{Proof of Theorem~\ref{thm:hotstuff-null}}
\label{proof:hotstuff-null}
\begin{figure}
    \centering
    \includegraphics[width=\columnwidth]{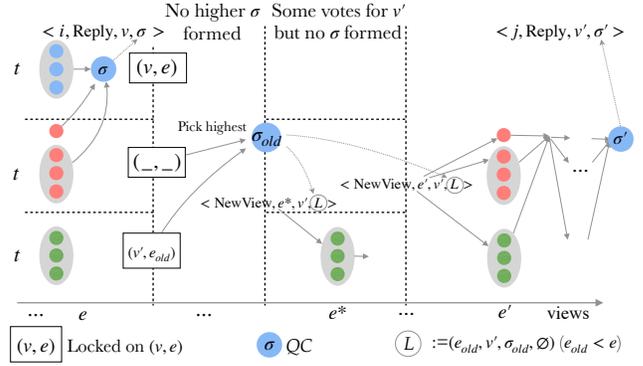}
    \caption{World~1 of Theorem~\ref{thm:hotstuff-null}. Replicas are represented as colored nodes. Replica partitions are $P$, $\{x\}$ (Byzantine), $R$ (Byzantine), and $Q$ from top to bottom.}
    \label{fig:hotstuff-null-1}
\end{figure}
\begin{figure}
    \centering
    \includegraphics[width=\columnwidth]{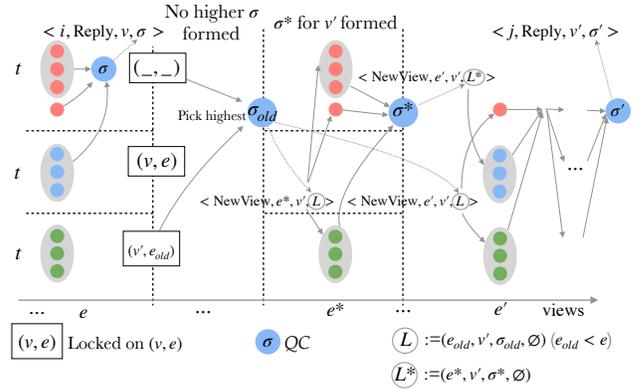}
    \caption{World~2 of Theorem~\ref{thm:hotstuff-null}. Replicas are represented as colored nodes. Replica partitions are $P$ (Byzantine), $\{x\}$ (Byzantine), $R$, and $Q$ from top to bottom.}
    \label{fig:hotstuff-null-2}
\end{figure}

\begin{proof}
Suppose the protocol provides forensic support to detect $d>1$ Byzantine replicas with irrefutable proof and the proof can be constructed from the transcripts of all honest replicas.  To prove this result, we construct two worlds where a different set of replicas are Byzantine in each world. We will fix the number of Byzantine replicas $f=t+1$, but the following argument works for any $f\ge t+1$.

Let there be four replica partitions $P,Q,R,\{x\}$. $|Q|=|P|=|R|=t$, and $x$ is an individual replica. In both worlds, the conflicting outputs are presented in view $e,e'$ ($e+1<e'$) respectively. Let $\qcthree_1$ be on value $v$ in view $e$, and signed by $P,R,x$. And let $\qcthree_2$ (and a \qcprecom) be on value $v'$ in view $e'$, and signed by $Q,R,x$. Suppose the leader of view $e,e^*,e'$ ($e<e^*<e'$) is replica $x$.

\myparagraph{World~1} is presented in Figure~\ref{fig:hotstuff-null-1}. % Let $R$ and $x$ be Byzantine replicas in this world, and they sign \qctwo on $v'$ in $e'$ with $Q$, which is in the first view $>e$. In view $e^*$, the leader proposes value $v'$ and $Q$ sends \prepare on it, but it doesn't form a \qctwo. % \kartik{Write this as: 
Let $R$ and $x$ be Byzantine replicas in this world. In view $e^*$, the leader proposes value $v'$ and $Q$ sends \prepare on it, but a \qctwo is not formed. In view $e'$, the Byzantine parties, together with $Q$, sign \qctwo on $v'$. $e'$ is the first view where a \qctwo for $v'$ is formed.

During the forensic protocol, all honest replicas in $P$ and $Q$ send their transcripts. Byzantine $R$ and $x$ do not provide any information. Since the protocol has forensic support, the forensic protocol can output $d > 1$ replicas in $R$ and $x$ as culprits.

\myparagraph{World~2} is presented in Figure~\ref{fig:hotstuff-null-2}. % Let $P$ and $x$ be Byzantine replicas in this world, and they sign the first \qctwo on $v'$ in $e^*$ with $Q$, which is the first view $> e$. \kartik{
Let $P$ and $x$ be Byzantine replicas in this world. Here, in view $e^* > e$, $P$ and $x$, together with $Q$ sign \qctwo on $v'$. In this world, $e^*$ is the first view where a \qctwo for $v'$ is formed. View $e' > e^*$ is similar to that of World~1 except that honest $R$ receives a \nv message with $\qctwo^*$ (rather than $\qctwo_{old}$).

During the forensic protocol, $Q$ sends their transcripts, which are identical to those in World~1. Byzantine $P$ can provide the same transcripts as those in World~1. Observe that the transcripts from $P$ and $Q$ presented to the forensic protocol are identical to those in World~1. Thus, the forensic protocol can also outputs $d>1$ replicas in $R$ and $x$ as culpable. In World~2, this is incorrect since replicas in $R$ are honest. 

Based on $2t$ transcripts, World~1 and World~2 are indistinguishable. To obtain an irrefutable proof of $d>1$ culprits, the client needs to collect more than $2t$ transcripts, more than the number of honest parties available. This completes the proof. % However, Byzantine replicas can always choose to stay silent during the forensic protocol and honest replicas can provide at most $2t$ transcripts. This complete the proof.}
\end{proof}

\paragraph{Remark.} The above proof can be easily modified to work with parameters $d > 0$ when $m = t+2$.

\subsection{Proof of Theorem~\ref{thm:algorand}}\label{sec:algorand-proof}
\begin{proof}
We construct two worlds where a different set of replicas are Byzantine in each world. Let replicas be split into three partitions $P$, $Q$, and $R$, and $|P|=(n-2\epsilon)/3,|Q|=|R|=(n+\epsilon)/3$ 
%where $N$\gerui{Is $n,N$ same here?} is the number of replicas 
and $\epsilon>0$ is a small constant. Denote the numbers of replicas from $P,Q,R$ in a committee by $p,q,r$. Let $\kappa$ denote the expected committee size; $t_H=2\kappa/3$. With constant probability, we will have $p < \kappa/3$, $q > \kappa/3$ and $r > \kappa/3$ and $p+q < 2\kappa/3$ in steps 4 to 8. % \kartik{Pramod, do we need to be more formal here?}

\myparagraph{World~1.} Replicas in $R$ are Byzantine in this world. We have $p+q < t_H$ and $q+r > t_H$. The Byzantine parties follow the protocol in \gradedconsensus. Thus, all replicas in step 4 hold the same tuple of $b=0$ and $v$ ($v\neq v_\perp$). Then, the following steps are executed.
\begin{itemize}
    \item[Step 4] Honest committee members that belong to $P$ and $Q$ broadcast their votes on $(b=0,v)$ whereas Byzantine committee members that belong to $R$ send votes to replicas in $P$ and not $Q$. %(Without loss of generality, suppose the honest replicas in committees for the next steps belong to $Q$).
    \item[Step 5] Replicas in $P$ satisfy Ending Condition 0, and output $b=0$ and the value $v$. Replicas in $Q$ do not receive votes from committee members in $R$, so they update $b=0$ and broadcast their votes on $(b=0, v)$. %Honest committee members that belong to $Q$ doesn't receive votes from committee members in $R$, so they update $b=0$ and broadcast their votes on $(b=0,v)$. 
    Byzantine committee members that belong to $R$ pretend not to receive votes from committee members in $Q$, and also update $b=0$. And they send votes to replicas in $P$ and not $Q$.
    \item[Step 6] Replicas in $Q$ update $b=1$ since they receive $p+q<t_H$ votes.  Replicas in $R$ pretend not to receive votes from committee members in $Q$, and also update $b=1$. Committee members in $Q$ and $R$ broadcast their votes.
    \item[Steps 7-8] Committee members that belong to $Q$ and $R$ receive $q+r>t_H$ votes, so they update $b=1$ and broadcast their votes.
    %Since $q+r>2/3 n$, there are more than $2/3 n$ such votes.
    \item[Step9] Replicas in $Q$ and $R$ satisfy Ending Condition 1, and output $b=1$ and $v_\perp$, a disagreement with replicas in $P$.
\end{itemize}
During the forensic protocol, replicas in $P$ send their transcripts and state that they have output $b=0$. $Q$ and $R$ send their transcripts claiming in steps 4 and 5 they do not hear from the other partition, and they state that output $b=1$. %\gerui{$|P|<1/3$, I think ALgorand doesn't care about 1/3 replicas commit a value, right?}\kartik{did not understand this comment.}\gerui{the question is that do we need $b=0$ to be committed by $>1/3$ replicas? I guess no.}

If this protocol has any forensic support, then it should be able to detect some replica in $R$ as Byzantine.

\myparagraph{World~2.} This world is identical to World~1 except (i) Replicas in $Q$ are Byzantine and replicas in $R$ are honest, and (ii) the Byzantine set $Q$ behaves exactly like set $R$ in World~1, i.e., replicas in $Q$ do not send any votes to $R$ in steps 4 and 5 and ignore their votes. % symmetrically to  Let replicas in $Q$ be Byzantine in this world. Similar to world 1, assume \gradedconsensus gives all committee members in step 4 $(b=0,v)$. Next, in steps 4-9, let $Q$ act symmetrically as if they are in world 1: $Q$ don't send to $R$ and ignore votes from $R$ in step 4 and 5. 
During the forensic protocol, $P$ send their transcripts and state that they have output $b=0$. $Q$ and $R$ send their transcripts claiming in steps 4 and 5 they do not hear from the other partition, and they state that output $b=1$. 

From an external client's perspective, World~2 is indistinguishable from World~1. In World~2, the client should detect some replica in $R$ as Byzantine as in World~1, but all replicas in $R$ are honest. % which is a contradiction.
\end{proof}

\subsection{Proof of Theorem~\ref{thm:diem}}

\begin{proof}Suppose two conflicting blocks $b,\ b'$ are output in views $e$, $e'$ respectively.

\myparagraph{Case $e=e'$.}

\noindent\underline{Culpability.} The \qcthree of $b$ (the QC in $e+3$) and \qcthree of $b'$ intersect in $t+1$ replicas. These $t+1$ replicas should be Byzantine since the protocol requires a replica to vote for at most one value in a view. 

\noindent\underline{Witnesses.} Client can get the proof based on the two blocks in $e+3$, so additional witnesses are not necessary in this case.

\myparagraph{Case $e\neq e'$.}

\noindent\underline{Culpability.}
If $e \neq e'$, then WLOG, suppose $e < e'$. Since $b$ is output in view $e$, it must be the case that $2t+1$ replicas are locked on $(b, e)$ at the end of view $e$. Now consider the first view $e < e^* \leq e'$ in which a higher lock $(b'', e^*)$ is formed where $b'', b$ are not on the same chain (possibly $b''$ is on the chain of $b'$). Such a view must exist since $b'$ is output in view $e' > e$ and a lock will be formed in at least view $e'$. For a lock to be formed, a higher \qctwo must be formed too.

Consider the first view $e<e^\# \leq e'$ in which a \qctwo in chain of $b''$ is formed. The leader in $e^\#$ broadcasts the block containing a \hqc on $(b'', e'')$. Since this is the first time a higher \qctwo is formed and there is no \qctwo for chain of $b''$ formed between view $e$ and $e^\#$, we have $e''\le e$. The formation of the higher \qctwo indicates that $2t+1$ replicas received the block extending $b''$ with \hqc on $(b'', e'')$ and consider it a valid proposal, i.e., the view number $e''$ is larger than their locks because the block is on another chain. 

 Recall that the output block $b$ indicates $2t+1$ replicas are locked on $(b, e)$ at the end of view $e$. In this case, the $2t+1$ votes in \qctwo in view $e^\#$ intersect with the $2t+1$ votes in \qcthree in view $e$ at $t+1$ Byzantine replicas. These replicas should be Byzantine because they were locked on the block $b$ in view $e$ and vote for a conflicting block in a higher view $e^\#$ whose \hqc is from a view $e'' \leq e$. Thus, they have violated the voting rule.

\noindent\underline{Witnesses.} Client can get the proof by storing a \qctwo formed in $e^\#$ between $e$ and $e'$ in a different chain from $b$. The \qctwo is for the previous block in $e^\#$ whose \hqc is formed in a view $e'' < e$. For the replicas who have access to the \qctwo, they must have access to all blocks in the same blockchain. Thus only one witness is needed ($k=1)$ to provide the \qctwo and its previous block containing the \hqc on $(b'', e'')$, the \qctwo, \hqc and the first \qcthree act as the irrefutable proof.
\end{proof}

\section{Description of BFT Protocols}\label{sec:bft-protocols}

In the protocols in this paper, we assume that replicas and clients ignore messages with invalid signatures and messages containing external invalid values. When searching for an entity (e.g. lock or \qctwo) with the highest view, break ties by alphabetic order of the value. Notice that ties only occur when $f>t$ and Byzantine replicas deliberately construct conflicting quorum certificates in a view.

With $n=3t+1$, the descriptions of the PBFT protocol and HotStuff protocol are presented in Algorithm~\ref{alg:single-pbft} and \ref{alg:single-hotstuff-modified}.

\subsection{A Forensic Attack on HotStuff-view}
\label{sec:change}
Compared to HotStuff~\cite[Algorithm 2]{yin2019hotstuff}, Algorithm~\ref{alg:single-hotstuff-modified} highlights a slightly different voting rule in line~\ref{alg:change}. In addition to check whether $(LOCK.e < \hqc.e) \lor (LOCK.v = v)$ holds as in HotStuff~\cite[Algorithm 2]{yin2019hotstuff}, when the value in \nv is the same as the value in lock, our voting rule requires $LOCK.e = \hqc.e$.

We argue that the lack of this additional check on the view number will not affect the safety and liveness for HotStuff, but pose a threat for forensics. In the following, we exhibit a forensic attack on \hsa protocol with the original voting rule.
\begin{itemize}
    \item $e=i > 0:$ An honest replica $R$ receives a $\lr{\commit,i,v,\sigma}$ from the leader and updates its lock to be $(i, v, \sigma)$. $R$ sends $\lr{\commit,i,v}$ to leader, which is contained in a $\qcthree$ denoted as $qc_1$. $v$ is output in this view.
    \item $e=i+1:$ $R$ receives $\lr{\commit,i+1,v',\sigma'}$ and updates its lock to be $(i+1, v', \sigma')$. 
    \item $e = i+2:$ A leader broadcasts $\lr{\nv,i+2,v',\hqc}$, where \hqc is a QC from $i-1$. Replica $R$ receives the message and sends $\signed{\vone,i+2, v', i-1}$, because $LOCK.v = v'$ by checking the original voting rule. This message is contained into a $\qctwo$ denoted as $qc_2$. Further, $v'$ is output in this view.
\end{itemize}

In this execution, replica $R$ follows the protocol, however, it will be mistakenly blamed by Algorithm~\ref{alg:fa-hotstuff-a} if the client receives the $qc_1$ for $v$ and the $qc_2$ for $v'$. Since $qc_2.v \ne qc_1.v$ and $qc_2.e_{qc} = i-1 \le qc_1.e = i$ according to line~\ref{alg:hotstuffb-13}.

While the actual \qctwo whose intersection with $qc_1$ should be blamed is generated in $e=i+1$, it is possible that some honest replicas who have the same transcripts as $R$ will be improperly held culpable in this case. By adding the condition to check $LOCK.e = \hqc.e$, honest replicas will not vote for a \nv with stale \hqc, which prevents them from the attack described above.

%\peiyao{Do we need to explain why line14 Algorithm6 is different from original HotStuff? (With the example Jovan Komatovic pointed out)} \kartik{let's do it}

\begin{algorithm*}[t]
\begin{algorithmic}[1]
    \State $LOCK\gets (0,v_\perp,\sigma_\perp)$ with selectors $e,v,\sigma$  \algorithmiccomment{$0, v_\perp,\sigma_\perp$: default view, value, and signature}
    \State $e\gets 1$
    \While{true}
    \LeftComment{\ppr and \vone Phase}
    \As{a leader}
        \State collect $\signed{\vc,e-1,\cdot}$ from $2t+1$ distinct replicas as status certificate $M$ \algorithmiccomment{Assume special \vc messages from view 0}
        \State $ v\gets$ the locked value with the highest view number in $M$
        \If{$v=v_\perp$}
        \State $v\gets v_i$
        \EndIf
        \State broadcast $\lr{\nv,e,v,M}$
    \EndAs{}
    \As{a replica} 
        \State wait for valid $\lr{\nv,e,v,M}$ from leader \algorithmiccomment{Use function \textsc{Valid($\lr{\nv,e,v,M}$)}}
        \State send $\signed{\vone,e,v}$ to leader
        
    \EndAs{}
    \LeftComment{\commit Phase}
    \As{a leader}
        \State collect $\signed{\vone,e,v}$ from $2t+1$ distinct replicas, denote the collection as $\Sigma$
        \State $\sigma\gets \thressign(\Sigma)$ %\algorithmiccomment{$\sigma$ is $\qctwo$}
        \State broadcast $\lr{\commit,e,v,\sigma}$
    \EndAs{}
    \As{a replica}
        \State wait for $\lr{\commit,e,v,\sigma}$ from leader \algorithmiccomment{$\qctwo$}
        \State $LOCK\gets (e,v,\sigma)$
        \State send $\signed{\commit,e,v}$ to leader
    \EndAs{}
    \LeftComment{\reply Phase}
    \As{a leader}
        \State collect $\signed{\commit,e,v}$ from $2t+1$ distinct replicas, denote the collection as $\Sigma$
        \State $\sigma\gets \thressign(\Sigma)$ %\algorithmiccomment{$\sigma$ is $\qcthree$}
        \State broadcast $\lr{\reply,e,v,\sigma}$
    \EndAs{}
\As{a replica}
        \State wait for $\lr{\reply,e,v,\sigma}$ from leader  \algorithmiccomment{$\qcthree$}
        \State output $v$ and send $\lr{\reply,e,v,\sigma}$ to the client
    \EndAs{}
    \State call procedure $\vc()$%Gerui: is this necessary?
    \EndWhile
    \State if a replica encounters timeout in any ``wait for'', call procedure $\vc()$
    \vspace{1em}
\Procedure{$\vc()$}{}
    \State broadcast $\signed{\blame,e}$
    \State collect $\signed{\blame,e}$ from $t+1$ distinct replicas, broadcast them
    \State quit this view
    \State send $\signed{\vc,e,LOCK}$ to the next leader
    \State enter the next view, $e\gets e+1$
\EndProcedure
\Function{\textsc{Valid}}{$\lr{\nv,e,v,M}$}
    \State $v^*\gets$ the locked value with the highest view number in $M$
    \If{$( v^*=v\lor v^*=v_\perp )\land$($M$ contains locks from $2t+1$ distinct replicas)}
    \State return $true$
    \Else
    \State return $false$
    \EndIf
\EndFunction
\caption{PBFT-PK protocol: replica's initial value $v_i$\ignore{\kartik{should move to appendix}\kartik{remove broadcast commit and add changes discussed in algorithm 1}\kartik{use e for view}}}
\label{alg:single-pbft}
\end{algorithmic}
\end{algorithm*}
%Define a valid \nv message as: $\nv.v$ is the highest locked value in $\nv.M$ or the highest locked value is $v_\perp$. And $\nv.v$ pass $\exval$.

\begin{algorithm*}[t]

\begin{algorithmic}[1]
\caption{General HotStuff protocol: replica's initial value $v_i$, protocol variant indicator $var\in$$ \{$`\hsa', `\hsb', `\hsc'$\}$}
\label{alg:single-hotstuff-modified}
    \State $\qctwo\gets (0,v_\perp,\sigma_\perp, \info_\perp)$ with selectors $e,v,\sigma,\info$ 
    \State $LOCK\gets (0,v_\perp)$ with selectors $e,v$  \algorithmiccomment{$0, v_\perp,\sigma_\perp, \info_\perp$: default view, value, signature, and info}
    \State $e\gets 1$
    \While{true}
    \LeftComment{\ppr and \prepare Phase}
    \As{a leader}
        \State collect $\lr{\vc,e-1,\cdot}$ from $2t+1$ distinct replicas as  $M$
        \algorithmiccomment{Assume special \vc messages from view 0}
        \State $\hqc\gets$ the highest QC in $M$
        \State $v\gets$ $\hqc.v$
        \If{$v=v_\perp$}
        \State $v\gets v_i$
        \EndIf
        \State broadcast $\lr{\nv,e,v,\hqc}$
    \EndAs{}
    \As{a replica}
        \State wait for  $\lr{\nv,e,v,\hqc}$ from leader s.t. $\hqc.v=v\lor \hqc.v=v_{\bot}$ \algorithmiccomment{Validate $v$}
            \If{$(LOCK.e < \hqc.e) \lor ( LOCK.v = v \land LOCK.e=\hqc.e)$}\label{alg:change}\Comment{Voting rule, see Appendix~\ref{sec:change}} %$(LOCK.e < \hqc.e) \lor (LOCK.v = v)$}
                \State send $\signed{\vone,e,v, \textsc{Info}(var, \hqc)}$ to leader \Comment{Use function $\textsc{Info}(var, \hqc)$ }
            \EndIf
    \EndAs{}
    \LeftComment{\precommit Phase}
    \As{a leader}
        \State collect $\signed{\vone,e,v,\info}$ from $2t+1$ distinct replicas, denote the collection as $\Sigma$
        \State $\sigma\gets \thressign(\Sigma)$ 
        \State broadcast $\lr{\precommit,e,v,\sigma,\info}$
    \EndAs{}
    \As{a replica}
        \State wait for $\lr{\precommit,e,v,\sigma,\info}$ from leader \algorithmiccomment{$\qctwo$}
        \State $\qctwo\gets (e,v,\sigma, \info)$
        \State send $\signed{\precommit,e,v}$ to leader
    \EndAs{}
    \LeftComment{\commit Phase}
    \As{a leader}
        \State collect $\signed{\precommit,e,v}$ from $2t+1$ distinct replicas, denote the collection as $\Sigma$
        \State $\sigma\gets \thressign(\Sigma)$ 
        \State broadcast $\lr{\commit,e,v,\sigma}$
    \EndAs{}
\As{a replica}
        \State wait for $\lr{\commit,e,v,\sigma}$ from leader  \algorithmiccomment{$\qcprecom$}
        \State $LOCK\gets (e,v)$
        \State send $\signed{\commit,e,v}$ to leader
    \EndAs{}
    \LeftComment{\reply Phase}
    \As{a leader}
        \State collect $\signed{\commit,e,v}$ from $2t+1$ distinct replicas, denote the collection as $\Sigma$
        \State $\sigma\gets \thressign(\Sigma)$ %\algorithmiccomment{$\sigma$ is $\qcthree$}
        \State broadcast $\lr{\reply,e,v,\sigma}$
    \EndAs{}
    \As{a replica}
        \State wait for $\lr{\reply,e,v,\sigma}$ from leader \algorithmiccomment{$\qcthree$}
        \State output $v$ and send $\lr{\reply,e,v,\sigma}$ to the client
    \EndAs{}
    
    \State call procedure $\vc()$
    \EndWhile
    \State if a replica encounters timeout in any ``wait for'', call procedure $\vc()$
    \algstore{hotstuff} 
\end{algorithmic}
\end{algorithm*}

\begin{algorithm*}[t]
\begin{algorithmic}[1]
\algrestore{hotstuff} 
\Procedure{$\vc()$}{}
    \State broadcast $\signed{\blame,e}$
    \State collect $\signed{\blame,e}$ from $t+1$ distinct replicas, broadcast them
    \State quit this view
    \State send $\lr{\vc,e,\qctwo}$ to the next leader
    \State enter the next view, $e\gets e+1$
\EndProcedure
\Function{\textsc{Info}}{$var, \hqc$}\Comment{$var \in \{$`\hsa', `\hsb', `\hsc'$\}$}
    \If{$var = $`\hsa'}
        \State return $\hqc.e$
    \EndIf
    \If{$var = $`\hsb'}
        \State return $\text{Hash}(\hqc)$
    \EndIf
    \If{$var = $`\hsc'}
        \State return $\emptyset$
    \EndIf
\EndFunction

\end{algorithmic}
\end{algorithm*}
\end{document}